\PassOptionsToPackage{table}{xcolor}
\documentclass[sigconf, nonacm]{acmart}

\newcommand\vldbdoi{XX.XX/XXX.XX}
\newcommand\vldbpages{X-X}
\newcommand\vldbvolume{19}
\newcommand\vldbissue{11}
\newcommand\vldbyear{2026}
\newcommand\vldbauthors{\authors}
\newcommand\vldbtitle{\shorttitle} 
\newcommand\vldbavailabilityurl{https://github.com/darroyue/bgps-temporal-graphs}
\newcommand\vldbpagestyle{plain} 

\usepackage{balance}
\usepackage{graphicx}
\usepackage{xcolor}
\usepackage{algorithm2e}
\usepackage{amsmath}
\usepackage{multirow}

\usepackage{amssymb}
\usepackage{enumitem}

\newcommand{\ti}{\mathit{ti}}
\newcommand{\tf}{\mathit{tf}\!}
\newcommand{\ts}{\mathit{ts}}
\newcommand{\te}{\mathit{te}}
\newcommand{\U}{\mathcal U}
\newcommand{\T}{\mathcal T}

\newcommand{\Vu}{\mathcal{V}_\mathrm{u}}
\newcommand{\Vt}{\mathcal{V}_\mathrm{t}}
\newcommand{\no}[1]{}

\newcommand{\Xall}[2]{_{[#1,#2)}}
\newcommand{\Xany}[2]{_{#1}^{#2}}
\newcommand{\Gpt}[1]{\hat{#1}}

\newtheorem{theorem}{Theorem}

\newtheorem{definition}{Definition}

\usepackage{pgfplots}
\pgfplotsset{compat=1.18}
\usepgfplotslibrary{statistics}

\newcommand{\new}[1]{\textcolor{black}{#1}}
\newcommand{\old}[1]{}
\newcommand{\tg}[2]{#2}

\newcommand{\leap}{\mathsf{leap}}

\begin{document}

\title{Worst-Case Optimal BGPs on Temporal Graphs}

% AH:putting back in country macros
% as omitting them generates errors; can find
% space elsewhere hopefully
\author{Diego Arroyuelo}
\affiliation{%
  \institution{Pontificia Universidad Cat{\'o}lica de Chile \& }}
  \affiliation{
  \institution{IMFD, Chile}}
\email{diego.arroyuelo@uc.cl}

\author{Aidan Hogan}
\affiliation{%
  \institution{DCC, Universidad de Chile \&}}
\affiliation{%  
    \institution{IMFD, Chile}}
\email{ahogan@dcc.uchile.cl}

\author{Gonzalo Navarro}
\affiliation{%
  \institution{DCC, Universidad de Chile \&}}
\affiliation{%  
    \institution{IMFD, Chile}}
\email{gnavarro@dcc.uchile.cl}

\author{Juan Reutter}
\affiliation{%
  \institution{Pontificia Universidad Cat{\'o}lica de Chile \&}}
  \affiliation{
  \institution{IMFD, Chile}}
\email{jreutter@ing.puc.cl}

\begin{abstract}
We study how to evaluate basic graph patterns (BGPs) in a worst-case-optimal (wco) manner over {\em temporal} labeled graphs, where edges have an interval of temporal validity. 
We adopt a flexible query language 
in which users specify $m$ quads of the form (subject, property, object, time), using constants or variables. The time component denotes the instant at which a particular edge is valid, and users may also include order relations between temporal constants or variables. The answer is the set of all valid variable assignments, including time.
We describe an index structure that, for a temporal graph with $N$ edges, requires $O(N)$ space and can evaluate extended BGPs in wco time $O(Q^* m \log N)$, where $Q^*$ represents the maximum number of solutions for query $Q$ over any temporal graph with the same number of instants of edge validity. 
We use our index to adapt Leapfrog Triejoin to the temporal graph setting under any variable evaluation ordering.  
Our index further yields wco guarantees for related query types, including snapshot evaluation, version queries, and other temporal variants.
Experiments on real-world datasets show that our approach answers realistic queries in milliseconds with low space overhead.
%Moreover, we show that the same index provides worst-case-optimal guarantees for several related query types, including non-temporal BGPs evaluated on a snapshot or time slice of the temporal graph, queries over graph versions, and other temporal variations. 
%Using a battery of experiments with real-world data we validate that our index structure can be used to answer realistic queries within a few milliseconds in most cases, and with little space overhead.

%We consider extending basic graph patterns (BGPs) to {\em temporal} labeled graphs, where the edges have an interval of temporal validity. 

\end{abstract}

\maketitle

%%% do not modify the following VLDB block %%
%%% VLDB block start %%%

\pagestyle{\vldbpagestyle}
\begingroup\small\noindent\raggedright\textbf{PVLDB Reference Format:}\\
\vldbauthors. \vldbtitle. PVLDB, \vldbvolume(\vldbissue): \vldbpages, \vldbyear.\\
\href{https://doi.org/\vldbdoi}{doi:\vldbdoi}
\endgroup
\begingroup
\renewcommand\thefootnote{}\footnote{\noindent
This work is licensed under the Creative Commons BY-NC-ND 4.0 International License. Visit \url{https://creativecommons.org/licenses/by-nc-nd/4.0/} to view a copy of this license. For any use beyond those covered by this license, obtain permission by emailing \href{mailto:info@vldb.org}{info@vldb.org}. Copyright is held by the owner/author(s). Publication rights licensed to the VLDB Endowment. \\
\raggedright Proceedings of the VLDB Endowment, Vol. \vldbvolume, No. \vldbissue\ %
ISSN 2150-8097. \\
\href{https://doi.org/\vldbdoi}{doi:\vldbdoi} \\
}\addtocounter{footnote}{-1}\endgroup
%%% VLDB block end %%%

%%% do not modify the following VLDB block %%
%%% VLDB block start %%%
\ifdefempty{\vldbavailabilityurl}{}{
\vspace{.3cm}
\begingroup\small\noindent\raggedright\textbf{PVLDB Artifact Availability:}\\
The source code, data, extended version and/or other artifacts have been made available at \url{https://github.com/darroyue/bgps-temporal-graphs} %\url{https://anonymous.4open.science/r/bgps-temporal-graphs-561D}.
\endgroup
}
%%% VLDB block end %%%

\section{Introduction}

A significant advance for solving Basic Graph Patterns (BGPs) was the discovery of the AGM bound \cite{AGM13} and worst-case-optimal (wco) joins. A graph database, in the simplest RDF-like format, is \new{a labeled graph, where the edges $s \stackrel{p}{\to} o$ are} seen as a set of triples $(s,p,o)$ for subject, predicate, and object. A BGP is a set of triple patterns $(s,p,o)$, where each component can be a constant or a variable. Solving a BGP returns all assignments of values to variables so that the triple patterns, when instantiated to triples, appear in the graph. A wco algorithm takes time proportional to the size of the output of the query on some graph, such as Leapfrog TrieJoin (LTJ) \cite{leapfrog}, which binds---i.e., finds all the possible values of---one variable, and for each value assigned to it, recurses on the remaining variables. 

Herein we consider {\em temporal graphs} with quads $(s,p,o,[\ti,\tf))$ denoting that %\old{triple $(s,p,o)$ in the graph} 
\new{the graph edge $s \stackrel{p}{\to} o$} was valid from time $\ti$ (inclusive) to time $\tf$ (exclusive). Temporal graphs give relations a temporal range of validity (e.g., when a researcher was affiliated with a university), or a temporal instant (e.g., when a conference took place), and support querying complex relations (i.e., subgraphs, paths) with temporal ranges and instants (e.g., which researchers were affiliated with a university when a given conference was held).  \new{Temporal graph querying arises naturally in real-world systems where relations evolve over time. Examples include event modeling~\cite{eventkg}, cybersecurity systems~\cite{cyber}, and AI applications~\cite{aitg1,aitg2}.}% These settings require efficient evaluation of complex graph patterns together with temporal constraints.}
%\textcolor{red}{ahora se ve medio raro el ejemplo acerca de India :-)}

We aim at solving BGPs on temporal graphs. 
The simplest form of such BGP queries is the \emph{return all time points} query, which looks to find all answers for the BGP that were valid at some point in time, together with the time where each of these answers exists in the database.  
A folklore way to solve this problem is a so-called {\em join-first} strategy: we first compute the answers of the BGP disregarding the time information, and then, for each solution to the BGP, we use the time information of the concrete edges to determine the points in time this solution is valid, if any. \new{An advantage of join-first is that one is free to solve the BGP with any wco strategy \cite{leapfrog,NPRR12,tutorialngo}, beyond-wco ones \cite{emptyheaded,geometric,abo2017shannon} and even combinations of wco and  non-wco algorithms \cite{graphflow,umbra,freejoin,adopt}. Alternatively,} Hu et al.~\cite{HSGAY22} introduce a so-called {\em time-first} strategy, which sweeps over the time component and recomputes the BGP \new{answer} each time an event changes the set of edges present in the graph. Though they show good theoretical results for restricted kinds of BGPs, in the general case the whole BGP \new{answer} is recomputed for every event. 

We argue that these solutions suffer from two main limitations. First, on the usability side, the \emph{return all time points} query may lack expressiveness in several applications. For example, the latest SQL 2011 standard recognizes the importance of returning answers that mix temporal constraints (e.g., joining two triples on the condition that one precedes a certain timestamp and the other comes after it) \cite{sql2011temporal}. 
Second, \new{while join-first strategies can handle such queries, they generate all solutions without considering time and only then filter by time, which is inefficient if the time filters are selective. Regarding LTJ, in particular,} both time- and join-first algorithms translate to fixing particular variable binding orderings, where it is well known that choosing a good variable binding ordering is crucial for efficient query resolution, even if all orderings are theoretically optimal \cite{leapfrog}. Hence, systems processing temporal graph queries would benefit from an \new {approach that can effectively integrate time constraints within the core of a wco (or beyond-wco!) algorithm.}

\begin{example} \label{ex:graph}
\new{As a motivating example, consider a graph describing the bus trips over a year in a big city, with quads $(a,\textsf{trip},b,[\ti,\tf))$ indicating that person $a$ was on bus $b$ during time interval $[\ti,\tf)$. Figure~\ref{fig:trips} shows a toy example (all the edges are labeled \textsf{trip}). A query looking for pairs of people that were together in a specific bus ``\textsf{B12}'' could ask to return all time points for the BGP $\{ (x,\textsf{trip},\textsf{B12}), (y,\textsf{trip},\textsf{B12}) \}$. A join-first strategy would find all the pairs of people $(x,y)$ that had ever traveled in that bus, and only then filter who did so at the same time. If $m$ people take that bus at some time, each taking it $v$ times during the year, join-first involves 
%an intermediate output size of $\Omega(m^2)$, and likely 
a time complexity of $O(m^2 v)$. A time-first approach would, instead, consider each of the $t$ time instants of the year (at a granularity of, say, minutes) and, for each of those, get the pairs of $b$ people on that bus at that time, for a total time complexity of %$\Omega(t)$, likely 
$O(tb^2)$. %Yes: \ah{Wouldn't it be $O(tb^2)$ since we could need to write pairs of people for each time instance?} 
A better strategy would be to take each of the $m$ persons $x$ that take that bus, for each of them the $v$ times they took it (this should be once or twice a day), and then find the other $b$ persons on that same bus at that time, for a total time complexity of $O(mvb)$, which should be close to the size of the final output and thus optimal. It is further expected that $b \ll m$ and $mv \ll t$ (since $mv$ is the total number of times bus \textsf{B12} was taken, a fraction of the universe).}

\new{As an example of the first limitation mentioned above, the time-first strategy would have trouble finding pairs of people that were together in the bus for over an hour, or people that took the bus \textsf{B12} before noon and then the bus \textsf{B6} in the afternoon (those are beyond a simple {\em return all time points} query). \qed}
\end{example}

\begin{figure}[t]
\includegraphics[width=0.4\textwidth]{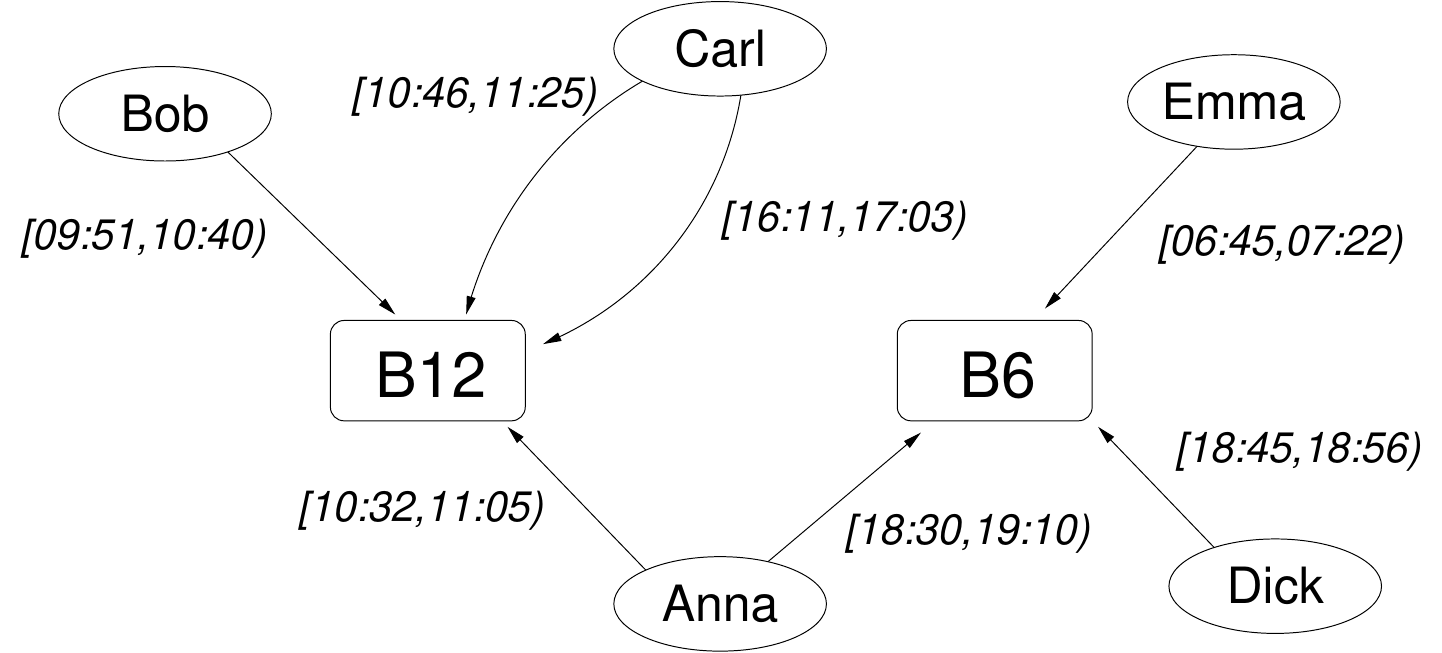}
%\vspace*{-5pt}
\caption{\new{A toy temporal graph about trips on buses. For simplicity the timestamps show only hours during some day.}}
\label{fig:trips}
\end{figure}

%\ah{Me parece muy claro el ejemplo. Puse un código B12 para la instancia de la clase bus, para distinguirlo de la clase. No eetendí a qué se hace referencia la second limitation? \textit{Quizás} valga destacar que intuitivamente, en este escenario, $b \ll m$ (el bus tiene capacidad limitada) y $v \ll t$ (los pasajeros, en general, están en el bus una fracción del tiempo).}

We introduce a linear-space temporal graph representation that overcomes both limitations. Our index structure stores temporal graphs in terms of intervals defined by their updates, using an improvement of versioned binary trees that avoids a space blowup by taking advantage of compressed representations. Using these structures, we obtain the following contributions. 

(1) Our structure enables retrieving all time points of any given BGP (via LTJ), allowing any possible variable binding order, with (at least) the same complexity guarantees that time-first or join-first algorithms offer. This already improves the current state of the art with respect to worst-case optimal joins over temporal graphs. 

(2) We show that we can answer much more general queries. \new{More precisely, we introduce \emph{temporal} BGPs (tBGPs), which are a temporal analog to BGPs, and show how our data structure can process any tBGP query. This language allows time-aware pattern matching, retreiving different time intervals for different parts of the query, and applying filters to restrict these intervals.  }
%These queries allow for retrieving different time intervals for different parts of the query, and to apply filters restricting these intervals. 

(3) We can answer point-in-time or interval queries that return all answers valid in a given timestamp or time period with stronger guarantees: we do it in wco time with respect to the graph restricted to that point or interval rather than the complete graph. This provides a further speedup for point-in-time or interval queries, which are key primitives of the versioning functionalities of the SQL 2011 standard and several graph database proposals~\cite{tsparql,glenda,temporalalgebragraph,versionedaidan}.
%\end{itemize}

%Juan: mejor ponerlo como (4) o no?
(4) We provide a prototype implementation and show that our improvements translate to practice, solving realistic temporal queries within milliseconds while using reasonable index space\new{. It also clearly outperforms both join-first and time-first strategies, as well as two prominent solutions from previous work \cite{HSGAY22,ZhuFY21}}.

\section{Related Work}

Temporal query languages have been extensively studied in relational and graph databases, leading to temporal extensions of SQL:2011~\cite{sql2011temporal} and to proposals for RDF and graph query languages such as T-SPARQL~\cite{tsparql}, stSPARQL~\cite{stSPARQL}, and SPARQL-ST~\cite{sparqlst}. Systems such as GLENDA~\cite{glenda} and RDF archives~\cite{temporalontology, versionedaidan} further support querying historical graph snapshots and tracking data evolution over time. Our work is complementary, focusing on efficient query evaluation rather than models or languages.

%Temporal query languages originated from research in relational databases. Early efforts such as TQuel \cite{tquel} and TSQL2 \cite{tsql2} introduced time constructs such as \texttt{OVERLAPS} and  \texttt{DURING} to retrieve data as it existed at a specific time point or interval. 
%These ideas led to the inclusion of temporal features in the SQL:2011 standard \cite{sql2011temporal}, and inspired  research for temporal query languages for RDF or graph databases. In terms of temporal graph languages, T-SPARQL \cite{tsparql} looks to adapt TSQL2 into SPARQL, incorporating similar temporal predicates and interval comparisons. More recent proposals include stSPARQL \cite{stSPARQL} and SPARQL-ST \cite{sparqlst}, which also support spatial reasoning, and further work on temporal path queries \cite{arenas2022temporal}. We also highlight work on RDF querying systems such as GLENDA~\cite{glenda} and RDF archives~\cite{temporalontology, versionedaidan}, which are more geared towards retrieving historical snapshots and tracking versions over time. 

Despite advances in temporal languages and data models, the body of work in algorithms supporting these query extensions is much thinner. Berberich et al. focus on temporal extensions to document databases (see e.g. \cite{timetext1,timetext2}), but these algorithms cannot be directly applied for graph patterns.  Work on temporal constructs or temporal algebras, for relations or graphs, can lead directly to algorithms~\cite{statementmodifiers,temporalalgebrarelational,temporalalgebragraph}, \new{and there is also body of work on extending traditional relational indexes to support temporal constructs such as duration or range (see e.g. \cite{kriegel2000managing,ceccarello2023indexing}),}
but all of these are based on traditional techniques and not wco joins. %In addition, all of these contributions emphasize the need for tailored evaluation strategies considering both the temporal constructs and the structural complexity of the data. 
Relatedly, Khamis et al.~\cite{abo2022complexity} study the complexity of Boolean conjunctive queries with \emph{intersection joins}, which conceptually align with temporal constraints that intersect time domains. Their techniques can be adapted  to temporal joins, typically incurring an additional logarithmic indexing factor, while our work targets general temporal graph patterns in linear space. Finally, \new{Khurana and Deshpande~\cite{khurana2016efficient} and Hou et al.~\cite{hou2025efficient} propose fully-functional systems with support for evolving databases by storing over-time differences. However, their query engines are designed to exploit other components of relational systems such as data partitioning or cache optimization (under the assumption that  temporal queries in practice often focus on recent data), which cannot be directly applied to wco strategies. } 

%\new{Several works study algorithmic support for querying evolving graphs through snapshot-based systems~\cite{khurana2016efficient}, temporal indexing techniques based on interval-oriented indexing structures~\cite{kriegel2000managing,ceccarello2023indexing}, and dynamic graph query processing~\cite{hou2025efficient} based on over-time differences. These approaches typically focus on accelerating specific classes of temporal queries, such as historical snapshot retrieval, temporal reachability, or interval-overlap queries, maintaining query answers under updates, or organizing temporal edges for efficient historical access. Our work is complementary: we study worst-case-optimal evaluation algorithms for arbitrary temporal graph patterns under flexible temporal constraints.}

Recent work on algorithms for temporal graph query processing focuses on evaluation under specific data-time constraints, 
such as retrieving BGPs that occur at the same time within an interval~\cite{ZhuFY21}, answers that exist for a long period of time~\cite{paranjape2017motifs,li2022durable}, for the longest period~\cite{semertzidis2018top}, or answers where edges respect time constraints, in the sense that every match for $(s,p,o)$ must satisfy that $s$ exists before $o$~\cite{gao2020time,huan2025tematch}.  
%
%An example is to retrieve those answers that exist for a long period of time, called \emph{durable} ~\cite{paranjape2017motifs,li2022durable}, or even to look for answers that exist for the longest period~\cite{semertzidis2018top}.
%Another example is to look for answers where edges respect time constraints, in the sense that every match for a triple $(s,p,o)$ must satisfy that $s$ exists before $o$~\cite{gao2020time,huan2025tematch}. 
Yet, none of these algorithms can be applied directly in our case because they are designed for specific time constraints. In contrast, our work focuses on answering arbitrary temporal graph patterns, which is a more general problem: all of these specific time constraints can be expressed as temporal graph patterns, whose expressive power goes well beyond these constraints. 

Our work is closest to that of Hu et al.~\cite{HSGAY22}, addressing join-first and time-first strategies for wco joins over temporal graphs. In comparison, our approach handles any variable binding ordering, more general queries, and interval or point queries with better guarantees. Their algorithm sweeps over the time domain, which takes \new{$\Theta(T)$ time over $T$ time instants}. This corresponds, in our work, to a time-first strategy. Though we support all variable binding plans (including time-first), since Hu et al.\ first sweep over time and then process the query in two independent steps, they can use more sophisticated algorithms for the (non-temporal) query part, such as EmptyHeaded \cite{emptyheaded}, which uses the Generalized Hypertree Decomposition (GHD) of queries. Moreover, as they show in their paper, their idea of only obtaining the solutions where the disappearing edge participates yields better results on hierarchical queries. Finally, their  paper also proposes a cross between time-first and join-first combined with GHD-based algorithms, which again may work better or worse, depending on the instance, than our query plans. We discuss in the Conclusions how our algorithms can be adapted to work under GHD-like time guarantees as well.

\section{Model}

%\begin{figure*}[t]
%\includegraphics[width=\textwidth]{tgraph.pdf}
%\caption{A temporal graph $G$, its point-based representation $\hat{G}$, and two snapshots at times $t=1$ and $t=3$. Solid and dashed arrows are labeled with two different predicates: \textsf{follows} and \textsf{subscriber}, respectively.}
%\label{fig:tgraph}
%\end{figure*}

Here we describe temporal graphs, queries, and their AGM bound. 

\subsection{Temporal graphs and queries}
As is usual in the literature, we regard a temporal graph $G$ as a set of labeled edges $s \stackrel{p}{\rightarrow} o$, or triples $(s,p,o)$, 
that exist during a given time interval $[\ti,\tf\,)$ (see, e.g., \cite{tsparql,sparqlst}). 

\begin{definition}
Let $\U$ be \new{any} set of values and $(\T,\leq)$ a totally ordered {\em time domain}, disjoint from $\U$. 
A {\em temporal graph $G$} is a set of $N := |G|$ tuples $(s,p,o,[\ti,\tf\,))$, 
where $s,p,o$ are elements from $\U$, and 
$\ti,\tf$ are elements from $\T$, with 
$\ti < \tf$, so that $[\ti,\tf\,)$ represents an interval in the time domain.
We also define, respectively \\
$\U_G := \{s,p,o \mid \exists \ti,\tf,\, (s,p,o,[\ti,\tf\,)) \in G\}$ and \\
$\T_G := \{ \ti,\tf \mid \exists s,p,o,\, (s,p,o,[\ti,\tf\,)) \in G\}$ \\
as the set of elements and time instants mentioned in tuples of $G$. %\textcolor{red}{ojo, introduje $\U_G$ y $\T_G$}
\end{definition}

A tuple $(s,p,o,[\ti,\tf\,))$ represents that the triple $(s,p,o)$ is valid in any time $t$ such that $\ti \leq t < \tf$. We assume that temporal graphs do not have redundant tuples: if $G$ has 
tuples $(s,p,o,[\ti,\tf\,))$ and $(s,p,o,[\ti\,',\tf\,'))$, then \new{$[\ti,\tf)$ and $[\ti\,',\tf\,')$ must be disjoint}. 
This enforces a unique way to represent temporal graphs. Furthermore, since $\U \cap \T = \emptyset$, 
node identifiers and labels cannot be time instants.

%Figure~\ref{fig:tgraph} (left) shows an example temporal graph $G$ with $\U_G = \{1,\ldots,10\}$ and $\T_G = \{1,\ldots,4\}$. It describes the evolution of a social network where nodes $1$ to $7$ are persons and node $8$ is a group. %for legibility we name the labels $\mathsf{subscriber}=9$ and $\mathsf{follows}=10$. 
%The graph contains edges 
%%$x \stackrel{\mathsf{follows}}{\longrightarrow} y$ among persons 
%$\mathsf{follows}$
%(solid arrows) and edges 
%%$8 \stackrel{\mathsf{subscriber}}{\longrightarrow} x$ 
%%from a  group to persons 
%$\mathsf{subscriber}$ (dashed arrows). Arrows show their interval of validity $[\ti,\tf)$. Ignore for now the second graph, $\hat{G}$. To its right, we show the snapshots $G_1$ and $G_3$ of the valid edges at instants $1$ and $3$.

We focus on graph patterns, which are the foundation of most query languages used in industry, like SPARQL, CYPHER and GQL \cite{survey,GQL}. \new{Graph patterns are sets of tuples with constants and variables, which must be found in $G$.} To support temporal operations, we add temporal variables to patterns, and temporal filters. 

\begin{definition} \label{def:bgp-seg}
Let $\Vu$, $\Vt$ be two disjoint sets of variable symbols, additionally disjoint from $\U$ and $\T$. 
%In the segregated model, 
A {\em temporal basic graph pattern (tBGP)} is a set of tuple patterns $(x,y,z,w)$, where each $x$, $y$, and $z$ are in $\U_G \cup \Vu$, and $w$ is in $\T_G \cup \Vt$, plus optional {\em comparison clauses} of the form $w_1 \le w_2$, where $w_1,w_2 \in \T_G \cup \Vt$. 
\end{definition}

We note, in particular, that tBGPs can only mention time instants that exist in $\T_G$. This is not really a restriction since the edges of $G$ at a time $t \in \T$ are exactly the same as those at time $t'$, where $t'$ is the predecessor of $t$ in $\T_G$. Queries can then be translated into valid tBGPs: clauses $w_1 \le w_2$ stay valid after mapping values of $\T$ to their predecessors in $\T_G$. We also assume that those clauses are satisfiable, and do not form cycles (like $w_1 \le w_2 \le w_1$, as those are expressed by collapsing the variables in the cycle into one).

\begin{example}
\new{In Figure~\ref{fig:trips}, $\U_G = \{ \textsf{Anna}, \textsf{Bob}, \textsf{Carl}, \textsf{Dick}, \textsf{Emma}, \textsf{B6},\\ \textsf{B12}, \textsf{trip}\}$ and $\T_G {=} \{ \textsf{06:45}, \textsf{07:22}, \textsf{09:51}, \textsf{10:32}, \textsf{10:40}, \textsf{10:46}, \textsf{11:05}, \textsf{11:25}, \\ \textsf{16:11}, \textsf{17:03}, \textsf{18:30}, \textsf{19:45}, \textsf{18:56}, \textsf{19:10} \}$. We cannot write a pattern $(x,\textsf{trip},\textsf{B12},\textsf{10:50})$ because \textsf{10:50} is not in $\T_G$, but the answer is the same as for $(x,\textsf{trip},\textsf{B12},\textsf{10:46})$, where $x$ can be \textsf{Anna} or \textsf{Carl}. Only at \textsf{11:05} the answer changes: \textsf{Anna} is no longer a solution. \qed}

\end{example}
The semantics of tBGPs is given by assignments, called \emph{solutions}\new{: values given to the tBGP variables so that all tuples occur in $G$}. 

\begin{definition}
A {\em solution} to a tBGP with variables $\Vu$ and $\Vt$ % strictly speaking, we need to limit the variables to only those used in the tBGP; otherwise the number of solutions is infinite
over a temporal graph $G$ is an assignment \new{$f : \Vu \rightarrow
\U_G$ and $f : \Vt \rightarrow \T_G$.}
%, such that $f(u) = u$ if $u \in \U_G$ and $f(t) = t$ if $t \in \T_G$.
Furthermore, \new{assuming $f(c)=c$ for $c\in \U_G\cup\T_G$,} for each tuple pattern $(x,y,z,w)$ in the tBGP, there must exist a tuple
$(f(x),f(y),f(z),[\ti,\tf\,)) \in G$ where $\ti \le f(w) < \tf$. 
Finally, for each clause $w_1 \le w_2$ it must hold $f(w_1) \le f(w_2)$.
\end{definition}

Note that the time component of our solutions must also belong to $\T_G$. This is a form of compacting the (possibly infinite) set of all the solutions in $\T$: a solution with time $t \in \T_G$ stays valid for all the times $t''\! \in \T$ such that $t \le t''\! < t'$, where $t'$ follows $t$ in $\T_G$.

\begin{example}
\new{A tBGP for the first query of Example~\ref{ex:graph} is $Q = \{ (x,\textsf{trip},$ $\textsf{B12},t), (y,\textsf{trip},\textsf{B12},t) \}$, with $\Vu= \{x,y\}$ and $\Vt=\{t\}$. Two solutions are $\langle f(x)=\textsf{Anna},f(y)=\textsf{Bob},f(t)=\textsf{10:32}\rangle$ and $\langle f(x)=\textsf{Anna},f(y)=\textsf{Carl},f(t)=\textsf{10:46}\rangle$. The first is valid until \textsf{10:40} and the second until \textsf{11:05} (not inclusive). A tBGP for the last query is $Q= \{ (x,\textsf{trip},\textsf{B12},t_1), (x,\textsf{trip},\textsf{B6},t_2),$ $t_1 \le \textsf{12:00}, \textsf{12:00} \le t_2 \}$. A solution is $\langle f(x)=\textsf{Anna}, f(t_1)=\textsf{10:12}, f(t_2)=\textsf{18:30}\rangle$. \qed}    
\end{example}

%We investigate the problem of solving a tBGP, defined as follows.

\begin{definition}
Given a temporal graph $G$, the problem of {\em solving} a tBGP $Q$ is that of computing the set $Q(G)$ of all the solutions to $Q$.   
\end{definition}

% \marginpar{\tiny \old{Eliminé cosas aquí, las pueden ver comentadas}}
%Intuitively, each solution $f \in Q(G)$ represents an assignment that maps each tuple pattern $(x,y,z,w)$ in the tBGP to a triple $(f(x),f(y),f(z))$ in $G$ such that this triple is valid in time $f(w)$. 
%For example, 
%%for some $t \in \T_G$, the query $(x,y,z,t)$ returns all triples $(s,p,o)$ that are valid in time $t$. For 
%for $t_1,t_2$ in $\T_G$, query $(x,y,z,w), t_1 \le w, w \le t_2$ returns all triples $(s,p,o)$ that are valid in any time point between $t_1$ and $t_2$. 
%In this case, the answers are given as a set $(s,p,o,t'_1)$, $(s,p,o,t'_2)$, $\dots,$ $(s,p,o,t'_\ell)$, where $t'_1,\dots, t'_\ell \in \T_G$ are all time points between $t_1$ and $t_2$ in which $(s,p,o)$ exists. One can always pack the answers into time intervals if required. %by the user 

%In Figure~\ref{fig:tgraph}, a tBGP $Q = \{ (x,\mathsf{follows},z,w),$ $(y,\mathsf{follows},z,w),$ $(8,\mathsf{subscriber},z,w)\}$ retrieves pairs $(x,y)$ of followers of the same person $z$ that is subscribed to group $8$, all happening at the same time $w$. Two answers from $Q(G)$ are $\langle x=3, y=6, z=4, w=2\rangle$ and $\langle x=3,y=6,z=4,w=3\rangle$. In packed form, they can be written as $\langle x=3, y=6, z=4, w \in [2,4)\rangle$.

The selection of tBGPs as our language is justified as the key primitive of temporal relational calculus~\cite{croker1989completeness}, the algebraic building block of most temporal database languages (see e.g. \cite{tquel,tsparql,sparqlst}). Hence, tBPGs are a fundamental querying functionality of temporal database languages per BGPs for (non-temporal) graph query languages. We compare tBGPs to other query languages in Section~\ref{sec:expressive}.

\subsection{Temporal worst-case optimal algorithms}

The AGM bound refers to cardinalities of relations. To extend the concept to temporal graphs, we use the following notion.
%: the AGM bound of a query $Q$ over a graph is defined as the maximum size of the answer of the evaluation of $Q$ over any graph whose point-wise representation has the same number of tuples.  

\begin{definition} \label{def:Gpoint}
The {\em point-based representation} of $G$ is   
$$\Gpt{G} ~=~ \{(s,p,o,t) \mid \exists \ti,\tf,~(s,p,o,[\ti,\tf\,)) \in G,~t \in [\ti,\tf\,) \cap \T_G \}.$$
\end{definition}

\begin{figure}[t]
    \centering
    \includegraphics[width=0.35\textwidth]{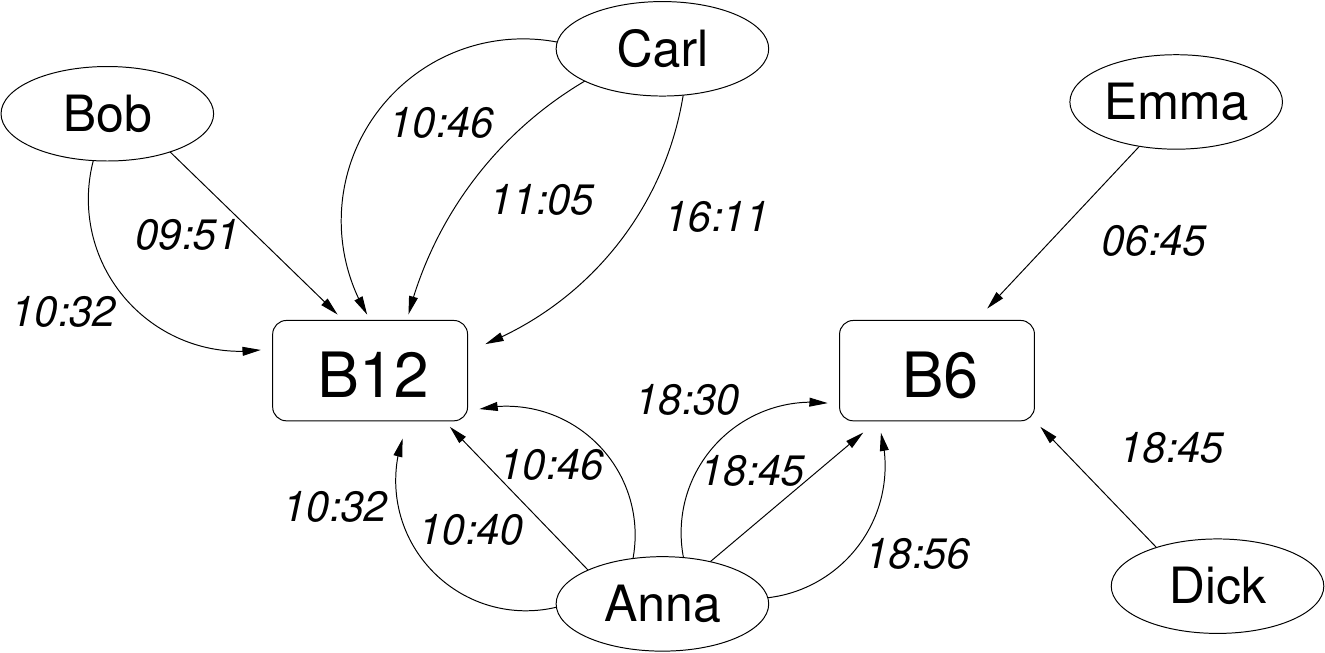}
    %\vspace*{-5pt}
    \caption{\new{Point-based representation of the graph of Figure~\ref{fig:trips}.}} %\ah{sugiero revisar 10:32 de Bob. Anna debería tener 19:09 para marcar fin?}\textcolor{red}{Arreglado Bob. Anna: no, 19:09 no existe, existe su predecesor, 18:56.}}}
    \label{fig:tripshat}
\end{figure}

The times $t$ present in $\Gpt{G}$ are only those constants $\ti, \tf \in \T$ that appear as interval extremes in $G$ (possibly on another triple); see Figure~\ref{fig:tripshat}. As explained, we need not consider other times in the domain where $G$ does not change. As a result, we have the bound $|\Gpt{G}| \le |G| \cdot |\T_G| \in O(N^2)$, which is tight when the edges are valid for long time intervals and many different time instants exist in $G$. %\textcolor{red}{Gonzalo: lo traje aquí, justo después de su def. Se necesita ya que pronto se habla de cuadrático vs lineal.}

\new{We define the AGM bound of query $Q$ over temporal graph $G$ as the maximum number of solutions of $Q$ over any temporal graph $G'$ whose point-wise representation has as many tuples as $\hat{G}$.}

\begin{definition}
The AGM bound $Q^*$ of a tBGP $Q$ for a temporal graph $G$ is $Q^* := \max \{ |Q(G')|, |\Gpt{G'}| {\le} |\Gpt{G}| \}$, over temporal graphs $G'$. An algorithm computing $Q(G)$ is {\em worst-case optimal \new{(wco)}} if it takes time $O(Q^*)$, possibly with data-agnostic and polylog factors of $|G|$.
\end{definition}

%we introduce the ``point graph'' $\Gpt{G}$ of $G$, which represents the same time validity intervals of $G$ with edges that refer to individual time {\em instants} (not ranges).

%\begin{definition} \label{def:Gpoint}
%Given a temporal graph $G$, we define 
%$$\Gpt{G} = \{ (s,p,o,t) \mid \exists \ti,\tf,~(s,p,o,[\ti,\tf\,]) \in G,~t \in [\ti,\tf\,] \cap \T \}.$$
%\end{definition}

%\textcolor{red}{Nosotros podemos lograr ser wco, creo, si hacemos join-first, porque no se nos cortan los intervalos. Pero tenemos que demostrar que LTJ en las otras variables y luego intersecciones para cada variable de tiempo, es wco en esta interpretación. Podremos decir que, sin ser wco, otros órdenes pueden darnos mejores tiempos aunque piquen el intervalo. Es una mala idea presentarlo así?}

The following theorem then states our main result. %, which holds for both models.

\begin{theorem} \label{thm:main}
Let $G$ be a temporal graph with $N$ tuples. Then, there exists a data
structure using $O(N)$ space that can compute $Q(G)$ for every tBGP $Q$ with $m$ tuples in \new{wco} time, $O(Q^* m \log N)$.
\end{theorem}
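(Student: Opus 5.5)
The plan is to reduce to the classical result for Leapfrog Triejoin: LTJ run on the quad relation $\Gpt{G}$, treated as a 4-ary relation over $\U\cup\T$, is wco provided that every trie level supports $\leap$. Here $\leap(c)$ returns the smallest value $\ge c$ among those compatible with the current partial binding. If each such call costs $O(\log N)$, the standard LTJ analysis of Veldhuizen and Ngo et al.\ bounds the total work by $O(Q^* m \log N)$. The only caveat is that the AGM bound must be taken with respect to $|\Gpt{G}|$. That is exactly how $Q^*$ is defined: every tuple pattern is a relation of size at most $|\Gpt{G}|$. The difficulty is that $\Gpt{G}$ can have $\Theta(N^2)$ tuples. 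We therefore need an $O(N)$-space structure that simulates trie-leap over $\Gpt{G}$ under \emph{every} order of the four positions $s,p,o,t$, so that any variable ordering of LTJ is supported.

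The first step handles the attribute orders in which $t$ comes \emph{last or is already bound}. Here we store $G$ in orientation tries, for example one compact trie per relevant permutation of $(s,p,o)$, in $O(N)$ space. At the leaves, the intervals of each triple are kept sorted. Because they are disjoint, leaping on $t$ for a fixed triple is a predecessor/successor search among at most the intervals of that triple, which takes $O(\log N)$. The remaining case is $t$ bound before some of $s,p,o$. For this we need, given a time $t$ and a prefix (say $s$ bound), the next value $\ge c$ of $p$ among the triples valid at $t$. I would use a persistent (versioned) balanced search tree. We sweep $\T_G$ in order, and each interval contributes one insertion at $\ti$ and one deletion at $\tf$, so there are $O(N)$ updates. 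With path copying and the node-copying technique of Driscoll et al., the structure uses $O(N)$ space and answers each query on version $t$ in $O(\log N)$. One such persistent tree is kept per attribute order of $(s,p,o)$, keyed lexicographically. A leap for the next attribute is then a successor search for the key $(\text{prefix},c)$ in version $t$, and it costs $O(\log N)$. Orders in which $t$ sits in the middle, e.g.\ $s,t,p,o$, are handled the same way: after binding $s$, leaping on $t$ asks for the smallest $t'\ge c$ such that some triple with prefix $s$ is valid at $t'$. To answer this, I would store, per prefix node, the sorted list of start points and use the fact that the union of that prefix's intervals is a sorted set of disjoint maximal runs. This union can be precomputed with total size bounded by the number of intervals below the node. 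To keep the total within $O(N)$ rather than $O(N\cdot\text{depth})$, the precomputation is done only for constant depth ($\le 3$ levels).

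Second, comparison clauses $w_1\le w_2$ are incorporated by restricting leaps. When a time variable is bound, each comparison with an already-bound endpoint turns into a lower or upper bound on the range of $c$. The lower bound starts the leap, and the upper bound terminates iteration. Thus the leap is never called more often than under LTJ for the query without the clauses. Since adding clauses can only shrink $Q^*$'s relevant sets, the amortization still needs care: I would argue per tuple pattern that the number of failed leaps is charged to the successful ones as in standard LTJ.

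The main obstacle is the linear-space persistence across all attribute orders, especially the middle-$t$ orders, while still getting $O(\log N)$ leaps. Naive path copying gives $O(N\log N)$ space, so I would first try node copying with bounded in-degree (amortized $O(1)$ extra per update). Failing that, I would fall back on a compressed representation (wavelet-tree-like) of the event sequence to reach $O(N)$ words.
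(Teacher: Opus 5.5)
Your proposal is correct, but it reaches linear space by a different route than the paper.

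\textbf{The paper's route.} It keeps ordinary tries above the time level. At each time-level node it stores the elementary intervals $[\ts_i,\te_i)$ cut by all interval endpoints in that node's subtree. Below that it keeps a per-node sequence of versioned binary tries over the concatenated remaining attributes. Path copying gives $O(N\log N)$ space. The paper then reduces this to $O(N)$ by laying out all versions of a node as a wavelet-matrix-like structure: bitvectors $B_d$, plus extra bitvectors $E_d$ recording whether a node is nonempty at each local timestamp. An update then costs $O(\log N)$ bits, i.e.\ $O(1)$ words, and versioned navigation is done with $rank$.

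\textbf{Your route.} You use generic partial persistence: one persistent balanced tree per permutation of $(s,p,o)$, with global versions indexed by time. The levels after $t$ then become lexicographic successor queries in version $t$. You add per-prefix unions of intervals for the time level, and sorted intervals at the leaves for $t$-last orders.

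Your ``main obstacle'' is in fact settled by a classical result. Sarnak and Tarjan's persistent red-black trees (node copying, with colors kept only for the current version) use $O(1)$ amortized space per update and support search in any version in $O(\log n)$ time. So no wavelet-tree fallback is needed; note that path copying by itself would only give the paper's intermediate $O(N\log N)$ bound.

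\textbf{What each buys.} Your route gives a short proof from a standard black box. The paper's route gives a pointer-free, bit-compact structure with small constants, and two further features:
\begin{itemize}
\item Its time-level intervals are exactly the ranges on which the subtrie below is constant, so time variables can be bound to whole intervals (Algorithm~\ref{alg:time-leapfrog}).
\item Its $E_d$ bitvectors decide ``nonempty at some/every instant of $[t_1,t_2)$'' with one $rank$. The paper reuses this for Theorem~\ref{thm:GtGanyt1t2-intro} and the first part of Theorem~\ref{thm:Gallt1t2duration-intro}.
\end{itemize}
A persistent tree accessed one version at a time gives neither of these directly.

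\textbf{Comparison clauses.} The paper's justification is cleaner than your charging sketch. Restricting the leap range of $w_2$ to $[t_1,T)$ is exactly what LTJ does when the materialized relation $\{(w_1,w_2) : w_1 \le w_2\}$ is added to the join. The standard LTJ analysis therefore applies unchanged.
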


%\textcolor{red}{En términos de wco, vale la pena todo esto? No habrá siempre un grafo $G'$ que me obligue a picar todo el resultado? Podría incorporar optimalidad de otro modo, exigiendo que sea wco c/r a la parte no temporal, y óptima c/r a los intervalos? Vale la pena, dado que después me obligará a decir que sólo join-first es óptimo?}

The wco algorithm in Theorem~\ref{thm:main} can be easily obtained by running a wco algorithm directly on $\Gpt{G}$, interpreting the tuples of the tBGP $Q$ as quad-patterns
%and running any standard wco algorithm for the resulting BGP on $\Gpt{G}$ 
(let us disregard the time clauses for simplicity in this discussion). The disadvantage is, of course, that $\Gpt{G}$ is a bloated representation of $G$, which can require quadratic space. We could, alternatively, build $\Gpt{G}$ from $G$ and index it on the fly at query time, but this translates the $O(N^2)$ factor to the query time and working space. This is unacceptable in most practical cases. %  where the output is smaller than the database, + bien que N^2.

Our representation of Theorem~\ref{thm:main} actually {\em simulates} the use of the LTJ algorithm \cite{leapfrog} on $\Gpt{G}$, but uses $G$ in native form within $O(N)$ space, without converting it to $\Gpt{G}$. %, and therefore without any of the space or time penalizations we discussed.
Our algorithm is not only wco, but can also simulate LTJ with any desired variable elimination order (VEO), exactly as if run on $\Gpt{G}$. Choosing specific VEOs is key for practical performance even if all VEOs are wco in theory \cite{leapfrog}.

\subsection{Point \& interval queries: better guarantees}

We obtain even stronger bounds for point-in-time and point-in-interval queries, which ask for all answers of a BGP valid at a given time point $t$ or at some point in the interval $[\ti,\tf\,)$. We can also query for answers valid along a whole time interval. To state these results, we define some (non-temporal) labeled graphs, derived from slicing a temporal graph across the time domain. We use $|G|$ to denote the number of triples in a non-temporal graph. 

\begin{definition} \label{def:otherG}
For any $t \in \T_G$, $G_t$ is the set of triples
$(s,p,o)$ such that there exists a tuple $(s,p,o,[\ti,\tf\,))$ in $G$ with
$\ti \le t < \tf$. For any $t_1 < t_2$ in $\T_G$, $G\Xall{t_1}{t_2}$ is the
set of triples $(s,p,o)$ such that there exists a tuple $(s,p,o)$ in
$G_{t}$ for every $t \in \T_G$ with $t_1 \le t < t_2$, and \new{$G\Xany{t_1}{t_2}$} is the
set of triples $(s,p,o)$ such that there exists a tuple $(s,p,o)$ in
$G_{t}$ for some $t \in \T_G$ with $t_1 \le t < t_2$.
\end{definition}

\begin{example}
\new{Let $G$ be the graph of Figure~\ref{fig:trips}. Then $G_{\textsf{10:32}}$ contains the edges $(\textsf{Anna},\textsf{trip},\textsf{B12})$ and $(\textsf{Bob},\textsf{trip},\textsf{B12})$, $G\Xall{\textsf{10:32}}{\textsf{10:46}}$ contains only the edge $(\textsf{Anna},\textsf{trip},\textsf{B12})$, and $G\Xany{\textsf{10:32}}{\textsf{10:46}}$ contains the edges $(\textsf{Anna},\textsf{trip},\textsf{B12})$, $(\textsf{Bob},\textsf{trip},\textsf{B12})$, and $(\textsf{Carl},\textsf{trip},\textsf{B12})$. \qed}    
\end{example}

A point-in-time query amounts to solving classic BGP pattern matching on graph $G_t$, and a point-in-interval query requires matching BGPs on $G\Xany{t_1}{t_2}$, with $t,t_1,t_2$ given at query time. 
Our data structure can evaluate these queries in wco time with respect to the sizes of $G_t$ or $G\Xany{t_1}{t_2}$, i.e., the best one can hope for: the algorithm is wco on the labeled graph that has exactly the triples we need to consider. This improves the time of Theorem~\ref{thm:main} since all such graphs are smaller than $\hat G$. 
Concretely, we show the following in Section~\ref{sec:points-and-intervals}.

\begin{theorem} \label{thm:GtGanyt1t2-intro}
Let $G$ be a temporal graph with $N$ tuples. Then, there is a data
structure using $O(N)$ space that allows the following: 
\begin{itemize}
\item Answer BGPs $Q$ with $m$
tuples on the labeled graph $G_t$ for any time instant $t$ given with $Q$, in
$O(Q^*_t\, m \log N)$ time, where $Q^*_t$ is the maximum number of solutions for $Q$ in
some labeled graph with at most $|G_t|$ triples, and 
\item Answer BGPs $Q$ with $m$
tuples on the labeled graph $G\Xany{t_1}{t_2}$ for any time interval $[t_1,t_2)$ given with $Q$, in
$O(Q\Xany{t_1}{t_2*}\, m \log N)$ time, where $Q\Xany{t_1}{t_2*}$ is the maximum number of solutions for $Q$ in
some labeled graph with at most $|G\Xany{t_1}{t_2}|$ triples.
%\item Solve BGPs $Q$ with $m$
%tuples on the labeled graph $G\Xall{t_1}{t_2}$ for any time interval $[t_1,t_2)$ given with $Q$, in time
%$O(Q^*_{[t_1,t_2)}\, m \log N)$, where $Q^*_{[t_1,t_2)} = \min_{t_1 \le t < t_2} Q^*_t$.
\end{itemize}
\end{theorem}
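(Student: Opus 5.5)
We will prove Theorem~\ref{thm:GtGanyt1t2-intro} by showing that, from an $O(N)$-space index over $G$, we can simulate the classical Leapfrog Triejoin on the non-temporal labeled graphs $G_t$ and $G\Xany{t_1}{t_2}$ without materializing them. The cost of every $\leap$ operation will be $O(\log N)$. By the standard analysis of LTJ~\cite{leapfrog}, any variable elimination order then performs $O(Q^* m)$ leap operations, where $Q^*$ is the AGM bound for a graph of the size being queried, here $|G_t|$ or $|G\Xany{t_1}{t_2}|$. Multiplying gives the claimed $O(Q^*_t\, m\log N)$ and $O(Q\Xany{t_1}{t_2*}\, m\log N)$ bounds. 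So the whole proof reduces to one claim: for every attribute order used by LTJ (the usual six orders, or a ring-like scheme), and for every prefix of bound constants, we can answer ``smallest value $\ge c$ of the next attribute among triples valid at $t$, respectively valid somewhere in $[t_1,t_2)$'' in $O(\log N)$ time.

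\textbf{Step 1 (snapshots).} Fix a trie order, say $s,p,o$, and sort the $N$ tuples of $G$ lexicographically. A triple with prefix $(a,b)$ is present in $G_t$ iff some tuple $(a,b,\cdot,[\ti,\tf))$ satisfies $\ti\le t<\tf$.
\begin{itemize}
\item We build a persistent (versioned) search tree, sweeping $t$ over $\T_G$.
\item At each time $t$, the tree contains exactly the keys of $G_t$ in the given order.
\item Each tuple causes one insertion and one deletion, so $O(N)$ updates occur in total.
\item With path copying this would cost $O(N\log N)$ space. We will instead use node-copying persistence (Driscoll et al.), which gives $O(1)$ amortized space per update, hence $O(N)$ space overall.
\item Access to version $t$ costs $O(\log N)$ after locating $t$ among the version roots by binary search.
\end{itemize}
A $\leap$ for prefix $(a,b)$ with target $\ge c$ is a successor query for key $(a,b,c)$ in version $t$; it runs in $O(\log N)$ time.

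Trie navigation in LTJ also needs the first distinct value of the next level, that is, a successor on a shorter prefix. This works on the same sorted key sequence by searching $(a,c,-\infty)$. Hence a constant number of orders, each with $O(N)$ space, supports any VEO. Bound positions can be handled either with all orders or with the paper's compressed ring-style representation. The distinct-value counts only change constants, since successive distinct values are found by successor queries.

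\textbf{Step 2 (intervals $G\Xany{t_1}{t_2}$).} This is the main obstacle, because existence somewhere in $[t_1,t_2)$ is not a single version. A key $k$ with intervals $[\ti,\tf)$ belongs to $G\Xany{t_1}{t_2}$ iff some interval has $\ti<t_2$ and $\tf>t_1$.
\begin{itemize}
\item We will build the persistent tree along the $\ti$ axis: version $u$ contains all tuples with $\ti<u$, each key annotated with its $\tf$ (keeping the maximum $\tf$ per key).
\item Every internal node stores the maximum $\tf$ in its subtree. Augmented data of this kind is compatible with node-copying, since the maxima are updated only along the access path, so space remains $O(N)$.
\item A $\leap$ with target $c$ in version $t_2$ finds the leftmost key $\ge c$ whose stored $\tf>t_1$. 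This is a standard descent in $O(\log N)$ time, using the subtree maxima to skip subtrees that fail the condition.
\end{itemize}

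If node-copying interacts badly with the augmentation, the fallback is a simpler structure. We would use a wavelet-tree/range-max structure over keys sorted lexicographically, with $\ti$ as coordinate: find the leftmost position $\ge c$ in key order with $\ti<t_2$ and $\tf>t_1$. This is a 3-sided dominance search, which can be answered in $O(\log N)$ with a wavelet tree storing $\ti$ order plus range-maximum on $\tf$, in $O(N)$ words. This fallback also gives Step 1, since $t_1=t$ and $t_2$ equal to the successor of $t$ reduce the interval case to the point case.

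Finally, we would verify that the LTJ analysis only needs correct leaps on the virtual relation, which is exactly $G_t$ or $G\Xany{t_1}{t_2}$. Its bound is therefore the AGM bound of that virtual graph, which completes the proof of Theorem~\ref{thm:GtGanyt1t2-intro}.
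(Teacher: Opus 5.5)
Your overall reduction matches the paper: simulate LTJ on the virtual graph with an $O(\log N)$-time $\leap$. Step~1 is a valid alternative for the first bullet. Node-copying persistent red-black trees (Driscoll et al.) over the six lexicographic orders give $O(1)$ amortized space per insertion or deletion and $O(\log N)$ successor queries in any version. The paper instead descends by $\textsc{t}=t$ in the {\sc t}-first tries and navigates its bit-packed $B_d$/$E_d$ structure below the time level.

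\textbf{The gap is in Step~2, which does not reach $O(N)$ space.} Node copying charges $O(1)$ amortized space per \emph{field modification}, not per update. The subtree maximum of $\tf$ must be stored persistently, because queries on old versions read it; red-black colours, by contrast, need not be. An insertion can change the maximum of every ancestor. Take tuples with $\ti=i$ and $\tf=N+i$: each new tuple has the largest $\tf$ so far and lands at depth $\Omega(\log N)$ in a balanced tree. All its ancestors' maxima change, giving $\Omega(N\log N)$ modifications and $\Theta(N\log N)$ space. ``Updated only along the access path'' is exactly this $\Theta(\log N)$-per-update cost.

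The fallback does not repair this. The condition key $\ge c$, $\ti<t_2$, $\tf>t_1$, minimizing key rank, is a three-dimensional dominance-minimum query, not a planar three-sided one. A wavelet tree on $\ti$ splits the first two constraints into $\Theta(\log N)$ canonical nodes. In each node you must test whether some $\tf$ exceeds $t_1$ and find the leftmost such element in key order. A succinct RMQ gives only a local position, and recovering its $\tf$ and original rank costs $O(\log N)$. That yields $O(\log^2 N)$ per $\leap$, or $O(\log N)$ only if you store $\tf$ at every level, which is $\Theta(N\log N)$ words.

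As written, then, you prove the first bullet but not the second. The paper pays one \emph{bit} per level per update: $E_d$ records whether each conceptual node is nonempty after each update in its subtree. ``Exists somewhere in $[t_1,t_2)$'' then becomes a nonzero $rank$ over a contiguous range of $E_d$, and both offsets are carried down with $rank$ on $B_d$.

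A repair inside your framework uses the fact that $[\ti,\tf)$ meets $[t_1,t_2)$ iff one of two disjoint cases holds. Either $\ti\le t_1<\tf$, meaning the tuple is alive at $t_1$, which your Step~1 tree answers at version $t_1$. Or $t_1<\ti<t_2$, in which case $\tf>t_1$ holds automatically. The second case is a range-next-value query on a wavelet tree over the key ranks of the tuples listed in $\ti$ order, costing $O(N)$ words and $O(\log N)$ time. The $\leap$ answer is the smaller of the two results.
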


In Section \ref{sec:points-and-intervals} we also show that our data structure leads to instance-optimal algorithms when matching a single triple pattern, and can also be used to \emph{list} all time intervals where these matches existed.

Next are duration queries, with two flavors. First, we can specifically input a BGP and time points $t_1,t_2$ at query time, and require all tuples that existed during the whole interval, or, in other words, to query 
graph $G\Xall{t_1}{t_2}$. We can also input a duration $\delta$ instead of the interval, in which case we want to look for answers that are valid for at least $\delta$ time. We show the following in Section~\ref{sec:duration-queries}.

\begin{theorem} \label{thm:Gallt1t2duration-intro}
Let $G$ be a temporal graph with $N$ tuples. Then, there is a data
structure using $O(N)$ space that allows the following: 
\begin{itemize}
\item Solve BGPs $Q$ with $m$
triples on the labeled graph $G\Xall{t_1}{t_2}$ for any time interval $[t_1,t_2)$ given with $Q$, in
$O(Q^*\Xall{t_1}{t_2}\, m \log N)$ time, where $Q^*\Xall{t_1}{t_2} = \min_{t_1 \le t < t_2} Q^*_t$.
\item Solve BGPs $Q$ with $m$
triples, reporting all the time intervals of length at least $\delta$ where $Q$ holds in $G$ for any duration $\delta$ given with $Q$, in time
$O(Q^*_{\delta}\, m \log N)$, where $Q^*_{\delta}$ is the maximum number of solutions for this query in the subgraph of $G$ formed by all tuples with duration $\delta$ or more. 
\end{itemize}
\end{theorem}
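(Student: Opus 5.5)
Both items reduce to the LTJ simulation behind Theorem~\ref{thm:main} and Theorem~\ref{thm:GtGanyt1t2-intro}. We keep the same $O(N)$-space index, which stores each tuple with its interval $[\ti,\tf)$ and supports the leap operations of LTJ inside versioned/persistent sorted structures. The goal is to run LTJ over a \emph{virtual} labeled graph containing exactly the triples we must consider, with every leap costing $O(\log N)$. Once this holds, the usual LTJ analysis gives $O(\hat Q^*\, m\log N)$, where $\hat Q^*$ is the AGM bound on the size of that virtual graph.

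\textbf{First item.} A triple $(s,p,o)$ lies in $G\Xall{t_1}{t_2}$ iff some tuple $(s,p,o,[\ti,\tf))$ satisfies $\ti\le t_1$ and $t_2\le \tf$. This uses the non-redundancy assumption: intervals of the same triple are disjoint, and adjacent ones are assumed merged, so full coverage needs a single tuple. Hence $G\Xall{t_1}{t_2}\subseteq G_{t_1}$. It is the subset of $G_{t_1}$ whose tuples satisfy $\tf\ge t_2$, and also the subset of $G_{t'}$ for every $t'\in[t_1,t_2)$.

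The plan is to run the point-in-time LTJ of Theorem~\ref{thm:GtGanyt1t2-intro} on the version at the best $t'$, extended so that leaps skip tuples with $\tf<t_2$. This is a three-sided (dominance) restriction on $(\ti,\tf)$. I would implement it by augmenting each node of the persistent trie/tree for $G_{t'}$ with the maximum $\tf$ in its subtree. A leap then descends only into subtrees with $\max\tf\ge t_2$, which costs $O(\log N)$ per leap. Running LTJ on the resulting virtual graph costs the AGM bound of $|G\Xall{t_1}{t_2}|$, and this is at most $\min_t Q^*_t=Q^*\Xall{t_1}{t_2}$ because $|G\Xall{t_1}{t_2}|\le|G_t|$ for each $t$.

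\textbf{Second item.} Filtering by duration $\tf-\ti\ge\delta$ is a one-sided condition on a derived key. I would use a structure that orders all tuples by duration, for example the same trie layout built persistently over tuples inserted in decreasing duration order. For a given $\delta$ we select the version containing exactly the tuples of duration $\ge\delta$, call it $G_{\ge\delta}$. Treating it as a temporal graph, we run the temporal LTJ of Theorem~\ref{thm:main} on it, so temporal variables are also bound. This reports each solution with its time points, and we pack consecutive points into maximal intervals and keep those of length at least $\delta$. The running time is $O(Q^*_\delta\, m\log N)$.

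\textbf{Main obstacle.} The hard step is keeping space linear while supporting these extra filters for every trie order used by LTJ. Path copying in a persistent structure costs $O(\log N)$ per update, so the naive bound is $O(N\log N)$ space. I would rely on the same compressed versioned-tree technique that the paper uses for Theorem~\ref{thm:main} to get $O(N)$ words, and check that the max-$\tf$ augmentation survives that compression.

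A second subtle point concerns the duration query. A solution of the join can hold for at least $\delta$ even when each individual edge does, yet the intersection of their intervals is shorter. Such solutions are filtered out after LTJ, and their cost must be charged to $Q^*_\delta$. This works because $Q^*_\delta$ is defined on the subgraph of long tuples, which is exactly the graph LTJ runs on.
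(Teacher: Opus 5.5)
\textbf{Gap: the $O(N)$ space bound.} This is the step you defer as the ``main obstacle'', and it does not go through as proposed. The representation of Section~\ref{sec:linear} is linear only because it stores just two bits ($B_d$ and $E_d$) for each node of $V$ and each update below that node. That is how $L$ paths of length $b\ell$ cost only $O(L\log N)$ bits. A maximum-$\tf$ field is a $\Theta(\log N)$-bit value that changes from version to version. Keeping it for every (node, update) pair therefore costs $\Theta(L\log^2 N)$ bits, i.e., $\Theta(N\log N)$ words overall. That is exactly the blow-up the compression was designed to remove, and you offer no other structure for this filter.

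Your second item has the same problem in a worse form. Making the whole index of Theorem~\ref{thm:main} persistent over decreasing durations would also have to version the time-first tries. Their elementary time intervals and $V$ structures depend on the entire set of tuples below each node, and nothing shows such a doubly-versioned structure fits in $O(N)$ space.

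Two smaller points. Pruning only by $\max\tf\ge t_2$ is correct only on the version at $t_1$; on a version $t'>t_1$ it admits tuples with $t_1<\ti\le t'$. Also, you are aiming at the AGM bound on $|G\Xall{t_1}{t_2}|$, which is strictly stronger than the theorem's $Q^*\Xall{t_1}{t_2}=\min_t Q^*_t$. That weaker bound is the clue that exact membership filtering is not needed.

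\textbf{Missing idea, first item.} Use only the existence bits already stored. Run the range version of leap (Algorithm~\ref{alg:wtreeRange}), but enter a node of $V$ only if \emph{all} its $E_d$ bits over the timestamp range of $[t_1,t_2)$ are 1; this is a single $rank$ test and needs no extra space. Every node the simulated tries contain is then nonempty in $G_t$ for every $t\in[t_1,t_2)$. Hence the LTJ work is bounded as for LTJ on $G_t$, for every such $t$, and so by the minimum. The bound on $|G\Xall{t_1}{t_2}|$ is lost because a node can stay nonempty through different descendants at different times.

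\textbf{Missing idea, second item.} The paper uses only join-first tries (time last). Each LTJ node stores, for each child $u_i$ with value $x_i$, the largest duration $\delta_i$ of a tuple below it. Leap then becomes the 3-sided query ``smallest $i$ with $x_i\ge x$ and $\delta_i\ge\delta$'', answered in $O(\log N)$ time within linear space. At the time level, leap returns $t_0$ only if $[t_0,\te_i)$ still has length at least $\delta$, and otherwise jumps to the next interval of duration at least $\delta$. Reported intersections are therefore long enough without post-filtering. Your duration-ordered persistence would work if confined to these join-first tries, since it amounts to a wavelet tree over each node's children in decreasing-duration order answering the same 3-sided query. It does not work as a persistent copy of the full index.
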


%We show how the same representation of Theorem~\ref{thm:main} allows solving classic BGPs on graphs $G_t$, $G\Xall{t_1}{t_2}$ and $G\Xany{t_1}{t_2}$ with $t$, $t_1$ and $t_2$ defined at query time. Variants of those solutions yield efficient algorithms for finding all the changes that occurred in $G$ between two time instants, or to run queries in wco time considering only edges with long enough durations. \textcolor{red}{más color?}

%\textcolor{red}{Vamos a hablar de GHD, o lo dejamos para las conclusiones? O para la comparación con Hu et al., que tal vez podría traerse acá.}

%\textcolor{magenta}{JUAN: esto me queda un poco volando, tiene mas que ver con la representación física que la lógica o no?}
%\textcolor{red}{Gonzalo: Sí, déjame pensarlo, tal vez hubo que decirlo antes para poder ahora hablar de cosas cuadráticas en $N$.}

\subsection{On the choice of temporal BGPs}
\label{sec:expressive}

Temporal BGPs play a role for temporal graphs analogous to conjunctive queries (CQs) for relational data: they form the  pattern-matching core underlying richer temporal query languages~\cite{croker1989completeness}.

The study of tBPGs is further justified by the fact that algorithms for standard conjunctive query evaluation cannot be directly translated to the temporal setting, because the language of tBGPs can be more expressive than CQs over a standard relational encoding of temporal graphs (see Appendix~\ref{app:expressive} in the extended version). \new{The appendix also shows that,} if $G$ is expanded into its point-based representation $\hat G$, then every tBGP corresponds to a CQ with inequalities evaluated over $\hat G$.
%, again, Appendix~\ref{app:expressive}
%(see the extended version).  
This further justifies tBGPs as the CQ analog in our setting. Since the size of $\hat G$, for a graph $G$ with $N$ tuples, can be of order $N\cdot |\T_G|$, materializing $\hat G$ to process tBGPs as relational CQs is infeasible. Notice that our index structures can correctly evaluate tBGPs over $\hat G$ using only $O(N)$ space.

\section{Overview of our solution}

Our solution is to build a data structure for temporal graphs in which we can run the Leapfrog TrieJoin algorithm with any  variable ordering to compute joins in wco time, and where the total space is linear with respect to the number of tuples of the temporal graph. 

For a temporal graph $G$, we replace the $\ti$ and $\tf$ components in $G$ by their rank in 
$\mathcal T_G$, turning them into integers in $[0,T)$ where $T=|\T_G|$ (we also maintain a dictionary to enable switching between actual time constants and integers in $[0,T)$). We further map the 
set of node identifiers and labels to the integer interval $[0,U)$, where $U = |\U_G|$. We handle those numeric identifiers using binary tries.

We now define the Leapfrog TrieJoin algorithm and (versioned) binary tries: the two building blocks of our solution. In Section \ref{sec:representation}, we construct a data structure from these two building blocks that achieves the desired wco optimality, but with $O(N\log N)$ space. Section \ref{sec:linear} explains how to further reduce this to a structure using linear space, and how to use this structure for querying. 

\subsection{Leapfrog Triejoin}

We describe the Leapfrog Triejoin (LTJ) algorithm \cite{leapfrog}, in the version that is adapted for solving BGPs on labeled graphs \cite{HRRSiswc19}. 

LTJ requires that the triples $(s,p,o)$ are represented as tries, which we will call {\em LTJ tries}, in the 6 possible orders of the components \textsc{s}, \textsc{p}, and \textsc{o}. Tries are labeled trees where no two children of a node have the same label, and represent all the strings that can be read by concatenating the labels of their root-to-leaf paths. 

Each LTJ trie stores the components \textsc{s,p,o} in some order, as strings of length 3. We call these orders \textsc{spo}, \textsc{sop}, \textsc{pos}, \textsc{pso}, \textsc{osp}, and \textsc{ops}. For example, the trie for the order \textsc{pos} has one root-to-leaf path with consecutive labels $p,o,s$ for each triple $(s,p,o)$ in the graph (see Figure~\ref{fig:postrie}).
Each LTJ trie has height 3 and exactly $N$ leaves.

\begin{figure}[t]
\includegraphics[width=0.40\textwidth]{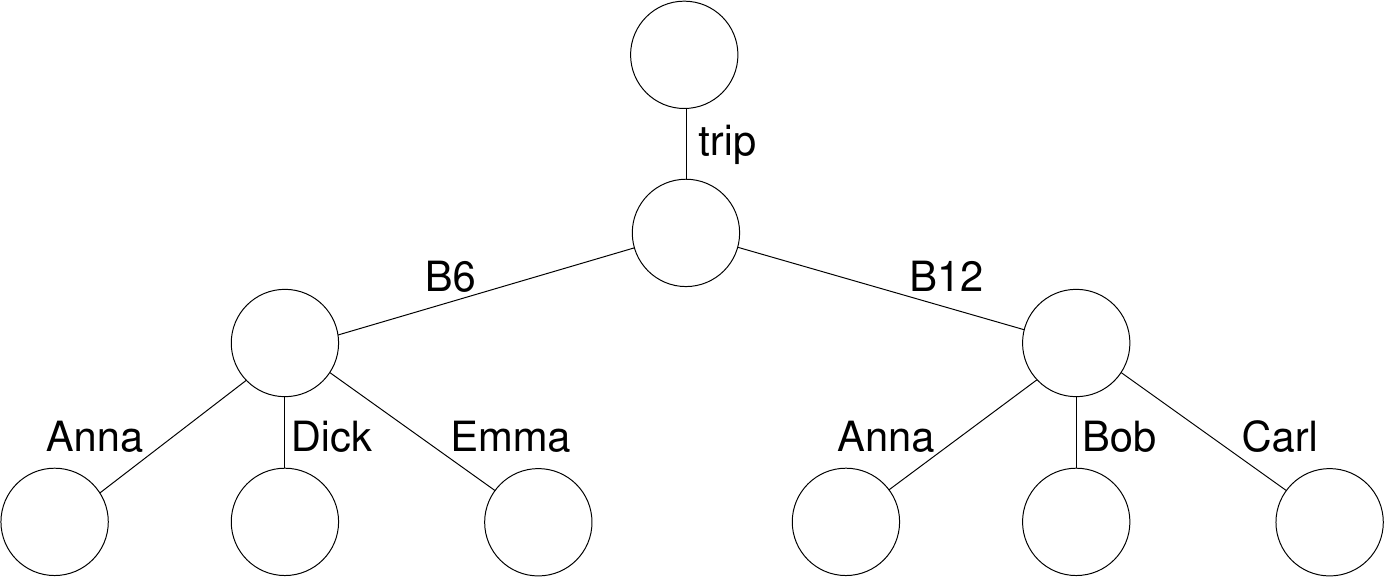}

\caption{\new{The LTJ trie of order \textsc{pos} for the graph of Figure~\ref{fig:trips} devoided of temporal annotations.}}
\label{fig:postrie}
\end{figure}

Let $Q = \{t_1, \ldots, t_m\}$ be a BGP (i.e., a set of triple patterns) and $\{x_1, \ldots, x_v\}$ its set of variables. LTJ carries out $v$ iterations, ``eliminating'' one variable at a time. The order LTJ chooses to eliminate the variables is known as the {\em variable elimination order (VEO)}. 

Each triple pattern $t_i$ is associated with one of six tries, say $\tau_i$.
A level $\ell$ of $\tau_i$ corresponds to a constant $c$ (or variable $x$) if $t_i$ contains $c$ (or $x$) at the position corresponding to level $\ell$ in $\tau_i$. To be a valid trie for $t_i$, the first levels of $\tau_i$ (i.e., levels closest to the root) must correspond to constants of $t_i$, and the subsequent levels must correspond to the variables of $t_i$, consistently with the VEO. 

\begin{example}
\new{For illustration, consider the graph of Figure~\ref{fig:trips} devoid of temporal annotations (and thus with only one edge from \textsf{Carl} to \textsf{B12}). If $t_i=(x,\textsf{trip},z)$ and the VEO eliminates $z$ and then $x$, then $\tau_i$ must be the trie of the order \textsc{pos} shown in Figure~\ref{fig:postrie} (this is why LTJ needs the tries in various orders). Its first level corresponds to \textsc{p} $=\mathsf{trip}$, the second to \textsc{o} $=z$, and the third to \textsc{s} $=x$. \qed} 
\end{example}

The first step of LTJ is to descend, from the root of each trie $\tau_i$, by the constants of $t_i$. It then starts the variable elimination step. Say that the chosen VEO is $x_1,\ldots,x_v$. LTJ then finds each value $c$ that appears as a child of the current node in every trie $\tau_i$ whose next level corresponds to variable $x_1$. For each such $c$, LTJ {\em binds} $x_1 := c$, descends by $c$ in all the corresponding tries $\tau_i$, and goes on recursively with $x_2$.  
 Once we have bound all variables in this way, the set of bindings for $x_1,\ldots,x_v$ is a new solution for $Q$. 

The values $c$ are found by intersecting the children of the current nodes $v_i$ in all the suitable tries $\tau_i$; see Algorithm~\ref{alg:leapfrog} \new{in Appendix~\ref{ap:pseudo} of the extended version. LTJ cycles over all the involved tries looking for the smallest $x' \ge x$, where $x$ is its next candidate, within the children of the current node $v_i$ in each trie $\tau_{i}$. When all tries find the same $x$, this is the next output of the intersection. The key primitive used, $\leap(v_i,x)$, finds that smallest $x'\geq x$ within the children of $v_i$.}
If leap is executed in $O(\log N)$ time (e.g., using binary search) then LTJ obtains the wco time complexity $O(Q^* m\log N)$.

\subsection{Binary Tries}
\label{sec:vbt}

%\new{In the integrated model, we will replace the values of $G$ that belong to the time domain by their rank in $\T$, assigning them integer values in $[0,T)$. The other values, in $\U \setminus T$, are assigned integer values in $[T,U)$. In the segregated model, we} \old{We}
%We will replace the $\ti$ and $\tf$ components in $G$ by their rank in 
%$\mathcal T_G$, turning them into integers in $[0,T)$ where $T=|\T_G|$. We will similarly map the 
%set of node identifiers and labels to the integer interval $[0,U)$, where $U = |\U_G|$. We will
%handle those numeric identifiers using binary tries.

Assume that we aim to represent each $G_t$ as a labeled graph, for the consecutive values $t \in [0,T)$. We now describe a technique inspired by versioned data structures, beginning with binary tries. 

%\subsection{Binary Tries}

A {\em binary trie (BT)} is a binary tree that represents a set of binary 
strings, each string being a root-to-leaf path in the trie. If the sequence of 
left-right turns from to root to the leaf are interpreted as $0$ and $1$, 
respectively, we obtain the string represented by the leaf. We assume no binary
string is a prefix of another. 

We use BTs to encode the sets of children of nodes of the LTJ tries. Those children are numbers in $[0,U)$ interpreted as binary 
strings of length $\ell=\lceil \log_2 U \rceil$, reading from most to least significant digit. Figure~\ref{fig:vbt} (left) illustrates the BT for \new{a toy LTJ trie node.}

%\begin{figure}[t]
%\begin{center}
%\includegraphics[width=0.45\textwidth]{bt.pdf}
%\end{center}
%\caption{On the left, an LTJ trie node $v$ with three children: values $2$, 
%$3$, and $6$. On the right, the BT for that set of children, 
%using $\ell=3$ bits. 
%The BT represents the binary strings $2 = 010_2$, $3 = 011_2$, and $6 = 110_2$.
%The root of the BT is $v$ and its leaves are the corresponding children of $v$
%in the LTJ trie.}
%\label{fig:bt}
%\end{figure}

\begin{figure}[t] %\begin{figure*}[t]
\begin{center}
\includegraphics[width=0.49\textwidth]{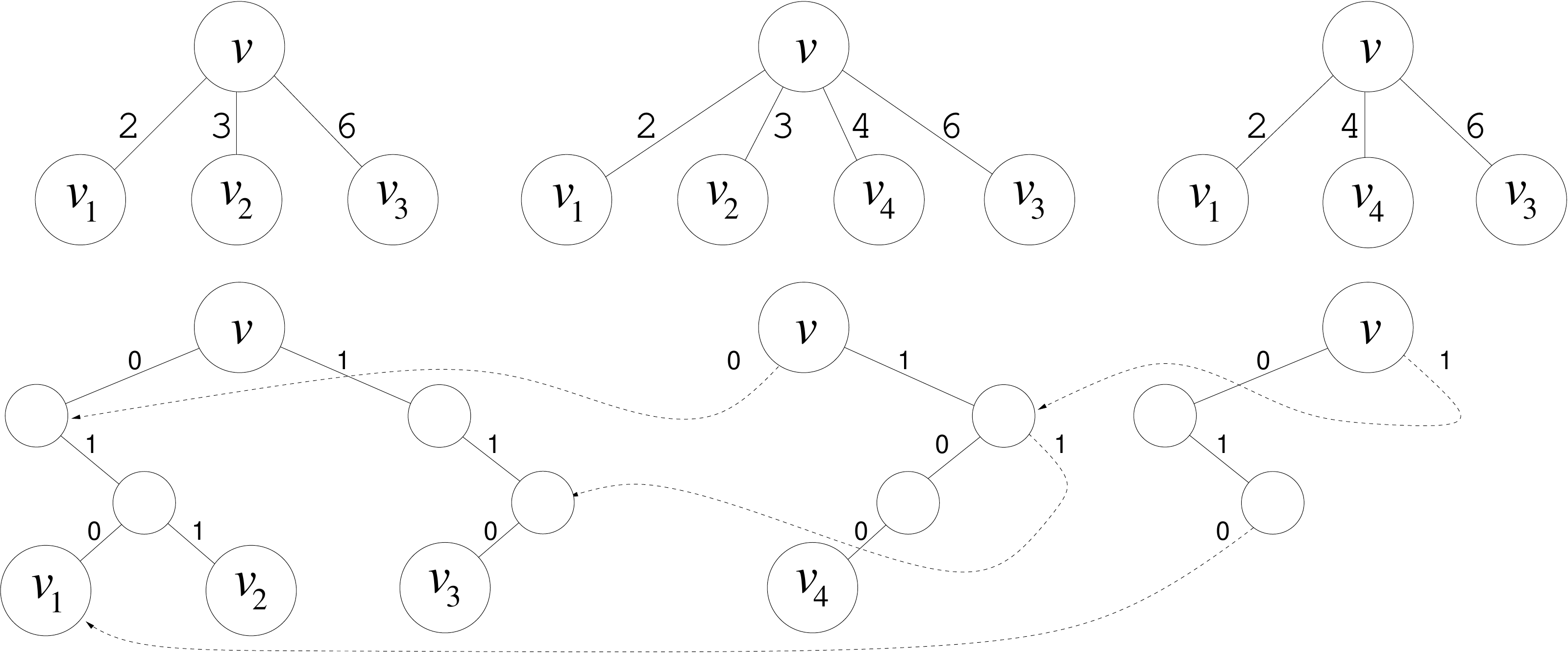}
\end{center}
%\vspace*{-5pt}
\caption{At the top, three consecutive versions of an LTJ trie node $v$. The leftmost one has three children: values $2$, $3$, and $6$. It is represented (below) with a BT using $\ell=3$ bits, which stores the binary strings $2 = 010_2$, $3 = 011_2$, and $6 = 110_2$. The root of the BT is $v$ and its leaves are the corresponding children of $v$ in the LTJ trie. The center and right trie are successive versions of the leftmost one: the value $4$ appears
in the second version, whereas $3$ disappears in the third. Both are represented (below) as VBTs: the pointers to preceding versions are shown with dashed lines; note in particular
the pointer from the third to the first version.}
\label{fig:vbt}
\end{figure} %\end{figure*}

BTs can simulate the search for the first child of $v$ with value $\ge x$ in
$O(\ell) = O(\log N)$ time, as follows. We start at the BT root, with height
$h := \ell$. If the $h$-th highest bit of $x$ (an $\ell$-bit number) is 
$1$, we continue recursively by the right child of the BT, decrementing
$h$. Otherwise, we proceed as follows. First, we try to find the answer on the left 
branch, also decrementing $h$. If that search returns an answer, we return it 
too. If not, we return the leftmost leaf of the right child of the node. When 
we arrive at a leaf, its string represents the first value $\ge x$ in the trie.
This takes $O(\ell)$ time because it can descend by both children of a node 
only once in the whole process. %Algorithm~\ref{alg:BTseek} in Appendix~\ref{ap:pseudo} shows the pseudocode.
For pseudocode see the extended version (Algorithm~\ref{alg:BTseek} in Appendix~\ref{ap:pseudo}).

BTs then support the LTJ leap operation within the same $O(\log N)$ time
factor penalty of binary searches. As a consequence, we can use a BT as a local
structure to represent the set of children of every LTJ trie node (instead of
just a plain array of increasing identifiers), and retain the same wco
time $O(Q^* \log N)$ on labeled graphs.

\subsection{Versioned Binary Tries}

We aim to represent all the graphs $G_t$ with the same tries.
When representing $G_t$, we will have LTJ trie nodes $v$ that will
be very similar to the node $v$ for $G_{t-1}$: the node $v$ in $G_t$ may have a few inserted or deleted children with respect to the node $v$ in $G_{t-1}$. To efficiently represent the children of those LTJ trie nodes $v$, we use {\em versioned binary tries (VBTs)}. A VBT represents a BT as a set of 
updates with respect to a {\em reference} BT. Instead of explicitly storing all
the nodes of the BT it represents, the VBT records only the root-to-leaf paths 
that change with respect to the reference BT. Changes refer to newly inserted or
deleted binary strings. The BT subtrees that do not change, instead of being 
duplicated, are pointed to from the corresponding nodes in the newly created paths
to the reference BT. Figure~\ref{fig:vbt} illustrates successive versions of the leftmost subtrie.

Note that a top-down traversal on a VBT is identical to that on a standard BT,
so we can run the LTJ intersections on the VBTs as well, in $O(\log N)$ time
per leap. Note how, in Figure~\ref{fig:vbt}, we can traverse 
the VBT of the third version exactly as if it were a BT. 

We use VBTs to represent a sequence of versions of a node's children.
The first version in time is represented as a standard BT. Each new 
version is a VBT encoded with respect to the preceding BT. From the third
VBT onwards, the preceding BT is also represented as a VBT, so pointing to 
a subtree of the preceding BT may actually correspond to pointing to an earlier
BT pointed to by the preceding VBT. 
This is shown on the third version in 
Figure~\ref{fig:vbt}.

Since the paths are of length $O(\log N)$, it follows that 
VBTs require $O(\log N)$ space per update they record with respect to a
previous BT. VBTs will be part of our solution, as described next.
% Too early: They then require in total $O(|G_0|\log N+T\log N)=O(N \log N)$ space. Within this space, we are representing $\Gpt{G}$ (i.e., every $G_t$ as if it were independent), without using $\Theta(N^2)$ space. In particular, this representation would suffice to solve BGPs on any $G_t$ in wco time, but we aim at more.

\section{Our Representation}
\label{sec:representation}

In abstract terms, we regard each tuple $(s,p,o,[\ti,\tf\,))$ of $G$ as if it were $\tf-\ti$ standard quads $(s,p,o,t)$, one per integer $t \in [\ti,\tf\,)$; recall that we have mapped $\T$ to $[0,T)$. This is $\Gpt{G}$, which as explained may have $\Theta(N^2)$ quads, but we will manage to represent them all within $O(N\log N)$ space (later, we will reduce the space to $O(N)$).

With the quads model, the tBGPs can be solved directly by using the LTJ 
algorithm on the LTJ tries storing the 4! permutations of $\{s,p,o,t\}$. The 
algorithm is then wco with respect to this model. In particular, consider a tBGP
where the time component is a single constant $t$ for all the triple patterns.
Then, using an LTJ trie with an order that starts with $t$, we descend by the
child $t$ of the root and can solve the tBGP in time $O(Q^*_t \log N)$, where
$Q^*_t$ is the AGM bound for the corresponding BGP on the graph $G_t$. As another example, consider the same tBGP where now $t$ is a variable. By solving
it on an LTJ trie that starts with the $t$ component, we can mimic the so-called
``time-first'' approach in previous work \cite{HSGAY22}; by using a 
trie that ends with the $t$ component, we mimic the so-called ``join-first''
strategy. We can use more general strategies, however, by putting the time
component elsewhere in the order.

In this scenario, the clauses $w_1 \le w_2$ are handled as follows. If $w_1$ is bound before $w_2$, then it will have assigned a value $f(w_1) = t_1$ each time $w_2$ is bound. At this moment we restrict the LTJ intersections for $w_2$ so that the values stay within $[t_1,T)$. If, instead, $w_2$ is bound to $f(w_2)=t_2$ before $w_1$, we enforce that the values of $w_1$ be within $[0,t_2]$. 
This is still wco, as it is equivalent to materializing a table with all the pairs $(w_1,w_2)$ with $w_1 \le w_2$ and including it in the LTJ algorithm. Since all the time instants appear in both columns, LTJ will not restrict the first bound variable; the second one will be restricted exactly as described.

\begin{figure*}[t]
\includegraphics[width=\textwidth]{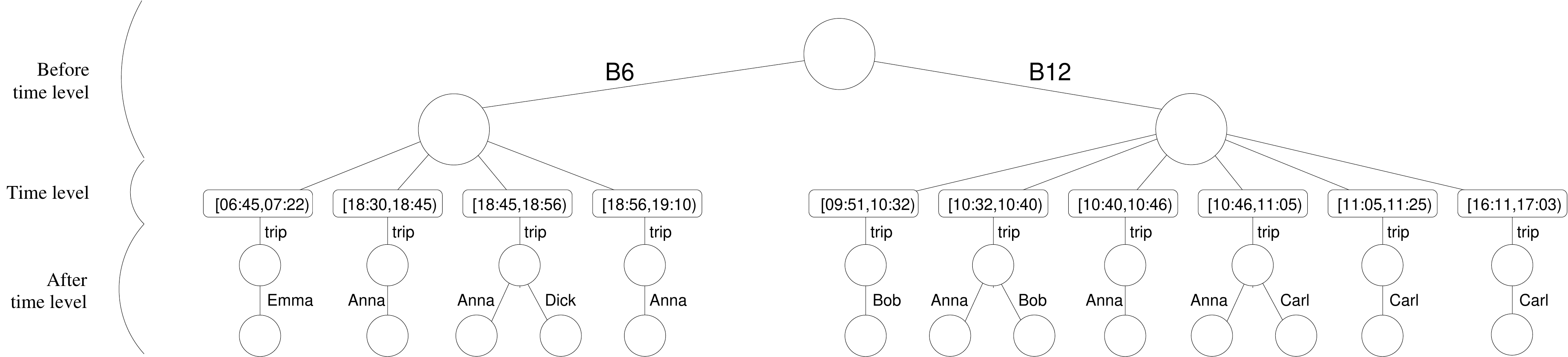}
%\vspace*{-5pt}
\caption{\new{The \textsc{otps} trie corresponding to the graph of Figure~\ref{fig:trips}.}}
\label{fig:otpstrie}
\end{figure*}

To do all this within $O(N\log N)$ space, 
we introduce a special representation for the LTJ tries. Their abstract
form will still be a trie of height 4, with one level per tuple component.
The difference is in the way we represent the children of the nodes.
Given a particular permutation, say $(s,t,p,o)$, we distinguish the levels
before the time component (in our example, the first level, $s$), the
time level, and the levels after the time component (levels 3 and 4, 
$p$ and $o$, in our example). Note that which levels are before and after the time
level depends on the permutation we are storing.

The levels before the time level can be implemented in traditional form, 
e.g., as an array of increasing child values. We now describe our
implementation of the time level and its subsequent levels.

\subsection{The time level} \label{sec:timelevel}

The time level will be represented in a form closer to the tuples of $G$ (with the times already converted to integers in $[0,T)$) than to the set of quads of $\Gpt{G}$. This 
level will be an ordered sequence of disjoint time intervals, stored 
in classical form (e.g., an array). Let $v$ be an LTJ trie node whose children belong to the time level, and $[\ti_1,\tf_1), [\ti_2,\tf_2), \ldots$ 
be the time intervals of all the tuples stored in the LTJ subtrie of $v$ (this is a subset of all the time intervals in $G$ and they can overlap). Then consider the 
list $\langle t_1,t_2,\ldots \rangle$ containing all the values 
$\ti_k$ and $\tf_k$, sorted in increasing order and with equal values 
removed. The intervals forming the children of $v$ are then in principle
$[t_1,t_2), [t_2,t_3),\ldots$, which covers the whole universe $[0,T)$.
We remove, however, intervals with no descending tuples.
The surviving intervals are then of the form $[\ts_1,\te_1), [\ts_2,\te_2), 
\ldots$, where each $[\ts_i,\te_i)$ is equal to some $[t_k,t_{k+1})$.

\begin{example} \label{ex:otps}
\new{Consider the \textsc{otps} trie for Figure~\ref{fig:trips}. The intervals $[\ti_i,\tf_i)$ for the node $\textsf{B12}$ are $\{\textsf{[9:51,10:40),[10:32,11:05),[10:46,11:25),}$ $\textsf{[16:11,17:03)}\}$. The corresponding sorted list of times $t_i$ is then $\langle \textsf{9:51,10:32,10:40,10:46,11:05,11:25,16:11,17:03}\rangle$. A consecutive pair forms an interval $[\ts_i,\te_i)$, except for $[\textsf{11:25,}$ $\textsf{16:11})$, in which there are no tuples. Figure~\ref{fig:otpstrie} illustrates the different levels in this trie. \qed}
\end{example}

% this is in section The Intersections
%To run $\leap(v,x)$ on the time level, we use binary search on the values 
%$\ts_i$. After finding $i$ such that $\ts_i \le x < t_{i+1}$, we return $x$ if
%$x < \te_i$, otherwise we return $t_{i+1}$.

\subsection{Levels after the time component}

We will use BTs and VBTs to represent the children
of all LTJ trie nodes below the time level. Consider an LTJ trie node $v$
whose children, at the time level, are $[\ts_1,\te_1), [\ts_2,\te_2), \ldots$ 
We will {\em concatenate} all the binary strings corresponding to all the 
levels that follow the time level. \new{In the \textsc{otps} trie of Figure~\ref{fig:otpstrie}, the descendants of each node $[\ts_i,\te_i)$ will be
arranged in a single BT (or VBT) holding binary strings of length $2\ell$, formed by
concatenating the $p$ and the $s$ values (i.e., $p:s := 2^\ell p+s$)} of all the
tuples that must be stored below $v$ and exist at time $\ts_i$. Per our
construction, this set of tuples does not vary within the time interval 
$[\ts_i,\te_i)$. 

%Note that every tuple in $G$ produces at most $2$ entries below some node $v$ in the time level of each LTJ trie, so the total number of nodes in this level is at most $2N$ per LTJ trie. 

Figure~\ref{fig:trie} \new{shows hypothetical values $p:s$ descending from some node $v$.} Note that concatenating the components is almost immaterial with respect
to using different BTs for the children of every LTJ trie node. The BT node representing each prefix $p$ (at depth $\ell$) becomes the root
of the subtrie representing each component $s$ of the concatenations $p:s$.
We can then interpret the node of $p$ as the LTJ trie node obtained by 
descending by $p$ from $v$, and the subtrie with the $s$ components as the BT 
of its children. Our arrangement, however, is more convenient for the 
versioning we require next.

\begin{figure*}[t]
\begin{center}
\includegraphics[width=\textwidth]{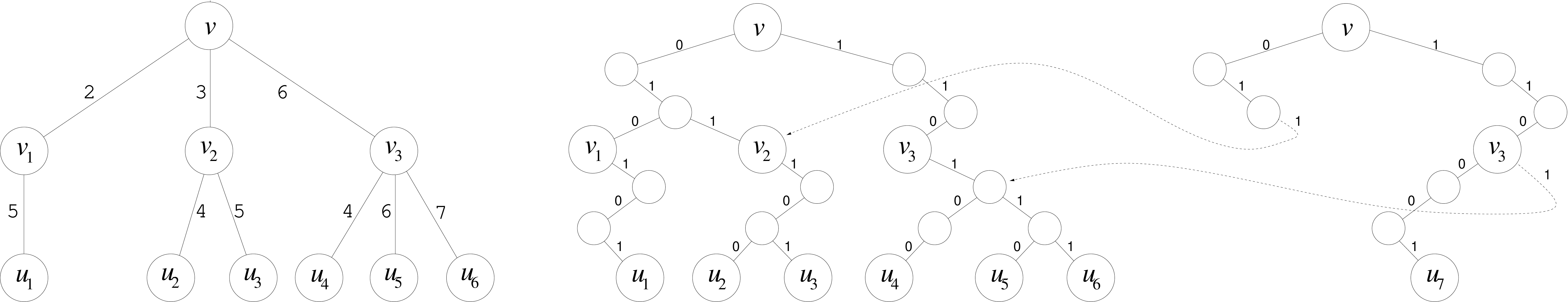}
\end{center}

\caption{On the left, the two final levels of an LTJ trie node rooted at $v$.
In the middle, the BT representation of the corresponding children. Note that
we can also regard all the BTs as a single BT on the concatenation of the path
labels, $2:5$, $3:4$, $3:5$, $6:4$, $6:6$, and $6:7$. On the right, a VBT
representing two edits: the removal of $2:5$ and the insertion of $6:1$.}
\label{fig:trie}
\end{figure*}

Let us consider how we store the elements formed by concatenating the remaining
attributes of the tuples below $v$. We first build the basic BT for the node 
$[\ts_1,\te_1)$, containing all the elements whose tuples exist in time $\ts_1$.
This set of tuples does not vary until time $\te_1$. This BT is the child of 
the node $[\ts_1,\te_1)$ of the time level.

Now consider the next interval, $[\ts_2,\te_2)$. If $\ts_2 > \te_1+1$, then 
the intervals spanning $[\te_1,\ts_2)$ were empty, and we simply build
a new BT for $[\ts_2,\te_2)$ containing only the new elements that appear at 
time $\ts_2$. Otherwise, there exists a nonempty BT that precedes the BT for 
$[\ts_2,\te_2)$, so we represent this BT as a VBT, whose reference is the BT 
for $[\ts_1,\te_1)$. This VBT must represent two events: 
\begin{enumerate}
\item Consider a tuple $(s,p,o,[\ti,\tf\,))$ of $G$ where $(s,p,o)$ must be 
inserted below the LTJ trie node $v$ and such that $\ti = \ts_2$. This tuple 
must be added to the VBT.
\item Similarly, each tuple $(s,p,o,[\ti,\tf\,))$ belonging to the subtrie of
$v$ and such that $\tf = \ts_2$ must be removed from the VBT. 
\end{enumerate}

\begin{example}
\new{In Figure~\ref{fig:otpstrie}, below node \textsf{B6}, the interval \textsf{[07:22,18:30)]} disappears as it contains no tuples, and thus we start a new BT with root \textsf{[18:30,18:45)} containing \textsf{Anna}. The next interval, \textsf{[18:45,18:56)}, modifies the previous one by adding \textsf{Dick}, represented as a VBT with respect to the previous one. The last interval, \textsf{[18:56,19:10)}, removes \textsf{Dick} again, and is again represented by a VBT. \qed}
\end{example}

We create new paths in the VBT of $[\ts_2,\te_2)$ as described in 
Section~\ref{sec:vbt}, to account for those events. Each event requires
creating $O(\log N)$ nodes in the BTs or the VBTs.
Note that each graph tuple $(s,p,o,[\ti,\tf\,))$ generates two
events: the insertion of $(s,p,o)$ at $\ti$ and its deletion at $\tf$. 
Therefore, the trie represents at most $2N$ events, each of which induces
$O(\log N)$ new nodes in the VBTs. The total space is thus $O(N\log N)$.

\begin{example}
\new{In the trie \textsc{otps} of Example~\ref{ex:otps}, the tuple $(\textsf{Anna},\textsf{trip},$ $\textsf{B12},\textsf{[10:32,11:05)})$ will generate, below the node \textsf{B12}, a time $\ti=\textsf{10:32}$ where we will insert $\textsf{trip:Anna}$, and a time $\tf=\textsf{11:05}$ where we will delete $\textsf{trip:Anna}$. Because we use VBTs for intermediate intervals (in this case, for the intervals \textsf{[10:40,10:46)} and \textsf{[10:46,11:05)}), $\textsf{trip:Anna}$ will not be explicitly represented in those. \qed}
\end{example}

\subsection{The intersections} \label{sec:inters}

Intersections work exactly as in LTJ over BTs, except at the time level. At this level, each node represents an interval, which must be handled during intersection: each interval $[\ts_i,\te_i)$ stands for all the time instants $\ts_i \le t < \te_i$. We modify the list intersection algorithm of LTJ so as to assume that all those
time instants $t$ are explicitly represented and that copies of the BT of 
$[\ts_i,\te_i)$ descend from all those implicit time instants $t$. To find the 
first time $t^* \ge t_0$, we look for $t_0$ and, (1) if we find 
some $\ts_i \le t_0 < \te_i$, we answer $t^* := t_0$; (2) if we find some 
$\te_i \le t_0 < \ts_{i+1}$, we answer $t^* := \ts_{i+1}$.

We can avoid recomputing the same answers for all the 
time instants $t$ of an interval: Rather than returning each time instant, we 
return the range $[t_0, \te_i)$ in case (1) above, and $[\ts_{i+1}, \te_{i+1})$
in case (2). We then modify the intersection algorithm to record 
the maximum interval starting at $t^*$ that is included in all the intersected
lists. The instantiation of the corresponding time 
variable is then a whole interval where the instantiated BGP will not change.
%Algorithm~\ref{alg:time-leapfrog} in Appendix~\ref{ap:pseudo} shows pseudocode.
See pseudocode in the extended version (Algorithm~\ref{alg:time-leapfrog} in Appendix~\ref{ap:pseudo}).

Thus we can return solutions to BGPs with ranges of values for the
time-bound variables. This is a form of compacting the output, but we cannot ensure we return it in the optimally compacted form.

%\section{BGPs with Minimum Duration}
%
%One extension to the query language is to set a minimum duration for the solutions that we report. This can be expressed as tuple patterns of the form $(x,y,z,[w,w+d))$ for a nonnegative constant $d$, meaning that in every solution $(s,p,o,t)$, the tuple must exist at least during the interval $[t,t+d)$.

%Pensé que la podía resolver, pero ahora no estoy seguro. Creo que sólo funciona con t en el último nivel. Ponemos los [\ts_i,\tf_i] como puntos (x,y) = (\ts_i,\tf_i-\ts_i) en una grilla 2D y para buscar el siguiente \ge t_0, buscamos el siguiente punto en x en [\t0,+\infty] \times [d,\infty]. Pero si t está en los niveles anteriores, pueden particionar esos intervalos por otros eventos.

\section{Linear Space} \label{sec:linear}

The base version we have described implements $\leap$ in $O(\log N)$ time and $O(N\log N)$ space. This space is worrisome for large graphs. We now describe a more sophisticated storage mechanism that achieves linear space with no penalty in time complexity. 

\subsection{Data Structure}

Consider the sequence of updates that occur in the time-level children $[\ts_1,\te_1), [\ts_2,\te_2),\ldots$ of a particular LTJ trie node. Instead of creating a sequence of VBTs $v_1,v_2,\ldots$, we create a {\em single} BT $V$ where we insert all the paths corresponding to those updates; each VBT (or BT) $v_l$ corresponds to a sequence of insertions/deletions of tuples (just insertions in case of BTs). If there are $b$ levels after the time level, for $1 \le b \le 3$, each such path is of length $b\ell$. Updates consisting of insertions correspond to a path of length $b\ell$. Updates consisting of deletions are also paths of length $b\ell$, plus a deletion mark. Note this differs from the way we represented deletions in VBTs; see Figure~\ref{fig:trie}. If a given tuple is inserted and deleted several times, its leaf will correspond to several updates. 

We will record the timestamps of those updates. If $V$ represents $L$ updates, they are numbered $1$ to $L$. Each node $[\ts_l,\te_l)$ stores the value $p_l \in [1,L]$ for its last update. This means that its BT corresponds to executing the updates $[1,p_l]$ on an empty trie. We store the timestamps, however, in a way that will enable fast navigation.

Each node $v$ of $V$ will conceptually store the subsequence $v.T$ of $[1,L]$ corresponding to the updates that occur below $v$. The subsequence is $v.T = \langle 1,2,\ldots,L\rangle$ if $v$ is the root of $V$. If $v$ is a leaf, then $v.T$ contains the timestamps where its particular tuple was created or removed. Figure~\ref{fig:wtree} shows an example (see only the top and bottom-left parts for now).

\begin{figure*}[t]
\begin{center}
\includegraphics[width=\textwidth]{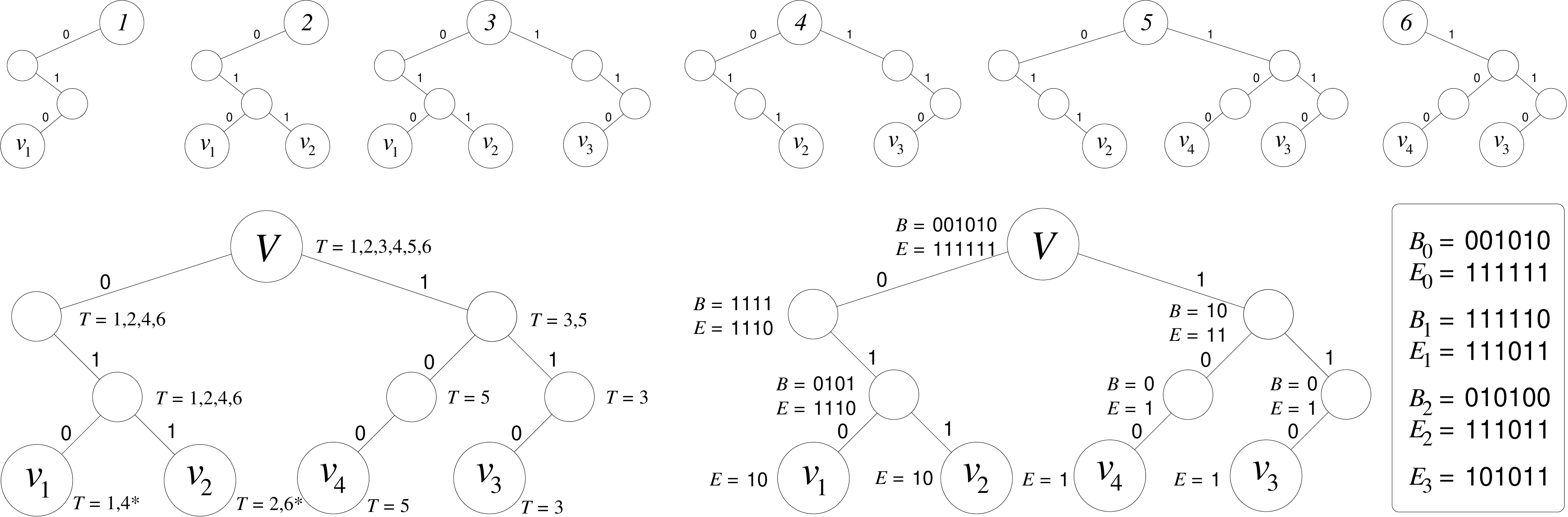}
\end{center}
\caption{On the top, the BTs of six consecutive timestamps, corresponding to inserting $v_1$ with value $010_2=2$, then $v_2$ with value $011_2 = 3$, and $v_3$ with value $110_2=6$. The leaf $v_1$ is then removed, $v_4$ is inserted with value $100_2=4$, and finally $v_2$ is removed. On the bottom left, our conceptual trie $V$ representing all the events, with the $v.T$ sequences of timestamps besides each node $v$; the deletions are marked with a $*$ at the leaves. On the right, the actual compact representation of $V$, first still with the tree topology and then just like the bitvectors $B_d$ and $E_d$. Bitvector $E_3$ can be made virtual, as explained.}
\label{fig:wtree}
\end{figure*}

Instead of storing $v.T[1,L_v]$ explicitly, we will store only a bitvector $v.B[1,L_v]$, where $v.B[i]=0$ iff $v.T[i]$ appears below the left child of $v$, and $v.B[i]=1$ if it appears below the right child. To efficiently support operation $\leap$, we also store a bitvector $v.E[1,L_v]$, where $v.E[i]=1$ iff at timestamp $v.T[i]$ there exists some tuple below $v$ whose last update was an insertion (i.e., the tuple exists at timestamp $i$). The leaves of $V$ store only their bitvector $v.E$, not $v.B$. 

\begin{example}
\new{At the bottom of Figure~\ref{fig:wtree}, the root's left child has $E[4]=0$ since the node does not exist at timestamp $T[4]=6$. Its right child has $B=0101$ since timestamps $T[1]=1$ and $T[3]=4$ go to its left child, whereas $T[2]=2$ and $T[4]=6$ go to the right.} \qed
\end{example}

Note that every update induces $2b\ell+1$ bits in $V$: one in the bitvectors $v.B$ of the proper ancestors $v$ of its leaf and one in the bitvectors $v.E$ of those nodes and the leaf itself. We can actually avoid storing $v.E$ at leaves $v$ because it is always an alternating sequence of $1$s and $0$s. Therefore, the total number of bits in $v.B$ and $v.E$ bitvectors of $V$ is $L\cdot 2b\ell = O(L\log N)$. The machine word must hold $\Omega(\log N)$ bits if it can address $\Theta(N)$ tuples in constant time; therefore this number of bits amounts to $O(L)$ words of space. Since the sum of the lengths $L$ of all the nodes below the time level amounts to the $2N$ updates in the graph, those $O(L)$ words amount to $O(N)$ over the whole LTJ trie.

We still have $O(L\log N)$ space to store the topology of $V$, however, as it stores $L$ paths of length $b\ell = O(\log N)$. We avoid the need to store those pointers by concatenating all the bitvectors $v.B$ (and, similarly, $v.E$) levelwise, left to right. Note that the total length of the bitvectors $v.B$ or $v.E$ in a given level is always $L$, so we have $b\ell$ bitvectors $B_d[1,L]$, for $d \in [0,b\ell-1]$ that can be stored as a large concatenated bitvector of length $Lb\ell$ (similarly, $E_d$ for $d \in [0,b\ell]$, where the level $d=b\ell$ can be made implicit, as explained). Those bitvectors are shown at the bottom-right of Figure~\ref{fig:wtree}.

\subsection{Navigation} % \textcolor{red}{al apéndice, o una parte?}} 
\label{sec:navig}

To navigate $V$ without pointers, we must know the ranges $B_d[s_v,e_v]$ and $E_d[s_v,e_v]$ where the bitvectors $B_v[1,L_v]$ and $E_v[1,L_v]$ are stored, for any node $v$ of depth $d$. For the root, these are just $B_0[1,L]$ and $E_0[1,L]$. For the rest, we make use of the function
$rank(B,i,j)$, which counts the number of $1$s in $B[i,j]$. This operation can be computed in constant time by storing just $o(|B|)$ additional bits~\cite{Cla96,Mun96}. Assume $B_d[s_v,e_v]$ is the area of $B_d$ corresponding to $v.B$. Then, letting $r := rank(B_d,s_v,e_v)$, the area for the left child of $v.B$
is $B_{d+1}[s_v,e_v-r]$, and the area for its right child is $B_{d+1}[e_v-r+1,e_v]$. We know we have reached a null pointer $v$ in $V$ when $s_v > e_v$. 

\begin{example}
\new{Let us start at the root in Figure~\ref{fig:wtree}, with $B_0[1,6]=001010$. To go to its left child, we compute $rank(B_0,1,6)=2$; thus its left child corresponds to $B_1[1,4]=1111$. If we want to go left again, we compute $rank(B_1,1,4)=4$, and find that its left child is null because it corresponds to $B_2[1,0]$.} \qed
\end{example}

Until now, our structure is a variant of the wavelet matrix \cite{CNO15}, but the bitvectors $E_d$ extend it so that we can traverse $V$ as if it were the VBT at some timestamp $p_l \in [1,L]$. We start at the root of $V$ (i.e., $d:=0$ and $[s_v,e_v] := [1,L]$) with the local offset $p:=p_l-s_v$ for $p_l$. If $E_d[s_v+p]=0$, then the VBT node was null at this time instant. Otherwise, we can go left or right. The timestamp $p$ becomes $p-rank(B_d,s_v,s_v+p)$ on the left child and $rank(B_d,s_v,s_v+p)-1$ on the right child. Note that $p$ cannot become negative, since that would mean that there have been no updates on the subtree of $V$ since the beginning, and thus the node would be null at timestamp $p$ (which we detect before entering the node).

\begin{example}
\new{Assume we want to go to the left child of the root at timestamp $p_l=6$ in Figure~\ref{fig:wtree}. Though $V$ has such a node, which is $B_1[1,4]$, we see that
it did not exist at timestamp $6$ since $E_1[1+3]=0$. This position $3$ is where we map the relative value $p=p_l-1=5$ at the root to relative value $p-rank(B_0,1,6) = 3$ at its left child.} \qed
\end{example}

With those operations, we can traverse any VBT top-down in constant time per movement to children, while the total space used by $V$ is $O(L)$ and the total space is $O(N)$ per LTJ trie. 

To support $\leap$ on this representation, recall that $\leap$ is applied on an LTJ trie node $[\ts_l,\te_l)$ of the time level, and aims to traverse its VBT (or BT) from its root $v$. 
Previously we describe this 
%It was described in Algorithm~\ref{alg:BTseek} 
on a VBT of height $\ell$, as we always perform $\leap$ on a single attribute, but now we have concatenated $b$ attributes in VBTs of height $b\ell$. 
However, since we can start at any node whose depth is a multiple of $\ell$, the procedure remains unchanged over blocks of $\ell$ elements. 
%We now describe how to carry out the operation $\leap$ on this representation. In Algorithm~\ref{alg:BTseek}, $\leap$ is applied on an LTJ trie node $[\ts_l,\te_l)$ of the time level, and aims to traverse its VBT (or BT) from its root $v$. It was described in Algorithm~\ref{alg:BTseek} on a VBT of height $\ell$, as we always perform $\leap$ on a single attribute. We have now concatenated $b$ attributes in VBTs of height $b\ell$. 
%However, because we can start at any node whose depth is a multiple of $\ell$, the procedure remains unchanged over blocks of $\ell$ elements. 
%
When run on $V$, we are only interested in the timestamps $[1,p_l]$, which represent the VBT of $[\ts_l,\te_l)$. We use $v.E$ as before to determine if a node is null at timestamp $p_l$. %See the pseudocode in Algorithm~\ref{alg:wtree} of Appendix~\ref{ap:pseudo}. 
See pseudocode in the extended version (Algorithm~\ref{alg:wtree}).
%
%The resulting pseudocode is given in Algorithm~\ref{alg:wtree} of Appendix~\ref{ap:pseudo}. 
This completes the proof of Theorem~\ref{thm:main}.

\section{Querying for points and intervals} \label{sec:points-and-intervals}

Definition~\ref{def:otherG} introduces several labeled graphs derived from a temporal graph $G$, in which it is of interest to answer standard BGPs. We now show how our linear-space data structure on $G$ can be used to answer BGPs over these graphs. We start with point-in-time and point-in-interval queries, corresponding to querying graphs $G_t$ and $G\Xany{t_1}{t_2}$, where $t$, $t_1$, and $t_2$ are given at  query time 
(Theorem \ref{thm:GtGanyt1t2-intro}). 
%Next, we show even stronger results when considering single triple patterns queries. 
We finish with $G\Xall{t_1}{t_2}$ and queries with duration. (Theorem \ref{thm:Gallt1t2duration-intro}).

\subsection{Point-in queries}

To answer a query $Q$ over $G_t$, we transform it into a temporal BGP by adding the constant $t$ as the fourth component in all the triples. Interestingly, if we force the variable order to descend first by the attribute $\textsc{t}=t$ we can show that the resulting algorithm takes exactly the same amount of steps that the normal LTJ algorithm on $Q$ would take, immediately giving us worst-case optimality over $G_t$. This proves the first part of Thm.~\ref{thm:GtGanyt1t2-intro}.

%\marginpar{\tiny \textcolor{red}{No se entenderá si mandamos la sec 6.2 al apéndice}}
For the case of $G\Xany{t_1}{t_2}$, we must extend the navigation of $G_t$ using VBTs, described in Section~\ref{sec:navig}, to intervals $[t_1,t_2)$. The key idea is that if $t_1$ and $t_2$ are represented by $p_1$ and $p_2+1$ at some node $[s_v,e_v]$, then $rank(E_d,s_v+p_1,s_v+p_2)=0$ iff the node $s_v$ did not exist along the whole period between the local offsets $p_1$ and $p_2$, that is, it did not exist in $G\Xany{t_1}{t_2}$.
Hence, we can quickly find the time intervals that overlap or are contained in $[t_1,t_2)$, and continue the LTJ algorithm while keeping track of these intervals. 
This yields the second part of Thm.~\ref{thm:GtGanyt1t2-intro}; 
details in 
the extended version (Appendix~\ref{ap:point-in-queries}).

We note that the so-called ``join-first'' strategy, which solves $Q$ on the labeled graph $G_\mathrm{all} = G\Xany{1}{T}$ of all the tuples that ever existed and then filters the results by time, can offer the AGM bound only on $|G_\mathrm{all}|$, whereas our result bounded by $|G\Xany{t_1}{t_2}|$ is the best one can hope for: the algorithm is wco on the labeled graph that has exactly the triples we want to consider. 

Given a single triple pattern $(s,p,o)$, our algorithm lists all matches that existed during the interval  $[t_1,t_2)$ in instance-optimal time: $O(\log N)$ per reported triple. We can also \emph{list} all the maximal intervals where each such match existed during $[t_1,t_2)$, each in $O(\log N)$ time. This also permits tracking the differences between $G_{t_1-1}$ and $G_{t_2-1}$: an important operation on versioned graphs. 
Neither ``join-first'' nor ``time-first'' strategies can handle these queries near-optimally. 
Full proof in the extended version (Appendix~\ref{ap:point-in-queries}). 
%For space reasons we leave details to Appendix~\ref{ap:point-in-queries}. 

\begin{theorem}
Let $G$ be a temporal graph with $N$ tuples. Then, there is a data
structure using $O(N)$ space that can report all the $occ$ occurrences of a single-triple-pattern query $Q=\{(s,p,o)\}$ in the labeled graph $G\Xany{t_1}{t_2}$, for any time interval $[t_1,t_2)$ given with $Q$, in time $O((1+occ)\log N)$.
It can also list each maximal time interval where each occurrence appears in time $O(\log N)$.
\end{theorem}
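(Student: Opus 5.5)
We reuse the $O(N)$-space structure of Section~\ref{sec:linear}. The key observation is that a single triple pattern can be answered inside one LTJ trie whose order puts the constants of $Q$ first, then the time level, then the variables. The constants are $0$ to $3$ of the components $s,p,o$. For example, for $(s,y,z)$ we use the order \textsc{stpo}, and for a pattern with no constants we use \textsc{tspo}.

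\textbf{Step 1: locate the time-level node.} We descend by the constants in $O(\log N)$ time, reaching a node $u$ whose children are the time-level intervals $[\ts_l,\te_l)$. These intervals share the single compact trie $V$ of Section~\ref{sec:linear}, which has $L$ updates. By binary search on the stored values $\ts_l$, we map $t_1$ and $t_2$ to update offsets $p_1$ and $p_2$. Here $p_1$ is the last update at or before $t_1$, and $p_2$ is the last update before $t_2$. A leaf of $V$ is then a match in $G\Xany{t_1}{t_2}$ if and only if its tuple exists at some offset in $[p_1,p_2]$.

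\textbf{Step 2: characterize live nodes by one rank query.} We use the criterion stated above. Let $p_1,p_2$ be mapped into the range $[s_v,e_v]$ of a node $v$ by the same rank arithmetic used in Section~\ref{sec:navig}. Then the subtree of $v$ contains a live tuple during the window if and only if $rank(E_d,s_v+p_1,s_v+p_2)>0$.

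To prove this, recall that $v.E[i]=1$ exactly when some tuple below $v$ exists right after update $v.T[i]$. The existence status below $v$ changes only at updates that occur below $v$. So the status at the window start equals $v.E$ at the last local update not after $p_1$, and the later local positions up to $p_2$ cover every other change inside the window. We must handle one boundary case: when no local update precedes $p_1$, the node is empty at the start and that position is simply dropped.

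\textbf{Step 3: enumerate matches.} We do a depth-first traversal of $V$ from the root and prune every child that fails the test of Step 2. Each test costs $O(1)$. Every node we visit has a live leaf below it, so each visited node lies on a root-to-leaf path of some reported occurrence. With height $b\ell=O(\log N)$, the total cost is $O((1+occ)\log N)$. This is the main obstacle: we must ensure the pruning test is exact, so that we never descend into dead subtrees. Otherwise the output-sensitivity fails.

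\textbf{Step 4: list maximal intervals.} At a reported leaf, the local positions in $[p_1,p_2]$ alternate between insertion and deletion, since $v.E$ at leaves alternates between $1$ and $0$. Each insertion/deletion pair gives one maximal interval.

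To convert a local position at the leaf back into a global update index $1..L$, we walk up to the root. Going from a child to its parent requires $select$ on $B_d$, which also takes constant time with $o(|B|)$ extra bits. The walk therefore costs $O(\log N)$ per endpoint. We then map the global index to the corresponding $\ts_l$ with a stored array, and clip the resulting interval to $[t_1,t_2)$. This gives $O(\log N)$ per listed interval.

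This procedure also reports the differences between $G_{t_1-1}$ and $G_{t_2-1}$: they are exactly the leaves that have updates within the window.
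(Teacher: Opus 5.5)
Your proposal is correct and is essentially the paper's argument. Both use binary search at the time level, a $rank$ test on the $E_d$ bitvectors so that the traversal of $V$ visits only nodes on root-to-leaf paths of reported matches, and $select$-based upward walks on the $B_d$ bitvectors to turn leaf events into time instants. The only deviation is that you descend by the constants before the time level (e.g., \textsc{stpo}) rather than inside $V$ of a time-first trie. This is immaterial, since all orders are stored. Your liveness test, which includes the state at $t_1$, is in fact more careful than the appendix formulation that starts counting at offset $p_{a-1}+1$.

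One small inaccuracy lies outside the statement. The differences between $G_{t_1-1}$ and $G_{t_2-1}$ are not all leaves with updates in the window, only those with an odd number of them. A tuple inserted and later deleted inside the window is unchanged.
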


\subsection{Queries with duration} \label{sec:duration-queries}

In some cases we are interested in requiring that the solutions to the queries last for some time. We consider two cases of such queries. The first establishes a time interval and looks for solutions that always hold during this interval. According to our definitions, this amounts to running the query over $G\Xall{t_1}{t_2}$; recall Definition~\ref{def:otherG}. A second case does not fix the exact times, but sets a minimum duration $\delta$ along which the reported solutions must hold \cite{HSGAY22}.

In order to query $G\Xall{t_1}{t_2}$ in optimal time, we can reuse some of the ideas used to query $G\Xany{t_1}{t_2}$. Indeed, we have that 
%\marginpar{\tiny \textcolor{red}{Tb necesita algo de la sec 6.2 para entenderse}} 
$rank(E_d,s_v+p_1,s_v+p_2)=p_2-p_1+1$ iff the node $s_v$ exists throughout the whole period between the local offsets $p_1$ and $p_2$. Therefore, we can use Algorithm~\ref{alg:wtreeRange} with a single change to require that all the bits in the area of $E_d$ are 1s, not just one of them. This may not be optimal in the size of $G\Xall{t_1}{t_2}$, because we may be stuck exploring a BT node that existed all along $[t_1,t_2)$, while none of its children did. However, the algorithm is still wco with respect to the size of $G_t$ for {\em any} $t \in [t_1,t_2)$: if the subtree of a node $v$ at depth $d$ does not exist in $G_t$, then the corresponding bit $E_d$ will be zero and the algorithm will not attempt to explore it. Instead of satisfying the AGM bound on $|G\Xall{t_1}{t_2}| = |\cap_{t \in [t_1,t_2)} G_t|$, we satisfy it in terms of $\min_{t \in [t_1,t_2)} |G_t|$. This proves the first part of Thm.~\ref{thm:Gallt1t2duration-intro}.

For the case where we look for solutions of a query $Q$  that hold over a minimum duration $\delta$, we first solve the query on $G$, adding the same variable $t$ as the fourth component of all the triple patterns, and use the LTJ tries where the time component {\sc t} is at the end, as in join-first approaches. For the time level we use a more sophisticated data structure, based on geometric grids, that allows us to quickly discard time intervals smaller than $\delta$ while at the same time storing the children of each LTJ trie node $v$ in an efficient way; see the extended version (Appendix~\ref{ap:duration-queries}).  This is the second part of Thm.~\ref{thm:Gallt1t2duration-intro}.

\section{Experiments}

Appendix~\ref{ap:implem}, in the extended version, gives details on the implementation of our index.
In this section we evaluate its \new{empirical} performance over \new{real} temporal graphs\new{, in two scenarios. 
First, we evaluate our index on a temporal graph and real queries from Wikidata, aiming to assess its practicality and the advantage of being able to bind the time at any point, not just at the beginning (time-first) or at the end (join-first). Second, we compare our index with two other systems that offer comparable functionality \cite{HSGAY22,ZhuFY21} over real temporal graphs with synthetic queries.}

We ran our experiments on a 16-core Intel Xeon Silver 4110 with Debian 5.10.127-2, clocked at 2.1 GHz, with 768 GB of RAM and 10 MB cache. Our C++ code was compiled with \texttt{-Ofast}. 
Our benchmark, code and datasets are available on Github. %\textcolor{red}{OJO datasets tb} %~\cite{ourgithub}.
%\textcolor{red}{falta algo? uname -a dice "Linux shannon 5.10.0-0.deb10.16-amd64 $\#1$ SMP Debian 5.10.127-2~bpo10+1 (2022-07-28) x86$\_$64 GNU/Linux", no lo sé resumir :-)}

%We focus on three questions. The first evaluates whether the asymptotic guarantees of our index result in an implementation that is efficient in practice, rather than merely optimal in theory.
%\begin{itemize}
%\item[\textbf{Q1}:] How do the theoretical time and space bounds of our index structure translate into practical performance?
%\end{itemize}
%An aspect overlooked in the theory of worst-case optimal algorithms is variable binding order, which does not affect theoretical bounds, but does affect performance in practice. Hence we ask:
%\begin{itemize}
%\item[\textbf{Q2}:] How do different variable elimination orders affect the performance of our algorithm? Which ordering strategy provides the best results in terms of performance?
%\end{itemize}
%A key advantage of our index structure is its support for variable orderings that bind time variables across graph variables. Our third question evaluates the practical impact of this flexibility.
%\begin{itemize}
%\item[\textbf{Q3}:] Can we design efficient variable ordering strategies $%beyond time-first and join-first? %How important are such strategies for the overall performance of our algorithms?
%\end{itemize}

\subsection{\new{Wikidata with real query logs}}

\paragraph{Benchmark}

We perform experiments over the Wikidata knowledge graph~\cite{VrandecicK14}. We choose Wikidata as it provides a (1) large-scale, (2) real-world, (3) diverse knowledge graph featuring (4) temporal annotations, further providing a comprehensive log of (5) real-world queries~\cite{MalyshevKGGB18}. As per many real-world temporal graphs, some relations include temporal annotations, while others are asserted without temporal qualification (and considered universally valid).

\begin{table}[t]
\caption{\new{The datasets we use, including space in bytes per tuple and construction time in microseconds per edge.}}
\setlength{\tabcolsep}{4pt}
%\vspace*{-3mm}
\begin{tabular}{lrrrrr}
\toprule
Dataset & \multicolumn{1}{c}{$N$} & \multicolumn{1}{c}{$|\U_G|$} & \multicolumn{1}{c}{$|\T_G|$} & Space & Time \\
\midrule
Wiki  & 844,172,299 & 116,145,285 & 155,956 & 241 & 38.0 \\
WikiT & 15,943,638 & 9,384,937  & 155,956 & 230 & 34.7 \\
\midrule
Divvy & 21,243,344 & 7,158     & 20,673,270 & 190 & 25.0 \\
Yellow & 30,003,832 & 8,552 & 20,797,780 & 194 & 29.1 \\
%Stack & 63,497,050 & 2,601,977 & 52,423,237 & 256 & 42.5 \\
Caida & 15,792,089 %80,355,707 
      & 110,704   & 336        & 141 & 20.8 \\
\bottomrule
\end{tabular}

\label{tab:datasets}
\end{table}

We extract the temporal graph from Wikidata considering all \new{$844$ million triples (see Table~\ref{tab:datasets})} % $N=844{,}172{,}299$ triples 
consisting of items and properties. We extract temporal qualifiers on Wikidata statements, where available, for start time ($\ti$), end time ($\tf$), and point in time ($\ti$ and $\tf$). 
%Since temporal values have different levels of precision, we map them to a uniform precision of date (i.e., day-level). 
In the resulting temporal graph, \new{about 16 million} %15,943,638 
triples have non-trivial temporal annotations. 
\tg{We consider two graphs: the \textit{full graph} includes all %844,172,299 
triples including both universally-valid and temporally-restricted relations, while the \textit{transient graph} \new{(WikiT in Table~\ref{tab:datasets})} only includes the \new{16 million} %15,943,638 
triples with non-trivial temporal annotations. \new{The full graph has 12,894 distinct predicates; the transient one 1,161.}}{\new{The graph features 12,894 distinct predicates.}}
%Both graphs include $|\T_G| = 155{,}956$ unique time instances. %The temporal graph contains $|\U_G| = 116{,}158{,}179$ identifiers, coming from 116,145,285 nodes and 12,894 unique predicates, and $|\T_G| = 155{,}956$ unique time instances.

We extract 1000 BGPs from the real-world query logs published for Wikidata~\cite{MalyshevKGGB18}; more details are given in the extended version (Appendix~\ref{app:bgps}). We filter queries whose constants do not occur in the \tg{full and transient graphs, resulting in 974 and 718 BGPs, respectively}{graph, resulting in 974 queries}. We run queries with a result limit of 1,000 and a timeout of 600 seconds.

%We filter disconnected and duplicate BGPs (modulo graph isomorphism). In order to ensure that the BGPs we use for experiments are likely to touch the non-trivial temporal annotations, we define a score to prioritize BGPs with more triple patterns whose predicates have a higher ratio of triples with non-trivial temporal annotations. More specifically, for each predicate $p_i$, in case it is a constant, we define $r_i$ as the ratio of triples with predicate $p_i$ that have non-trivial temporal annotations in the graph; otherwise, if $p_i$ is a variable, we define $r_i$ as 0. For each BGP extracted from the log of the form $Q = \{(s_1,p_1,o_1), \ldots, (s_m,p_m,o_m)\}$, we compute the score $\sum_{i = 1}^m r_i$. We then extract the top 1,000 BGPs from the query logs per this score. Further filtering BGPs with constants not appearing in the graph, we arrive at 972 queries.  
%Finally, we extend each triple pattern with a shared temporal variable $v \in \Vt$ to generate $Q' = \{(s_1,p_1,o_1,v), \ldots, (s_m,p_m,o_m,v)\}$. 

\paragraph{Indexing} \new{Table~\ref{tab:datasets} shows that our index \tg{built on the full graph}{} (Wiki) uses 241 bytes per edge, or 60} %240.7 bytes per edge, or 60.2 
32-bit words. This is 12 times the space needed to store the edges in plain form (as five 32-bit integers). For comparison, a standard index on non-temporal graphs uses 157 bytes per edge~\cite{cltj}. From the total space, 84\% is used by the 6 tries starting with the time component, with \textsc{tspo}, \textsc{tpos}, and \textsc{tosp} being the largest tries  (see the extended version, Appendix~\ref{ap:implem}).
%Figure~\ref{fig:metatrie} in Appendix~\ref{app:bgps}).

Indexing \tg{the full graph} took 9 hours,
%: 38 %\textcolor{red}{8.99 hours: 37.91 
%microseconds per edge. It grows 
growing linearly with the data size (verified by indexing increasing subsets). About 9\% of the time was for building the 12 regular tries; the rest is used to build
the 6 tries starting with the time component: building the linear-space data structure we designed in Section~\ref{sec:linear} requires sorting the data $\lceil \log_2 |\U_G| \rceil = 28$ times per level, not just once per the standard tries (those use faster radix-sorts, however). Our construction runs essentially in-place, using no more space than the final index.

%\paragraph{Queries} For this proof-of-concept experiment, we illustrate the difference between solving the queries with a pure time-first approach, a pure join-first approach, and a general strategy that can bind the time variables at any point. We choose the next variable to bind according to its estimated selectivity: we bind first the variables whose trie nodes have the least children, as in previous work \cite{HRRSiswc19} (we also force that the next variable to bind is connected to a previously bound one, when possible). .....

\paragraph{Variable elimination} \new{A standard LTJ strategy \cite{HRRSiswc19} is to choose as the next variable the one whose trie nodes have the least amount of children among those (if possible) that are connected to a previously bound variable. Lonely variables (i.e., those appearing once in the BGP) are bound at the end. For temporal graphs we modify this strategy in three ways: join-first (\textbf{JF}) leaves the temporal variables to the end, time-first (\textbf{TF}) binds the temporal variables first, and least-children (\textbf{LC}) treats the time as any other variable (the ability to do this  distinguishes our index from previous work). For nodes at the level {\sc t}, we divide the total temporal span divided by their number of children, to estimate the number of children of its nodes.} 
%%\begin{itemize}
%\begin{description}%[leftmargin=0cm, labelwidth=1.2cm, style=nextline]
%\item[\textbf{JF}] (Join-First): The order is decided first as if the query did not have a time component, hence doing a standard LTJ join, as in e.g. \cite{AGHNRRStods24,HRRSiswc19}, and then we process all time variables. 
%\item[\textbf{TF}] (Time-first): First we process all time variables, and then proceed with a non-temporal LTJ ordering. 
%\item[\textbf{LC}] (Least children): A general strategy used in LTJ \cite{HRRSiswc19} in which the next variable is the one whose trie nodes have the least amount of children, while requiring that the next variable to bind is connected to a previously bound one, when possible. % \textcolor{red}{Diego: explicar la estrategia que usas ahora}
%%This strategy was used before in an LTJ context \cite{HRRSiswc19}.
%%\item[\textbf{BEST}]: This strategy simulates an oracle that always chooses the best ordering among \textbf{JF}, \textbf{TF} and \textbf{LC}. 
%\end{description}

%\end{itemize}

\paragraph{Results}

% \begin{table}[t]
% \caption{Median and average times (seconds) for TF, JF, LC, and BEST over all queries and queries with no results ($\emptyset$}
% \centering
% \begin{tabular}{lrrrr}
% \hline
%  & \textbf{median} & \textbf{average} & \textbf{median} ($\emptyset$) & \textbf{average} ($\emptyset$) \\
% \hline
% TF & 2.710 & 1670.15 & 0.021 & 1244.000 \\
% JF & 0.058 & 7.34 & 0.019 & 0.259 \\
% LC & 0.226 & 1631.64 & 0.018 & 1217.000 \\
% BEST & 0.058 & 7.05 & 0.017 & 0,257 \\
% \hline
% \end{tabular}
% \label{tab:full}
% \end{table}

\begin{figure}
\begin{tikzpicture}[baseline=(current bounding box.center)]
\begin{axis}[
    ymode=log, % recommended here, since your values range up to 6000
    boxplot/draw direction=y,
    height=3.5cm,
    width=4.5cm,
    xtick={1,2,3,4},
    xticklabels={
        JF,
        LC,
        TF,
        BEST
    },
    %x tick label style={rotate=20, anchor=east},
    ylabel={milliseconds},
    ytick={1e-3,1e0,1e3,1e6},
    yticklabels={$10^{-3}$,$10^{0}$,$10^{3}$,$10^{6}$}
]

% 	Join-first	Time-first	General VEO
% Median	0.000058365	0.002714711	0.000226887
% 75th	0.00414101375	0.03956103725	0.0323174515
% 25th	0.000018444	0.00002060625	0.00001725125
% Max	0.870441613	61.02413128	61.02209588
% Min	0.00000422	0.000003818	0.000003506

% Join-first
\addplot+[
    boxplot prepared={
        upper whisker= 956.015974, %2254.434423
        upper quartile= 0.09942325, %3.4128565
        median= 0.0599375, %%0.07081,
        lower quartile= 0.01872025, %0.01678425, %0.000018444,
        lower whisker= 0.005213 % %0.00389 %0.00000422
    },
] coordinates {};

% General VEO
\addplot+[
    boxplot prepared={
        upper whisker= 367.957499, %60, %61.02209588,
        upper quartile= 0.1437915, %0.0323174515,
        median= 0.049788, %0.000226887,
        lower quartile= 0.01445575, %0.00001725125,
        lower whisker= 0.002988 %0.000003506
    },
] coordinates {};

% Time-first
\addplot+[
    boxplot prepared={
        upper whisker= 600000, %60, % 61.02413128,
        upper quartile= 5.4751015, %0.03956103725,
        median= 0.1849555, %0.002714711,
        lower quartile= 0.01116675, %0.00002060625,
        lower whisker= 0.002687 %0.000003818
    },
] coordinates {};

% % BEST
% \addplot+[
%     boxplot prepared={
%         median=0.0058365,
%         upper quartile=0.38146295,
%         lower quartile=0.0015986,
%         upper whisker=87.0441613,
%         lower whisker=0.0003506
%     },
% ] coordinates {};

\end{axis}
\end{tikzpicture}
%\new{
%$~~$\begin{tabular}{r|rrr}
%\multicolumn{1}{l}{} & JF & $<$TF & $<$LC \\\cline{2-4}
%JF & -- & 805 & 641 \\
%TF & 167 & -- & 287 \\
%LC & 331 & 684 & -- \\
%\hline
%Avg   & 9.43 & 8010 & 1.06 \\
%\end{tabular}}
\hfill\new{
\def\arraystretch{1.2}
$~~$\begin{tabular}{r|rrr}
\multicolumn{1}{l}{} & JF & TF & LC \\\cline{2-4}
JF$<$ & -- & 549 & 299 \\
TF$<$ & 425 & -- & 354 \\
LC$<$ & 675 & 620 & -- \\
 \multicolumn{1}{l}{\cellcolor{gray!15} Avg}   &  \cellcolor{gray!15} 2.18 &  \cellcolor{gray!15} 8,010 & \cellcolor{gray!15} 1.06 \\
\end{tabular}}

\tg{\vspace{2mm}

\begin{tikzpicture}[baseline=(current bounding box.center)]
\begin{axis}[
    ymode=log, % recommended here, since your values range up to 6000
    boxplot/draw direction=y,
    height=3.5cm,
    width=5cm,
    xtick={1,2,3,4},
    xticklabels={
        JF,
        TF,
        LC,
        BEST
    },
    %x tick label style={rotate=20, anchor=east},
    ylabel={Milliseconds}
]

% 	Join-First	Time-First	General VEO
% Median	0.000046042	0.003008529	0.0000408525
% 75th	0.00015940325	0.101319412	0.0011892095
% 25th	0.00001993475	0.00001840275	0.000015224
% Max	0.224171369	61.00326338	61.00846217
% Min	0.000005024	0.000003559	0.000002911

% Join-first
\addplot+[
    boxplot prepared={
        upper whisker= 2296.886522, %0.224171369,
        upper quartile= 0.14016525, %0.00015940325,
        median= 0.047523, %0.000046042,
        lower quartile= 0.016904,%0.00001993475,
        lower whisker= 0.005004%0.000005024
    },
] coordinates {};

% Time-first
\addplot+[
    boxplot prepared={
        upper whisker= 112964.3072, %60, %61.00326338,
        upper quartile= 95.087153,%0.101319412,
        median= 2.927471,%0.003008529,
        lower quartile= 0.0161495, %0.00001840275,
        lower whisker= 0.004449%0.000003559
    },
] coordinates {};

% General VEO
\addplot+[
    boxplot prepared={
        upper whisker= 430.783787, %60, %61.00846217,
        upper quartile= 0.182011,%0.0011892095,
        median= 0.0485145,%0.0000408525,
        lower quartile= 0.016526, %0.000015224,
        lower whisker= 0.005206 %0.000002911
    },
] coordinates {};

\end{axis}
\end{tikzpicture}
%\new{
%$~~$\begin{tabular}{l|rrr}
%\multicolumn{1}{l}{} & JF & $<$TF & $<$LC \\\cline{2-4}
%JF & -- & 575 & 200 \\
%TF & 145 & -- & 129 \\
%LC & 520 & 591 & -- \\
%\hline
%Avg & & & \\
%\end{tabular}}
\new{
$~~$\begin{tabular}{r|rrr}
\multicolumn{1}{l}{} & $<$JF & $<$TF & $<$LC \\\cline{2-4}
JF & -- & 554 & 381 \\
TF & 164 & -- & 155 \\
LC & 337 & 563 & -- \\
\hline
Avg & 6.37 & 723 & 2.19\\
\end{tabular}}}{}

%\vspace*{-3mm}
\caption{Left: Boxplots of runtimes in msec. Right: number of queries for which the approach in the row runs faster than that in the column\new{, with average times at the bottom}. \tg{Top: full graph. Bottom: transient graph.}{}} 
\label{fig:boxplot}
\end{figure}

\begin{figure}[t]
\begin{tikzpicture}[baseline=(current bounding box.center)]
\begin{axis}[
    ymode=log, % recommended here, since your values range up to 6000
    boxplot/draw direction=y,
    height=3.5cm,
    width=3.0cm,
    xtick={1,2,3,4},
    xticklabels={
        JF,
        LC
    },
    %x tick label style={rotate=20, anchor=east},
    ylabel={milliseconds},
    ytick={1e-2,1e0,1e2,1e6},
    yticklabels={$10^{-2}$,$10^{0}$,$10^{2}$,$10^{6}$}
]

% 	Join-first	Time-first	General VEO
% Median	0.000058365	0.002714711	0.000226887
% 75th	0.00414101375	0.03956103725	0.0323174515
% 25th	0.000018444	0.00002060625	0.00001725125
% Max	0.870441613	61.02413128	61.02209588
% Min	0.00000422	0.000003818	0.000003506

% Join-first
\addplot+[
    boxplot prepared={
        upper whisker= 104.967825, %2254.434423
        upper quartile= 0.10547175, %3.4128565
        median= 0.066267, %%0.07081,
        lower quartile= 0.02116325, %0.01678425, %0.000018444,
        lower whisker= 0.005487 % %0.00389 %0.00000422
    },
] coordinates {};

% General VEO
\addplot+[
    boxplot prepared={
        upper whisker= 104.198847, %60, %61.02209588,
        upper quartile= 0.1315925, %0.0323174515,
        median= 0.065423, %0.000226887,
        lower quartile= 0.02016625, %0.00001725125,
        lower whisker= 0.005845 %0.000003506
    },
] coordinates {};

% % BEST
% \addplot+[
%     boxplot prepared={
%         median=0.0058365,
%         upper quartile=0.38146295,
%         lower quartile=0.0015986,
%         upper whisker=87.0441613,
%         lower whisker=0.0003506
%     },
% ] coordinates {};

\end{axis}
\end{tikzpicture}
\begin{tikzpicture}[baseline=(current bounding box.center)]
\begin{axis}[
    ymode=log, % recommended here, since your values range up to 6000
    boxplot/draw direction=y,
    height=3.5cm,
    width=6.0cm,
    xtick={1,2,3,4},
    xticklabels={
        JF,
        LC,
        TF,
        Hu et al.
    },
    %x tick label style={rotate=20, anchor=east},
    %ylabel={milliseconds},
    ytick={1e-3,1e0,1e3,1e6},
    yticklabels={$10^{-3}$,$10^{0}$,$10^{3}$,$10^6$}
]

% 	Join-first	Time-first	General VEO
% Median	0.000058365	0.002714711	0.000226887
% 75th	0.00414101375	0.03956103725	0.0323174515
% 25th	0.000018444	0.00002060625	0.00001725125
% Max	0.870441613	61.02413128	61.02209588
% Min	0.00000422	0.000003818	0.000003506

% Join-first
\addplot+[
    boxplot prepared={
        upper whisker= 4532.127405, 
        upper quartile= 120.182667,
        median= 2.4871375,
        lower quartile = 0.02860425,
        lower whisker= 0.003889
    },
] coordinates {};

% LC
\addplot+[
    boxplot prepared={
        upper whisker= 4576.769716,
        upper quartile= 120.370636,
        median= 5.498376,
        lower quartile= 0.0350075,
        lower whisker= 0.004353
    },
] coordinates {};

% TF
\addplot+[
    boxplot prepared={
        upper whisker= 601003.121086,
        upper quartile= 62714.71236125,
        median= 729.42662,
        lower quartile= 22.95212425,
        lower whisker= 0.00187
    },
] coordinates {};

% Hu et al.
\addplot+[
    boxplot prepared={
        upper whisker= 38102.565021, 
        upper quartile= 14077.96570125,
        median= 1831.8341785,
        lower quartile= 292.4659805,
        lower whisker= 10.067619
    },
] coordinates {};

% % BEST
% \addplot+[
%     boxplot prepared={
%         median=0.0058365,
%         upper quartile=0.38146295,
%         lower quartile=0.0015986,
%         upper whisker=87.0441613,
%         lower whisker=0.0003506
%     },
% ] coordinates {};

\end{axis}
\end{tikzpicture}

%\vspace*{-3mm}
\caption{\new{Left: boxplots of runtimes in msec for the query with clause $t_1 \le t_2$. Right: comparison with Hu et al.~\cite{HSGAY22} for point-in-time queries on WikiT.}}
\label{fig:boxplot2}
\end{figure}

% \begin{table}

% \caption{Number of queries for which the approach in the row gives a lower runtime than the approach in the column} \label{tab:matrix}
% \end{table}

Figure~\ref{fig:boxplot} shows boxplots of query times\tg{ on the full graph}{}, along with \new{averages and} number of queries for which one approach is faster than another. 
% Table \ref{tab:full} provides the median and average times for all queries in our benchmark. 
\new{Our \textbf{LC} strategy is on average twice as fast as {\bf JF} and orders of magnitude faster than {\bf TF}. The boxplots of {\bf LC} and {\bf JF} seem comparable, yet {\bf JF} is 20\% slower in the median. The boxplots and the averages show that {\bf TF} has many bad cases, making it an unstable strategy. By not always binding the temporal variables at the start (per \textbf{TF}) nor at the end (per \textbf{JF}), the \textbf{LC} strategy outperforms \textbf{TF} in 64\% of the queries and \textbf{JF} in 69\%. The average of the query-wise minima of the three times is 0.94 msec, just 11\% faster than \textbf{LC}.}

\tg{The results for \textbf{LC} are not better overall because a relatively small fraction of our graph contains temporal information, while other edges are valid all the time, making the time variable less selective and thus making it more convenient to eliminate it at the end (per \textbf{JF}). To control for this effect, Figure~\ref{fig:boxplot} (bottom) shows the corresponding results for the transient graph, where \textbf{LC} is now the best strategy on more than 70\% of the queries, the best in the median case, though still outperformed by \textbf{JF} at higher percentiles.}{}

%Ya lo explicamos en algún lado, y aquí nos complica un poco el relato.
%One important advantage of the join-first strategy is that, with the time domain being processed at the end, one avoids unnecessary partitions in the time domain, which often results in a lower number of intervals returned. In contrast, the time-first strategy tends to return many more intervals, even though both answers are equivalent: it is just that the answers for time-first are divided into smaller intervals. %This can also be seen in the time statistics for queries that do not return results: in these cases, time-first becomes much more competitive (especially at the median query), and the least-children ordering performs best in the median case. 

% \begin{table}[t]
% \caption{Comparison of median and average times (seconds) for TF, JF, LC, and BEST Over queries without results.}
% \centering
% \begin{tabular}{lrr}
% \hline
%  & \textbf{median} & \textbf{average} \\
% \hline
% TF & 0,021 & 1244 \\
% JF & 0,019 & 0,259 \\
% LC & 0,018 & 1217 \\
% BEST & 0,017 & 0,257 \\
% \hline
% \end{tabular}
% \label{tab:no-results}
% \end{table}

\begin{table*}[t!]
\caption{Average time in msec to solve the different shapes on Divvy, with different windows of the time domain (100\%, 10\%, 1\%) and limiting results to 100,000 \new{(except on the 100\% windows, which set no limit to match Hu et al.~\cite{HSGAY22})}. Times from Zhu et al.~\cite{ZhuFY21} are approximated from their plots. \new{(*): Early termination due to many queries exceeding the 60,000 msec timeout.}}
\vspace*{-2mm}
\begin{tabular}{clrrrrrrrrrrrr}
\toprule
& System & 3-star & 4-star & 5-star & 3-chain & 4-chain & 5-chain & 3-circle & 4-circle & 5-circle & diamond & 4-clique & 5-clique \\
\midrule
\multirow{3}{1mm}{\rotatebox[origin=c]{90}{100\%}}
& \new{Hu et al.~\cite{HSGAY22}} & \new{250} & \new{349} & \new{454} & \new{180} & \new{206} & \new{235} & \new{262} & \new{2,382} & \new{*} & \new{3,614} & \new{1,389} & \new{*} \\
& \new{Ours ({\bf LC})} & \new{100} & \new{116} & \new{138} & \new{57} & \new{85} & \new{1,054} & \new{17} & \new{213} & \new{3,263} & \new{150} & \new{62} & \new{638}\\
& \new{Ours ({\bf TF})} & \new{146} & \new{169} & \new{195} & \new{46} & \new{53} & \new{52} & \new{47} & \new{50} & \new{56} & \new{96} & \new{55} & \new{63}\\
\midrule
\multirow{2}{1mm}{\rotatebox[origin=c]{90}{10\%}}
& Zhu et al.~\cite{ZhuFY21} & 50 & 50 & 50 & 50 & 50 & 70 & 150 & 150 & 70 & 100 & 300 & 700 \\
%& Time \cite{ZhuFY21} & 2,000 & 2,000 & 2,000 & 2,000 & 2,000 & 5,000 & 5,000 & 6,000 & 7,000 & 5,000 & 20,000 & 40,000 \\
& Ours ({\bf TF}) & 12.9 & 23.9 & 18.4 & 4.30 & 6.04 & 4.52 & 4.36 & 4.39 & 4.98 & 8.60 & 4.39 & 6.58 \\
\midrule
\multirow{1}{1mm}{\rotatebox[origin=c]{90}{1\%}}
& \new{Ours ({\bf TF})} & \new{2.22} & \new{4.22} & \new{2.39} & \new{0.97} & \new{1.07} & \new{0.83} & \new{0.80} & \new{0.86} & \new{1.04} & \new{2.06} & \new{1.18} & \new{1.13} \\
\bottomrule
\end{tabular}
%\begin{tabular}{l|rrr}
%Shape & TSRJoin \cite{ZhuFY21} & Time \cite{ZhuFY21} & Ours \\
%\hline
%3-star & 50 & 2,000 & 12.85 \\
%4-star & 50 & 2,000 & 23.93 \\
%5-star & 50 & 2,000 & 18.44 \\
%3-chain & 50 & 2,000 & 4.30 \\
%4-chair & 50 & 2,000 & 6.04 \\
%5-chain & 70 & 5,000 & 4.52 \\
%3-circle & 150 & 5,000 & 4.36 \\
%4-circle & 150 & 6,000 & 4.39 \\
%5-circle & 70 & 7,000 & 4.98 \\
%diamond & 100 & 5,000 & 8.60 \\
%4-clique & 300 & 20,000 & 4.39 \\
%5-clique & 700 & 40,000 & 6.58 \\
%\end{tabular}
\label{tab:divvy}
\end{table*}

\begin{table*}[t!]
\caption{\new{Average time in msec on three collections with our system. We use a 10\% time window and limit the results to 100,000}\label{tab:zhu}}
\vspace*{-2mm}
\begin{tabular}{lrrrrrrrrrrrr}
\toprule
Dataset & 3-star & 4-star & 5-star & 3-chain & 4-chain & 5-chain & 3-circle & 4-circle & 5-circle & diamond & 4-clique & 5-clique \\
\midrule
\new{Divvy} & 12.9 & 23.9 & 18.4 & 4.30 & 6.04 & 4.52 & 4.36 & 4.39 & 4.98 & 8.60 & 4.39 & 6.58 \\
\new{Yellow} & \new{0.23} & \new{0.12} & \new{0.47} & \new{2.80} & \new{20.7} & \new{401} & \new{17.0} & \new{346} & \new{13,913} & \new{197} & \new{204} & \new{34.6} \\
%\new{Yellow ({\bf TBL})} & \new{0.11} & \new{0.11} & \new{0.13} & \new{1.99} & \new{37.6} & \new {1,447} & \new{20.9} & \new{528} & \new{26,086} & \new{459} & \new{327} & \new{11,765} \\
%\new{Yellow ({\bf LC})} & \new{0.23} & \new{0.12} & \new{0.47} & \new{2.80} & \new{20.7} & \new{390} & \new{16.9} & \new{353} & \new{13,875} & \new{196} & \new{204} & \new{34.7} \\
%\hline
\new{Caida} & \new{230} & \new{240} & \new{234} & \new{236} & \new{242} & \new{247} & \new{224} & \new{1{,}416} & \new{3{,}032} & \new{762} & \new{868} & \new{4{,}203}\\
\bottomrule
\end{tabular}
%\begin{tabular}{l|rrr}
%Shape & TSRJoin \cite{ZhuFY21} & Time \cite{ZhuFY21} & Ours \\
%\hline
%3-star & 50 & 2,000 & 12.85 \\
%4-star & 50 & 2,000 & 23.93 \\
%5-star & 50 & 2,000 & 18.44 \\
%3-chain & 50 & 2,000 & 4.30 \\
%4-chair & 50 & 2,000 & 6.04 \\
%5-chain & 70 & 5,000 & 4.52 \\
%3-circle & 150 & 5,000 & 4.36 \\
%4-circle & 150 & 6,000 & 4.39 \\
%5-circle & 70 & 7,000 & 4.98 \\
%diamond & 100 & 5,000 & 8.60 \\
%4-clique & 300 & 20,000 & 4.39 \\
%5-clique & 700 & 40,000 & 6.58 \\
%\end{tabular}
\end{table*}

%\begin{table}[t!]
%\begin{tabular}{l|rrrrrr}
%Query & \multicolumn{5}{c}{Time window selectivity} \\
%         & 0.01\% & 0.1\% & 1\% & 10\% & 20\% & \new{100\%} \\
%\hline
%3-circle & 0.12 & 0.26 & 0.80 & 4.36  & 7.23 \\
%4-star   & 0.29 & 0.70 & 4.22 & 23.9 & 52.1 \\
%5-clique & 0.14 & 0.30 & 1.13 & 6.58  & 10.5 \\
%\hline
% & \multicolumn{5}{c}{Output size limit} \\
%&       $10^3$ & $10^4$ & $10^5$ & $10^6$ & $10^7$ & \new{$\infty$} \\
%\hline
%3-circle & 4.06 & 4.08 & 4.36  & 4.06 & 4.08\\
%4-star   & 15.8 & 23.9 & 23.9 & 24.1 & 24.2 \\
%5-clique & 6.60 & 6.61 &   6.58 & 6.59 & 6.61 \\
%\end{tabular}
%\caption{Average time in msec to solve some shapes with our system, first with a time window from $0.01\%$ to \new{$100\%$} of the domain and limiting results to 100,000, then with a 10\% time window and limiting the results to $10^3$ to \new{$\infty$}. \new{Divvy dataset.}}
\label{tab:divvy}
%\end{table}

%\begin{table}[t!]
%\begin{tabular}{l|rr|rr|rr}
%Query & \multicolumn{2}{c}{Divvy} & \multicolumn{2}{c}{Stack} & \multicolumn{2}{c}{Caida} \\
%      & \cite{HSGAY22} & Ours & \cite{HSGAY22} & Ours & \cite{HSGAY22} & Ours \\
%\hline
%3-star   & & & & & & \\
%4-star   & & & & & & \\
%3-chain  & & & & & & \\
%4-chain  & & & & & & \\
%3-circle & & & & & & \\
%4-circle & & & & & & \\
%\end{tabular}
%\caption{\new{Average time in msec to solve some shapes with Hu et al.'s \cite{HSGAY22} and our system, limiting results to \textcolor{red}{???}.}}
%\label{tab:hu}
%\end{table}

\paragraph{Discussion}

\new{The results show that our index is practical on real-life scenarios: using less than twice the space needed for a non-temporal graph, we answer realistic queries within a millisecond.}
%
%The median times of all strategies for our benchmark support deploying temporal queries in practice. Using our index structure we have shown it is possible to answer realistic queries within a few milliseconds in most cases, and with little space overhead: we need 240.7 bytes per edge, which is less than double that needed for a non-temporal graph. %Assuming that temporal edges in practice have at least two intervals of time, this appears better than storing all edges in a standard database system and annotating them with time intervals. 
%Hence, we answer \textbf{Q1} in a positive way. 
%
%Regarding \textbf{Q2}, we see a substantial impact of the type of strategy used. We see that \textbf{TF} is generally the worst, with \textbf{JF} outperforming \textbf{LC} on the full graph in the median case, and \textbf{LC} likewise outperforming \textbf{JF} on the transient graph. 
\new{We also conclude that it is often highly beneficial to allow temporal variables to be eliminated at any point:
%
%Our strategy \textbf{LC} is the direct extension of one of the standard strategies for ordering variables in classic LTJ algorithms, where one prioritizes trie children with fewer nodes. For the temporal case, this still works well in a significant fraction of the queries.
%in several queries (especially when they do not produce results).
%
%Furthermore, the matrices of Figure~\ref{fig:boxplot} show that being able to evaluate queries in different orderings matters, as for some queries  \textbf{LC} yields a substantial speedup: 
\textbf{LC} is the fastest strategy in the majority of the queries, clearly outperforming \textbf{JF} and \textbf{TF}.}
%when most edges have temporal information, \textbf{LC} outperforms \textbf{JF} and \textbf{TF} over 70\% of the time. %This answers \textbf{Q3} positively: while \textbf{JF} is more robust overall, 
\new{This highlights the relevance of a unique capability of our index, and leaves room for designing better heuristics along the lines of \textbf{LC}.} %Designing better heuristics than \textbf{LC} is an interesting challenge, as now binding time variables early has secondary effects like possibly splitting the time ranges we output into many subintervals. %A promising heuristic could be to incorporate as a factor the estimated cost of splitting intervals when choosing to adopt a strategy other than join-first.

%(they should probably assign more priority to non-time variables than currently done by \textbf{LC}). This is an interesting venue for future work. 

\paragraph{\new{A more complex query}}
\new{Our index solves more than just point-in-time queries. We demonstrate its performance on more complex queries by using now two time variables, $t_1$ and $t_2$, and randomly assigning $t_1$ or $t_2$ (but at least once each) to the BGP triples in order to build the tBGP, and complete the query with a clause $t_1 \le t_2$. We convert the 825 BGPs having more than one triple. Figure~\ref{fig:boxplot2} (left) shows the distribution of times using {\bf JF} and {\bf LC}; note that the {\bf TF} strategy is very inefficient when there are two (or more) time variables, as it would lead to $\Omega(T^2)$ query time.}

\new{In this case, {\bf LC} and {\bf JF} perform similarly, with a distribution close to that of the simpler point-in-time query. For example, the median of {\bf LC} is 65 microseconds, just 30\% higher than the median of the point-in-time queries (50 microseconds). This shows that our index can efficiently solve more complex queries as well.}

\paragraph{\new{Comparison with previous work}}
\new{Figure~\ref{fig:boxplot2} (right) compares the time to solve point-in-time queries using our index and that of Hu et al.~\cite{HSGAY22}. That index implements queries with duration $\delta$, which for $\delta=1$ are point-in-time queries. We chose the strategies that performed best: generic join for cyclic queries and their acyclic baseline for the rest. As they do not limit the output size, we run both systems in that mode over a reduced Wikidata graph having only the triples with non-trivial time annotations; see WikiT in Table~\ref{tab:datasets}. 
We only report on the 76 BGPs that are supported by the code of Hu et al., that is, with variable subject and object. 
%; the graph has 1,161 predicates. 
%We retain the 768 BGPs that give results on this subgraph.} \textcolor{red}{discutir el resultado (brevemente!)
It can be seen that {\bf JF} and {\bf LC} outperform Hu et al. by orders of magnitude, which shows that our index has a competitive advantage over state-of-the-art solutions. {\bf TF} is much slower in this case (still with better median but with worse distribution than Hu et al.); we consider next a scenario where {\bf TF} is more competitive.}
%because the time universe on WikiT is as large as on Wiki but the universally valid edges are absent; trying out all time instants one by one is thus inefficient.}
%Por si hacen falta, los tiempos promedios:
%JF: 420 millisecs
%LC: 424  millisecs
%TF: 136596 millisecs
%Hu et al.: 8273 millisecs

\subsection{\new{Other datasets with synthetic queries}}
%\subsection{Comparison with previous work}

\new{Zhu et al.~\cite{ZhuFY21} study BGPs with fixed topologies (stars, chains, cycles, diamonds, cliques of various sizes) on various datasets. Table~\ref{tab:datasets} describes the datasets we use; Appendix~\ref{app:datasets} (extended version) gives more details. For each topology, they generate 100 point-in-time queries where all nodes are variables and edge labels are assigned a random constant so that the tBGP occurs at least once in the graph. We evaluate our system and related work on this setting.}

\paragraph{\new{Hu et al.}}
\new{Table~\ref{tab:divvy} shows the times of Hu et al.~\cite{HSGAY22} on Divvy, with a 60-second timeout,
% Ya dicho: The shared time variable is simulated with their \emph{durability} parameter set to $1$. 
compared with our system using {\bf LC} and {\bf TF}, and no output size limit. On the acyclic queries (stars and chains), {\bf TF} is 1.7 to 4.5 times faster than Hu et al.; on the cyclic queries it is up to 50 times faster (and more in the shapes where Hu et al.\ have many timeouts). {\bf LC} outperforms {\bf TF} on the stars, but it is (sometimes much) slower in the other shapes. As choosing good VEOs makes a sharper difference on cyclic queries, this suggests that we take advantage from the freedom to choose good query plans. But it also shows that finding the best plan can be challenging.}

\paragraph{\new{Zhu et al.}} 
\new{This paper implements point-in-time queries restricted to a time window $[t_s,t_e]$ given with the query; this corresponds in our language to adding the same temporal variable $t$ to all BGPs and adding constraints $t_s \le t$ and $t \le t_e$. They restrict the results to random time windows that cover a fraction of the time domain, set an output size limit,
%go from 0.01\% to 20\% of the time domain. Further, they consider queries that deliver from $1$ to at most $10^3,\ldots,10^7$ results, 
and set the timeout to 60 seconds. While they do not offer public code, we follow their descriptions to perform similar experiments on Divvy, which is close to their description in size and other parameters. Their TSRJoin index (the one that performs best) takes 483 bytes per tuple (2.5 times our space) and is built 4 times faster than ours in their machine (which is clocked at 2.9 GHz, but its architecture is older).}
Table~\ref{tab:divvy} compares query performance with a time window of 10\% of the domain and limiting the results to 100{,}000, which is the only configuration where we can compare all query shapes. We show their best time (TSRJoin).
%, and the next-best (Time, a time-first approach). 
Our approach, using only {\bf TF} strategy this time, is 2--100 times faster than TSRJoin. % and 3--4 orders of magnitude faster than Time. 
While the 2--4 speedup factors on star shapes can be attributed to our more modern machine\footnote{According to Gemini, our hardware should be from 25\%--35\% faster for pure-CPU calculations, to 3 times faster for computations bounded by RAM transfers.}, we are 1--2 orders of magnitude faster on the other shapes. We conjecture that this is mainly due to our ability to choose arbitrary VEOs for the other variables: we outperform TSRJoin more sharply on the queries where it is not trivial to find the best VEO. On the star queries, instead, there is only one join variable \new{apart from time.} 
%and the query time is dominated by the large output size. %\textcolor{red}{cuál es ese output size vs los que no son tan grandes?}

\paragraph{More on our performance.} 
By looking at the ``Ours'' rows in the areas 100\%, 10\%, and 1\% of Table~\ref{tab:divvy}, we can see that the time window sensitivity has a significant impact on the query time of our index, showing that it filters effectively using the time component.

\new{Table~\ref{tab:zhu} measures our times on other datasets mentioned in Zhu et al.'s paper \cite{ZhuFY21} (but much larger than their versions, so the results are not comparable): Yellow and Caida. We use a 10\% time window and limit the results to 100,000. 
% Eliminado porque sale muy cercano a TF, se ve muy débil
%We compare some VEO strategies on Yellow: since {\bf JF} was too slow, we show only {\bf TF}, {\bf LC}, and a new strategy {\bf TBL}, which puts the time variable just before the lonely variables. To favor the choice of the temporal variable in {\bf LC} due to the 10\% window, we modify the estimation of the number of children at level {\sc t}, by dividing it by $k^4$ if the variable $t$ appears in $k$ triple patterns. While this is too tuned for this situation, we aim to demonstrate that it is possible to design VEO strategies that outperform standard ones like {\bf TF} and {\bf TBL} on the costlier queries. 
The differences in performance are explained mostly by the distribution of predicates: there are a few thousand in Divvy and Yellow, but those in Yellow are considerably skewed, while those in Divvy distribute uniformly. Caida, instead, has just one predicate. Since queries fix a predicate at random and leave all nodes as variables, the times in Caida are high and relatively uniform, growing as larger structures are reported: essentially we are generating all the ways a shape appears in the graph. Predicates help filter edges on Divvy, which yields low and relatively uniform times. On Yellow, with a skewed distribution, filtering is effective for most predicates, but the high times obtained on the larger structures when using a popular predicate dominate the average.} 
%\textcolor{red}{no se me ocurrió otra razón.}
%, are much larger than theirs: 
%Stack records the communication among users in StackOverflow and Caida records the relationships among autonomous systems. In these datasets, edges have no labels (i.e., there is only one label) and span only one time instant; we join successive times of an edge into maximal ranges. \textcolor{red}{comentar lo que corresponda según tipo de query y de dataset.}}

%Finally, Table~\ref{tab:divvy} shows, on a few ``easy'' and a ``hard'' query shapes \new{on Divvy}, how our query times evolve as we vary the selectivity of the window and the limit on the output size. The time window sensitivity has a significant impact on query time, showing that our index filters effectively using the time component. The output size limit has, instead, little impact: only some on the 4-star, whose time is largely related to the number of outputs produced \textcolor{red}{verificar esto (quisiera ver nro de outputs). comentar sobre las nuevas columnas tb. no será simplemente que las queries tienen pocos resultados en divvy, y por eso da lo mismo el límite? de ser así tendría poca gracia el experimento. se podría reemplazar por agregar una fila para winow=1\% a la tabla 2, y ya tendríamos para comparar 1\%, 10\% y 100\%.}.

\section{Conclusions}

%\textcolor{red}{SUMARIZAR}
We have described a data structure that enables processing temporal BGPs in wco time, and provides further guarantees for other key applications such as point-in-time queries or queries with duration. \new{Our experiments show that we solve realistic queries on large graphs within milliseconds, and that our strategy clearly outperforms simple solutions, as well as previous work. Although our index is static, we describe in Appendix~\ref{ap:dynamism} how to support adding new events (i.e., edge insertions and deletions) to the graph.}

One interesting challenge for future work is to obtain beyond-wco query times. For example, we can easily obtain time related to the generalized hypertree width of the query~\cite{emptyheaded}, which is never larger than its fractional hypertree width (the AGM bound). This can be done by optimally partitioning the query into a tree of cyclic queries, using our wco algorithm to solve each cyclic component, and then applying Yannakakis' algorithm \cite{yannakakis} to solve the resulting acyclic join of intermediate results. A problem is that those intermediate results might be obtained with the bindings of the time variables in non-compacted form (akin to representing $\hat{G}$ instead of $G$), and then they might require a lot of space. 
A solution would be to represent the intermediate results optimally using time intervals, yet %. While adapting Yannakakis' semijoins to this representation does not seem to be problematic, efficiently 
finding the minimal set of hyperrectangles covering all the instantiations of the temporal variables seems difficult.
%, in optimal space and time, if the VEO binds them at the end, as they are then output in lexicographical order for each instantiation of the other variables. It does not seem possible to achieve the same in the general case, however.

A related challenge is to choose the most efficient VEOs. As seen in Figure~\ref{fig:boxplot}, which approach works best depends on the query. The least-children (LC) strategy---akin to greedy optimization on cardinality estimates---works well in many cases, but there is room for better heuristics, including adaptive VEOs \cite{cltj}.

% it may split temporal intervals in the output, which increases costs. Better practical heuristics could take such costs into account.

%    \item Can we report the results with maximal time intervals or show it cannot be done? Juan had the idea that, even if we left the time variables to the end, the resolution would force us to hash the intervals.

Another line of future work is to extend the language of the temporal clauses. For example, we may include % relations of the form $w_1 < w_2$ or 
timestamp arithmetic like $w_1 \le w_2+3$ if $\T$ is numeric, which requires representing the original time domain $\T$ (rather than $\T_G$). In this case, the point graph $\hat{G}$ is only bounded by $N \cdot |\T|$, which can even be infinite, and worst-case optimality must be reconsidered. Still, our structures would work without many changes, yet it will be crucial to always bind temporal variables to ranges, as discussed in Section~\ref{sec:inters}.

\new{Finally, wco algorithms can be deployed for standard DBMS~\cite{umbra}, which suggests our techniques could be deployed in a broader setting. This may include temporal path languages~\cite{wu2014path,arenas2022temporal} or even becoming part of temporal DBMS pipelines~\cite{hou2025efficient}.}

%it remains to see how our techniques can be applied in a broader setting where query workloads go beyond tBGPs. Wco algorithms have been deployed in standard DBMS \cite{}, which . For example, it would be interesting to extend our index to support temporal path languages (see e.g. \cite{wu2014path,arenas2022temporal}). And more generally} %understand how to  incorporate our techniques into }%temporal DBMS query plans; this has already been explored for standard relational DBMS \cite{}. }

%interesting to design combined query plans that include our algorithms in a more expressive temporal DBMS pipeline such as that of \cite{hou2025efficient}. There are already systems combining wco algorithms with standard BBMS query plans \cite{}, but it remains to see if this approach can be ported to the temporal setting.}
%    \item Aidan was going to do a search for relevant queries, say in the application of versions. Apparently we support all those he found; we should explain this in the intro...
%    \item I checked Vaisman's paper, but it's mostly about paths. What it has of non-paths is all covered with $t \le t'$ clauses.
%    \item In the paper of conjunctive queries there may be a useful language for time. Or an established language for (time) intervals? Like contained, disjoint, overlaps, etc. All I can imagine is modeled with $t \le t'$...

\begin{acks}
 This work was supported by ANID – Millennium Science Initiative Program – Code ICN17\_002. GN was also supported by Fondecyt Grant 1260080.
 %Aidan: I just have the IMFD
\end{acks}

\bibliographystyle{plain}
\bibliography{paper}

@inproceedings { HSGAY22,
    author = "Xiao Hu and Stavros Sintos and Junyang Gao and Pankaj K. Agarwal and Jun Yang",
    title = "Computing Complex Temporal Join Queries Efficiently",
    booktitle = "Proc. International Conference on Management of Data (SIGMOD)",
    year = 2022,
    pages = "2076--2090",
    !doi = "https://doi.org/10.1145/3514221.3517893"
}

@inproceedings{ CLP11,
  author    = {Timothy M. Chan and Kasper G. Larsen and Mihai P{\u{a}}tra\c{s}cu},
  title     = {Orthogonal range searching on the {RAM}, revisited},
  booktitle = {Proc. 27th ACM Symposium on Computational Geometry (SoCG)},
  pages     = {1--10},
  year      = {2011},
}

@inproceedings{HRRSiswc19,
  author    = {Aidan Hogan and
               Cristian Riveros and
               Carlos Rojas and
               Adri{\'a}n Soto},
  title     = {A worst-case optimal join algorithm for {SPARQL}},
  booktitle = {Proc. 18th International Semantic Web Conference ({ISWC})},
  pages     = {258--275},
  year      = {2019}
}

@inproceedings{MalyshevKGGB18,
  author       = {Stanislav Malyshev and
                  Markus Kr{\"{o}}tzsch and
                  Larry Gonz{\'{a}}lez and
                  Julius Gonsior and
                  Adrian Bielefeldt},
  !editor       = {Denny Vrandecic and
                  Kalina Bontcheva and
                  Mari Carmen Su{\'{a}}rez{-}Figueroa and
                  Valentina Presutti and
                  Irene Celino and
                  Marta Sabou and
                  Lucie{-}Aim{\'{e}}e Kaffee and
                  Elena Simperl},
  title        = {Getting the Most Out of {W}ikidata: Semantic Technology Usage in {W}ikipedia's
                  Knowledge Graph},
  booktitle    = {Proc. 17th International Semantic Web Conference (ISWC)},
  !series       = {Lecture Notes in Computer Science},
  !volume       = {11137},
  pages        = {376--394},
  !publisher    = {Springer},
  year         = {2018},
  url          = {https://doi.org/10.1007/978-3-030-00668-6\_23},
  doi          = {10.1007/978-3-030-00668-6\_23},
  timestamp    = {Tue, 07 Sep 2021 13:47:46 +0200},
  biburl       = {https://dblp.org/rec/conf/semweb/MalyshevKGGB18.bib},
  bibsource    = {dblp computer science bibliography, https://dblp.org}
}

@article{VrandecicK14,
  author       = {Denny Vrandecic and
                  Markus Kr{\"{o}}tzsch},
  title        = {{Wikidata: a free collaborative knowledgebase}},
  journal      = {Commun. {ACM}},
  volume       = {57},
  number       = {10},
  pages        = {78--85},
  year         = {2014},
  url          = {https://doi.org/10.1145/2629489},
  doi          = {10.1145/2629489},
  bibsource    = {dblp computer science bibliography, https://dblp.org}
}

@inproceedings{yannakakis,
  title={Algorithms for acyclic database schemes},
  author={Yannakakis, Mihalis},
  booktitle={Proc. 7th International Conference on Very Large Databases (VLDB)},
  pages={82--94},
  year={1981}
}

@PHDTHESIS {
        Cla96,
        TITLE = "Compact {PAT} Trees",
        AUTHOR = "David R. Clark",
        SCHOOL = "University of Waterloo, Canada",
        YEAR = 1996
        }

@INPROCEEDINGS {
        Mun96,
        AUTHOR = "J. Ian Munro",
        TITLE = "Tables",
        BOOKTITLE = "Proc. 16th Conference on Foundations of Software
                     Technology and Theoretical Computer Science (FSTTCS)",
        YEAR = 1996,
        !SERIES = "LNCS 1180",
        PAGES = "37--42"
        }

@INPROCEEDINGS {
        Jac89,
        AUTHOR = "Guy Jacobson",
        TITLE = "Space-efficient static trees and graphs",
        BOOKTITLE = "Proc. 30th IEEE Symposium on Foundations of Computer
                     Science (FOCS)",
        YEAR = 1989,
        PAGES = "549--554"
        }

@ARTICLE
        { AGHNRRStods24,
          TITLE = "The {R}ing: Worst-Case Optimal Joins in Graph Databases
                   using (Almost) No Extra Space",
          AUTHOR = "Diego Arroyuelo and Adri{\'a}n G{\'o}mez-Brand{\'o}n and Aidan Hogan
                    and Gonzalo Navarro and Juan L. Reutter and Javiel Rojas-Ledesma
                    and Adri{\'a} Soto",
          JOURNAL = "ACM Transactions on Database Systems",
          YEAR = 2024,
          VOLUME = 29,
          NUMBER = 2,
          PAGES = "article 5"
        }

@ARTICLE {
        CNO15,
        AUTHOR = "Francisco Claude and Gonzalo Navarro and Alberto Ord{\'o}{\~n}ez",
        TITLE = "The Wavelet Matrix: An Efficient Wavelet Tree for Large
                 Alphabets",
        JOURNAL = "Information Systems",
        YEAR = 2015,
        VOLUME = 47,
        PAGES = "15--32"
        }

@article{AGM13,
	title        = {Size bounds and query plans for relational joins},
	author       = {Albert Atserias and Martin Grohe and D{\'a}niel Marx},
	year         = 2013,
	journal      = {{SIAM Journal on Computing}},
	volume       = 42,
	number       = 4,
	pages        = {1737--1767}
}

@inproceedings{leapfrog,
	title        = {Triejoin: A simple, worst-case optimal join algorithm},
	author       = {Veldhuizen, Todd L.},
	year         = 2014,
	booktitle    = {Proc.\ International Conference on Database Theory (ICDT)},
	pages        = {96--106}
}

@inproceedings{ BCDMS99,
  author    = {Andrej Brodnik and Svante Carlsson and Erik D. Demaine and
               J. Ian Munro and Robert Sedgewick},
  title     = {Resizable Arrays in Optimal Time and Space},
  booktitle = {Proc. 6th International Symposium on Algorithms and Data
                Structures (WADS)},
  pages     = {37--48},
  year      = {1999},
  !series = "LNCS 1663"
}

@inproceedings{NPRR12,
        title        = {Worst-case optimal join algorithms},
        author       = {Ngo, Hung Q. and Porat, Ely and R{\'e}, Cristopher and Rudra, Atri},
        year         = 2012,
        booktitle    = {Proc. 31st Symposium on Principles of Database Systems (PODS)},
        pages        = {37--48},
        !organization = {ACM}
}

@article{geometric,
        title        = {Joins via geometric resolutions: Worst case and beyond},
        author       = {Khamis, Mahmoud A. and Ngo, Hung Q. and R{\'e}, Cristopher and Rudra, Atri},
        year         = 2016,
        journal      = {ACM Transactions on Database Systems},
        volume       = 41,
        number       = 4,
        pages        = 22,
        !publisher   = {ACM}
}

@inproceedings{tutorialngo,
        title        = {Worst-case optimal join algorithms: Techniques, results, and open problems},
        author       = {Hung Q. Ngo},
        year         = 2018,
        booktitle    = {Proc. 37th Symposium on Principles of Database Systems (PODS)},
        pages        = {111--124}
}

@article { graphflow,
    author = "Amine Mhedhbi and Semih Salihoglu",
    year = 2019,
    title = "Optimizing subgraph queries by combining
binary and worst-case optimal joins",
    journal = "Proceedings of the VLDB Endowment",
    volume = 12,
    number = 11,
       pages = "1692--1704"
    }

@article { adopt, 
    title = "{ADOPT}: Adaptively Optimizing Attribute Orders for Worst-Case Optimal Join Algorithms via Reinforcement Learning",
    author = "Wang, Jialing and Trummer, Immanuel and Kara, Ahmet and Olteanu, Dan",
    journal = "Proceedings of the VLDB Endowment",
    volume = 16,
    number = 11,
    pages = "2805--2817",
    year = 2023
}

@article{freejoin,
  author       = {Wang, Yisu Remy and
                  Willsey, Max and
                  Suciu, Dan},
  title        = {{Free Join}: Unifying Worst-Case Optimal and Traditional Joins},
  journal      = "Proceedings of the ACM on Management of Data (SIGMOD)",
  volume       = {1},
  number       = {2},
  pages        = {150:1--150:23},
  year         = {2023}
}

@article{umbra,
        title        = {Adopting worst-case optimal joins in relational database systems},
        author       = {Freitag, Michael J. and Bandle, Maximilian and Schmidt, Tobias and Kemper, Alfons and Neumann, Thomas},
        year         = 2020,
        journal      = {Proceedings of the VLDB Endowment},
        volume       = 13,
        number       = 11,
        pages        = {1891--1904}
}

@inproceedings{abo2017shannon,
  title={What do {S}hannon-type inequalities, submodular width, and disjunctive datalog have to do with one another?},
  author={Abo Khamis, Mahmoud and Ngo, Hung Q and Suciu, Dan},
  booktitle={Proc. 36th ACM Symposium on Principles of Database Systems (PODS)},
  pages={429--444},  
  year={2017}
}

@inproceedings{ZhuFY21,
  author       = {Kaijie Zhu and
                  George Fletcher and
                  Nikolay Yakovets},
  title        = {Leveraging Temporal and Topological Selectivities in Temporal-clique
                  Subgraph Query Processing},
  booktitle    = {Proc. 37th IEEE International Conference on Data Engineering (ICDE)},
  pages        = {672--683},
  !publisher    = {{IEEE}},
  year         = {2021},
  url          = {https://doi.org/10.1109/ICDE51399.2021.00064},
  doi          = {10.1109/ICDE51399.2021.00064},
  timestamp    = {Tue, 14 Oct 2025 19:36:23 +0200},
  biburl       = {https://dblp.org/rec/conf/icde/ZhuFY21.bib},
  bibsource    = {dblp computer science bibliography, https://dblp.org}
}

@ARTICLE{
        cltj,
        AUTHOR = "Diego Arroyuelo and Daniela Campos and Adri{\'a}n G{\'o}mez-Brand{\'o}n and 
		  Yuval Linker and Gonzalo Navarro and Carlos Rojas and Domagoj Vrgoc",
        JOURNAL = "The Very Large Databases Journal",
	TITLE = "{CompactLTJ}: Space {\&} Time Efficient {L}eapfrog {T}riejoin 
		 on Graph Databases",
	VOLUME = 34,
	PAGES = "article 67",
        YEAR = "2025"
    }

@inproceedings{tsparql,
  title={{T-SPARQL}: A {TSQL2}-like Temporal Query Language for {RDF}},
  author={Grandi, Fabio and others},
  booktitle={ADBIS (local proceedings)},
  volume={639},
  pages={21--30},
  year={2010}
}

@inproceedings{glenda,
  title={{GLENDA}: Querying {RDF} archives with full {SPARQL}},
  author={Pelgrin, Olivier and Taelman, Ruben and Gal{\'a}rraga, Luis and Hose, Katja},
  booktitle={Proc. European Semantic Web Conference (ESWC)},
  pages={75--80},
  year={2023},
  !organization={Springer}
}

@inproceedings{versionedaidan,
  title={Versioned Queries over {RDF} Archives: All You Need is {SPARQL}?},
  author={Cuevas, Ignacio and Hogan, Aidan},
  booktitle={MEPDaW@ ISWC},
  pages={43--52},
  year={2020}
}

@article{sql2011temporal,
  title={Temporal features in {SQL}: 2011},
  author={Kulkarni, Krishna and Michels, Jan-Eike},
  journal={ACM Sigmod Record},
  volume={41},
  number={3},
  pages={34--43},
  year={2012},
  publisher={ACM New York, NY, USA}
}

@inproceedings{timetext1,
  title={A time machine for text search},
  author={Berberich, Klaus and Bedathur, Srikanta and Neumann, Thomas and Weikum, Gerhard},
  booktitle={Proc. 30th Annual International ACM Conference on Research and Development in Information Retrieval (SIGIR)},
  pages={519--526},
  year={2007}
}

@inproceedings{timetext2,
  title={Index maintenance for time-travel text search},
  author={Anand, Avishek and Bedathur, Srikanta and Berberich, Klaus and Schenkel, Ralf},
  booktitle={Proc. 35th International ACM Conference on Research and Development in Information Retrieval (SIGIR)},
  pages={235--244},
  year={2012}
}

@inproceedings{temporalalgebragraph,
  title={Temporal graph algebra},
  author={Moffitt, Vera Zaychik and Stoyanovich, Julia},
  booktitle={Proc. 16th International Symposium on Database Programming Languages},
  pages={1--12},
  year={2017}
}

@inproceedings{temporalalgebrarelational,
  title={Temporal alignment},
  author={Dign{\"o}s, Anton and B{\"o}hlen, Michael H and Gamper, Johann},
  booktitle={Proc. ACM International Conference on Management of Data (SIGMOD)},
  pages={433--444},
  year={2012}
}

@article{statementmodifiers,
  title={Temporal statement modifiers},
  author={B{\"o}hlen, Michael H and Jensen, Christian S and Snodgrass, Richard Thomas},
  journal={ACM Transactions on Database Systems},
  volume={25},
  number={4},
  pages={407--456},
  year={2000},
  publisher={ACM New York, NY, USA}
}

@article{tquel,
  title={The temporal query language {TQuel}},
  author={Snodgrass, Richard},
  journal={ACM Transactions on Database Systems},
  volume={12},
  number={2},
  pages={247--298},
  year={1987},
  publisher={ACM New York, NY, USA}
}

@inproceedings{arenas2022temporal,
  title={Temporal regular path queries},
  author={Arenas, Marcelo and Bahamondes, Pedro and Aghasadeghi, Amir and Stoyanovich, Julia},
  booktitle={2022 IEEE 38th International Conference on Data Engineering (ICDE)},
  pages={2412--2425},
  year={2022},
  organization={IEEE}
}

@inproceedings{stsparql,
  title={Modeling and querying metadata in the semantic sensor web: The model {stRDF} and the query language {stSPARQL}},
  author={Koubarakis, Manolis and Kyzirakos, Kostis},
  booktitle={Proc. Extended Semantic Web Conference},
  pages={425--439},
  year={2010},
  !organization={Springer}
}

@incollection{sparqlst,
  title={{SPARQL}-st: Extending {SPARQL} to support spatiotemporal queries},
  author={Perry, Matthew and Jain, Prateek and Sheth, Amit P},
  booktitle={Geospatial Semantics and the Semantic Web: Foundations, Algorithms, and Applications},
  pages={61--86},
  year={2011},
  publisher={Springer}
}

@inproceedings{temporalontology,
  title={Multi-temporal {RDF} Ontology Versioning.},
  author={Grandi, Fabio and others},
  booktitle={IWOD@ ISWC},
  year={2009}
}

@article{emptyheaded,
  title={Emptyheaded: A relational engine for graph processing},
  author={Aberger, Christopher R and Lamb, Andrew and Tu, Susan and N{\"o}tzli, Andres and Olukotun, Kunle and R{\'e}, Christopher},
  journal={ACM Transactions on Database Systems},
  volume={42},
  number={4},
  pages={1--44},
  year={2017},
  publisher={ACM New York, NY, USA}
}

@article{survey,
  title={Foundations of modern query languages for graph databases},
  author={Angles, Renzo and Arenas, Marcelo and Barcel{\'o}, Pablo and Hogan, Aidan and Reutter, Juan and Vrgo{\v{c}}, Domagoj},
  journal={ACM Computing Surveys},
  volume={50},
  number={5},
  pages={1--40},
  year={2017},
  publisher={ACM New York, NY, USA}
}

@inproceedings{GQL,
  title={A Researcher’s Digest of {GQL}},
  author={Francis, Nadime and Gheerbrant, Am{\'e}lie and Guagliardo, Paolo and Libkin, Leonid and Marsault, Victor and Martens, Wim and Murlak, Filip and Peterfreund, Liat and Rogova, Alexandra and Vrgoc, Domagoj},
  booktitle={Proc. 26th International Conference on Database Theory (ICDT)},
  pages={1:1--1:22},
  year={2023},
  !organization={Schloss Dagstuhl-Leibniz-Zentrum f{\"u}r Informatik}
}

@misc{croker1989completeness,
  title={On completeness of historical relational data models},
  author={Croker, Albert and Clifford, James},
  year={1989},
  howpublished={NYU Working Paper No. IS-89-002}
}

@article{li2022durable,
  title={Durable subgraph matching on temporal graphs},
  author={Li, Faming and Zou, Zhaonian and Li, Jianzhong},
  journal={IEEE Transactions on Knowledge and Data Engineering},
  volume={35},
  number={5},
  pages={4713--4726},
  year={2022},
  publisher={IEEE}
}

@article{semertzidis2018top,
  title={Top-$k$ Durable Graph Pattern Queries on Temporal Graphs},
  author={Semertzidis, Konstantinos and Pitoura, Evaggelia},
  journal={IEEE Transactions on Knowledge and Data Engineering},
  volume={31},
  number={1},
  pages={181--194},
  year={2018},
  publisher={IEEE}
}

@inproceedings{paranjape2017motifs,
  title={Motifs in temporal networks},
  author={Paranjape, Ashwin and Benson, Austin R and Leskovec, Jure},
  booktitle={Proc. 10th ACM International Conference on Web Search and Data Mining (WSDM)},
  pages={601--610},
  year={2017}
}

@article{gao2020time,
  title={Time-respecting flow graph pattern matching on temporal graphs},
  author={Gao, Yunjun and Zhang, Tianming and Qiu, Linshan and Linghu, Qingyuan and Chen, Gang},
  journal={IEEE Transactions on Knowledge and Data Engineering},
  volume={33},
  number={10},
  pages={3453--3467},
  year={2020},
  publisher={IEEE}
}

@inproceedings{huan2025tematch,
  title={{TeMatch}: A Fast Temporal Subgraph Matching Framework with Temporal-Aware Subgraph Matching Algorithms},
  author={Huan, Chengying and Zhang, Heng and Liu, Yongchao and Chen, Likang and Wang, Xuran and Jiang, Yongchun and Ma, Shaonan and Wu, Yanjun},
  booktitle={2025 IEEE 41st International Conference on Data Engineering (ICDE)},
  pages={1029--1042},
  year={2025},
  !organization={IEEE}
}

@inproceedings{abo2022complexity,
  title={The complexity of boolean conjunctive queries with intersection joins},
  author={Abo Khamis, Mahmoud and Chichirim, George and Kormpa, Antonia and Olteanu, Dan},
  booktitle={Proc. 41st ACM SIGMOD-SIGACT-SIGAI Symposium on Principles of Database Systems (PODS)},
  pages={53--65},
  year={2022}
}

@misc { caida,
   title = "{The CAIDA AS Relationships Dataset, 1998--2026}",
   note = "https://www.caida.org/catalog/datasets/as-relationships/",
   year = "downloaded in April 2026"
   }

@article{aitg1,
  title={A survey on temporal knowledge graph: Representation learning and applications},
  author={Cai, Li and Mao, Xin and Zhou, Yuhao and Long, Zhaoguang and Wu, Changxu and Lan, Man},
  journal={arXiv preprint arXiv:2403.04782},
  year={2024}
}

@article{aitg2,
  title={Temporal knowledge graph completion: A survey},
  author={Cai, Borui and Xiang, Yong and Gao, Longxiang and Zhang, He and Li, Yunfeng and Li, Jianxin},
  journal={arXiv preprint arXiv:2201.08236},
  year={2022}
}

@article{cyber,
  title={Temporal Multi-Query Subgraph Matching in Cybersecurity},
  author={Lu, Min and Zhang, Qianzhen and Zhu, Xianqiang},
  journal={Technologies},
  volume={13},
  number={8},
  pages={335},
  year={2025},
  publisher={MDPI}
}

@inproceedings{eventkg,
  title={Eventkg: A multilingual event-centric temporal knowledge graph},
  author={Gottschalk, Simon and Demidova, Elena},
  booktitle={European semantic web conference},
  pages={272--287},
  year={2018},
  organization={Springer}
}

@inproceedings{khurana2016efficient,
  title={Storing and analyzing historical graph data at scale},
  author={Khurana, Udayan and Deshpande, Amol},
  booktitle={International Conference on Extending Database Technology},
  year={2016},
  organization={OpenProceedings. org}
}

@inproceedings{kriegel2000managing,
  title={Managing intervals efficiently in object-relational databases},
  author={Kriegel, Hans-Peter and P{\"o}tke, Marco and Seidl, Thomas},
  booktitle={VLDB},
  volume={20},
  number={0},
  pages={0},
  year={2000}
}

@inproceedings{ceccarello2023indexing,
  title={Indexing temporal relations for range-duration queries},
  author={Ceccarello, Matteo and Dign{\"o}s, Anton and Gamper, Johann and Khnaisser, Christina},
  booktitle={Proceedings of the 35th International Conference on Scientific and Statistical Database Management},
  pages={1--12},
  year={2023}
}

@article{hou2025efficient,
  title={An efficient and scalable graph database with built-in temporal support: J. Hou et al.},
  author={Hou, Jiamin and Zhao, Zhanhao and Lu, Wei and Yang, Shiming and Liu, Shuang and Xu, Quanqing and Yang, Chuanhui and Du, Xiaoyong},
  journal={The VLDB Journal},
  volume={34},
  number={4},
  pages={53},
  year={2025},
  publisher={Springer}
}

@article{wu2014path,
  title={Path problems in temporal graphs},
  author={Wu, Huanhuan and Cheng, James and Huang, Silu and Ke, Yiping and Lu, Yi and Xu, Yanyan},
  journal={Proceedings of the VLDB Endowment},
  volume={7},
  number={9},
  pages={721--732},
  year={2014},
  publisher={VLDB Endowment}
}

\newpage
\appendix
\onecolumn

\section{On the expressive power of tBGPs}\label{app:expressive}

Let us start with the comparison with CQs. To abstract from relational representations of temporal graphs, we consider a relational representation with a single relation $T$ of arity $5$, with the first three positions reserved for triples, the fourth for the start of a time interval, and the fifth for the end of the interval. 
Then, each temporal graph $G$ is directly represented as an instance $I_G$: for each tuple $(s,p,o,[\ti,\tf\,))$ in $G$ we 
add to $I_G$ the atom $(s,p,o,\ti,\tf)$. In this context, one asks whether every tBGP can be \emph{expressed} as a query in $I_G$; we say that a tBPG $Q$ can be expressed as a conjunctive query over the relational representation of graphs if one can find a conjunctive query $Q'$ such that the evaluation of $Q$ over any temporal graph $G$ corresponds to the evaluation of $Q'$ over $I_G$. 

\begin{proposition}
\label{prop-cqs-not-enough}
There is a family of tBGPs that cannot be expressed as conjunctive queries over the relational representation of graphs, nor as conjunctive queries with inequalities. 
\end{proposition}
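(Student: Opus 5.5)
We exhibit the simplest tBGP family exploiting the point-based semantics: $Q_k = \{(x_i,y_i,z_i,w) \mid 1 \le i \le k\}$, a single shared temporal variable $w$ (already $k=2$ suffices), with output variables including $w$. A solution assigns $w$ a time instant $t\in\T_G$, possibly one that is neither the $\ti$ nor the $\tf$ of any tuple the solution uses. Equivalently, $Q_2$ requires the two intervals $[\ti_1,\tf_1)$ and $[\ti_2,\tf_2)$ to share an instant of $\T_G$, and $w$ ranges over all such instants. We will argue that no CQ $Q'$ over $I_G$, with or without inequalities, produces this output on every $G$.

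First, a value argument. Every output value of a CQ (with inequalities) over $I_G$ is a value in the active domain of $I_G$. Here that is fine, since $\T_G$ is exactly the set of interval endpoints. The real obstruction is monotonicity together with how $w$ gets bound. Inequalities $\neq$ cannot express order, and the only relation is $T$. So $w$ must be bound to some position of some atom of $T$ in $Q'$, that is, to an endpoint of some tuple matched by $Q'$. For the intended output, the witnessing tuple must depend on other tuples elsewhere in the graph, namely those that contribute the instant $t$. This is what we exploit.

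The concrete separating construction uses a graph $G$ with $(a,b,c,[1,3))$ and $(a,b,d,[1,3))$, plus an unrelated tuple $(e,e,e,[2,4))$ making $2\in\T_G$. Then $Q_2$ outputs $w=2$ together with $x_1=a,\dots$. In a CQ $Q'$, the head variable for $w$ is bound in some atom. Taking a homomorphism witnessing the answer, the image of that atom is a tuple of $I_G$ containing $2$ in a time position, so it must be the $e$-tuple. Now take $G'=\{(a,b,c,[1,3)),(a,b,d,[1,3)),(e,e,e,[2,4)),(f,f,f,[2,3))\}$, or vary the $e$-tuple's other endpoint, and show that the same $Q'$ either still outputs an incorrect tuple or fails to output a correct one. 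Concretely, if $Q'$ can match $w$ to a time coordinate of an arbitrary tuple, then a graph where that tuple's instant is outside the $c$/$d$ intervals, e.g. $(e,e,e,[5,6))$, should still yield the answer $w=5$ by the same pattern of equalities/inequalities. This holds because $Q'$ cannot compare $5$ with $1,3$, while $Q_2$ rejects it. The key lemma is that CQs with $\neq$ are preserved under injective homomorphisms. The map $2\mapsto 5$ (with $4\mapsto 6$) is injective on the active domain and maps $I_G$ into $I_{G''}$. Hence the answer $(\dots,w=2)$ maps to an answer $(\dots,w=5)$ of $Q'$ on $G''$, which is not an answer of $Q_2$ on $G''$. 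This gives the contradiction.

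The main obstacle is ensuring the injective map is a genuine homomorphism $I_G\to I_{G''}$. It must fix $1,3,a,b,c,d$ and be injective, which it is. One must also handle that $Q_2$ outputs $w=1$ as well, and the homomorphism also preserves that answer harmlessly. So the proof reduces to choosing $G,G''$ so that the renamed answer is spurious. Extending to a family means taking $Q_k$ for all $k\ge2$ with the same argument.
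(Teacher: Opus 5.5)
Your renaming argument works for plain CQs and for CQs with $\neq$, provided you pick the data values so that they avoid the constants of $Q'$ (state this). The map $h$ is an injective homomorphism $I_G\to I_{G''}$, such queries are preserved under it, and $(a,b,c,a,b,d,5)\notin Q_2(G'')$.

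The gap is in what ``inequalities'' means here. In this paper they are order comparisons on time values: Proposition~\ref{prop-cqs-point-graph} turns the tBGP clauses $w_1\le w_2$ into the inequalities of $\hat Q$. So you must also rule out CQs with $\le$ and $<$, and there your key lemma fails. Such queries are preserved only under injective \emph{order-preserving} maps, and $h$ is not one: $2<3$ but $h(2)=5>3$. Moreover, your spurious answer is exactly what comparisons exclude. $Q'$ can contain $T(x_1,y_1,z_1,s_1,e_1)\wedge T(x_2,y_2,z_2,s_2,e_2)\wedge s_1\le w<e_1\wedge s_2\le w<e_2$, which rejects $w=5$ on $G''$. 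Your step ``$Q'$ cannot compare $5$ with $1,3$'' is precisely the assumption you do not have. No renaming that pushes $w$ outside the intervals can repair this.

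The obstruction that survives comparisons is the one you mention and then drop. Every occurrence of the head variable $w$ in the body of $Q'$ sits at a fixed attribute position, but the witnessing instant may be a start point in one graph and an end point in another. The paper's proof uses exactly this. Take path edges valid on $[t_1,t_3)$, plus an unrelated tuple with interval $[t_0,t_2)$ in $G_1$ and $[t_2,t_4)$ in $G_2$. In both graphs $w=t_2$ is an answer. But $t_2$ occurs only in position~5 of $I_{G_1}$ and only in position~4 of $I_{G_2}$, and never in positions 1--3, since $\U\cap\T=\emptyset$. So a match producing this answer forces every occurrence of $w$ into position~5 in the first graph and into position~4 in the second. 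That is impossible, because $w$ must occur in some atom. Comparison atoms play no role, which is why this route covers CQs with inequalities in the paper's sense. Note also that it exhibits a missed correct answer rather than a spurious one, which is the kind of failure your monotonicity-based approach cannot produce.
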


\begin{proof}
Assume towards contradiction that every tBGP can be expressed as a
conjunctive query over the relational representation of temporal graphs,
possibly extended with inequalities.

Consider, for every $n \ge 1$, the tBGP $Q_n$ consisting of a path of
length $n$ required to hold at the same time instant:
\[
Q_n = \{(x_0,p,x_1,w), (x_1,p,x_2,w), \dots, (x_{n-1},p,x_n,w)\}.
\]
Intuitively, $Q_n$ asks for a path of length $n$ that exists
simultaneously at some time point $w$.

Suppose there exists a conjunctive query $Q_n'$ over the relational
encoding using the relation
$T(s,p,o,\ti,\tf)$ that expresses $Q_n$.

Let $H_n$ be a static graph consisting of a path
\[
a_0 \xrightarrow{p} a_1 \xrightarrow{p} \cdots
\xrightarrow{p} a_n .
\]

We construct two temporal graphs that have the same frozen structure but
different temporal behavior.

\medskip
\noindent
\textbf{Instance $G_1$.}
For every edge $(a_i,p,a_{i+1})$ of $H_n$, include the tuple
\[
(a_i,p,a_{i+1},[t_1,t_3))
\]
with $t_0 < t_1 < t_2 < t_3$.
Additionally, add one extra edge
\[
(b,p',b',[t_0,t_2)).
\]
At time $t_2$, all edges of the path are valid simultaneously, hence
$Q_n(G_1)$ contains a solution witnessing time $w=t_2$.

Since $Q_n'$ expresses $Q_n$, there must be a homomorphism from the body
of $Q_n'$ into $I_{G_1}$ mapping the variable corresponding to $w$ to
the value $t_2$. Consequently, some atom of $Q_n'$ must use $t_2$ as the
value obtained from an interval endpoint appearing in position $5$
(the interval end), because $t_2$ only occurs there.

\medskip
\noindent
\textbf{Instance $G_2$.}
Now consider the temporal graph obtained by replacing the additional
edge with
\[
(b,p',b',[t_2,t_4))
\]
for $t_3 < t_4$, while keeping all path edges unchanged.
Again, the frozen graph is identical to $H_n$, and now the witnessing
time belongs to the start of an interval, namely position $4$.

Hence, correctness of $Q_n'$ implies that there must exist a
homomorphism mapping the same time variable to position $4$ of some atom.

But we also need to map the atom mapping $w$ to the fifth position. As the value $t_2$ is not in the fifth position in any tuple in $I_{G_2}$, it follows that there cannot be a homomorphism from $Q_n'$ to $I_{G_2}$, which is a contradiction. Therefore, no conjunctive query (even with inequalities) can express any of the tBGPs $\{Q_n\}_{n \ge 1}$.
\end{proof}

However, tBGPs do coincide with conjunctive queries (with inequalities) when we expand graphs into their point-based representation, which may be of quadratic size with respect to the original temporal graph. While materializing the point-based representation is therefore not feasible in practice, the following result provides a good justification for our language: by focusing on tBGPs we study the basing block of temporal query languages. To make this more precise, consider a different relational representation $I_G$, which now stores a tuple $(s,p,o,t)$ for each $(s,p,o,t) \in \hat G$. Further, let us say that a tBGP is compatible with a graph $G$ if the time instants mentioned in $Q$ are also in $\mathcal I_G$.

\begin{proposition}
\label{prop-cqs-point-graph}
for every tBGP $Q$ one can construct a conjunctive query $\hat Q$ with inequalities, so that
the answer of $Q$ over a compatible temporal graph $G$ is the same as the answer of $\hat Q$ over $\hat G$, and conversely, as long as the inequalities on the conjunctive query are only on variables used in the fourth position of relations.
\end{proposition}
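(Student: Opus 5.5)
The plan is to give an explicit, syntax-directed translation in both directions and then check that solutions correspond one to one. For the forward direction, take a tBGP $Q$ with tuple patterns $(x_i,y_i,z_i,w_i)$, $1\le i\le m$, and comparison clauses $w\le w'$. Let $\hat Q$ be the conjunctive query over the single quaternary relation $\hat G$ (encoded as $I_G$) with one atom $R(x_i,y_i,z_i,w_i)$ per tuple pattern. Each clause $w\le w'$ becomes the inequality $w\le w'$ on the totally ordered domain $\T$. All variables of $Q$ are free variables of $\hat Q$. By construction, every inequality of $\hat Q$ involves only variables or constants occurring in the fourth position.

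The core step is to show $Q(G)=\hat Q(\hat G)$ for every compatible $G$. Take a solution $f$ of $Q$. For each pattern $i$ there is a tuple $(f(x_i),f(y_i),f(z_i),[\ti,\tf\,))\in G$ with $\ti\le f(w_i)<\tf$. Since $f(w_i)\in\T_G$, Definition~\ref{def:Gpoint} gives $(f(x_i),f(y_i),f(z_i),f(w_i))\in\hat G$. So $f$ maps every atom of $\hat Q$ into $\hat G$, and the clauses carry over verbatim to the inequalities. Conversely, let $h$ be a homomorphism from $\hat Q$ to $\hat G$ satisfying the inequalities. Each atom lands on some $(s,p,o,t)\in\hat G$, which by definition comes from a tuple of $G$ whose interval contains $t$, with $t\in\T_G$. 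So $h$ is a solution of $Q$. For the type discipline $f:\Vu\to\U_G$, $f:\Vt\to\T_G$: the value assigned to a variable in the fourth position automatically lies in $\T_G$, and the other values lie in $\U_G$, because $\U\cap\T=\emptyset$ and both are inherited from $\hat G$. Compatibility guarantees that the time constants of $Q$ occur in $\T_G$, hence in $\hat G$. Without it, the two semantics could disagree on constants.

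For the converse direction, take a CQ with inequalities $\hat Q$ over the quaternary relation in which every inequality relates only fourth-position terms. Read each atom $R(a,b,c,d)$ back as a tuple pattern $(a,b,c,d)$ and each inequality $d\le d'$ as a clause. The same argument then shows the translation preserves answers. I expect the main obstacle to be this direction's \emph{well-typedness}. A variable could occur both in a fourth position and in one of the first three positions, or an inequality could mention a non-time term. I would treat these as degenerate cases. Because $\U$ and $\T$ are disjoint and $\hat G$ stores only $\U$-values in positions 1--3 and $\T$-values in position 4, such a query has no answers on any $\hat G$. It is therefore equivalent to an unsatisfiable tBGP, for example one with an empty-answer clause pair like $w_1\le w_2$ on constants in the wrong order, or simply a pattern forcing an impossible match. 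Strict inequalities $<$ or $\neq$, if allowed, would be handled by noting that, over the finite ordered set $\T_G$, they can be rewritten using the order-restriction machinery already present in tBGPs. Otherwise the statement is read as allowing only $\le$, matching Definition~\ref{def:bgp-seg}. The remaining verification is routine bookkeeping.
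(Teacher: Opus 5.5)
Your proposal is correct and is essentially the paper's proof: you read $Q$ syntactically as a quaternary CQ whose comparisons involve only fourth-position terms, prove both inclusions by unfolding Definition~\ref{def:Gpoint}, and use $\U\cap\T=\emptyset$ to recover the typing $f:\Vu\to\U_G$, $f:\Vt\to\T_G$, and your explicit reverse translation and treatment of mixed-position variables spell out what the paper leaves implicit (like the paper, your converse tacitly needs $\hat Q$ to have no projected-out variables, since tBGP answers are total assignments). Do drop the aside on strict inequalities, which is false, so the statement must be read with $\le$ only, as the paper does: $w_1<w_2$ and $w_1\neq w_2$ are not expressible with $\le$ clauses, because such clauses cannot distinguish $(w_1,w_2)=(x,y)$ from $(x,x)$ when all relevant triples hold at both instants $x<y$ and no time constant of the query lies in $(x,y]$.
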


\begin{proof}
Let $Q$ be a tBGP. Observe that $Q$ can be viewed syntactically as a
conjunctive query of arity $4$, possibly with comparison predicates
between variables occurring in the fourth (temporal) position.
Hence, $Q$ can be directly evaluated over the point-based
representation $\hat G$ as a standard conjunctive query with
inequalities. We show that the answers coincide in both settings.

Throughout the proof we consider only temporal graphs $G$ for which
$Q$ is \emph{well defined}, that is, all constants appearing in $Q$
belong to $\U_G$ and all temporal constants belong to $I_G$.

\medskip
\noindent
\textbf{($\Rightarrow$)}
Let $f$ be a solution of $Q$ over the temporal graph $G$.
Consider any tuple pattern $(x,y,z,w)$ of $Q$.
Since $f$ is a valid assignment, there exists a tuple
\[
(f(x),f(y),f(z),[\ti,\tf))
\in G
\]
such that $\ti \le f(w) < \tf$.
By definition of solutions, we may assume that
$f(w) \in \T_G$.
Then, by construction of the point-based representation,
the tuple
\[
(f(x),f(y),f(z),f(w))
\]
belongs to $\hat G$.
Hence every atom of $Q$ is satisfied in $\hat G$, and all
comparison predicates remain valid.
Therefore $f$ is also a satisfying assignment of $Q$
evaluated as a conjunctive query over $\hat G$.

\medskip
\noindent
\textbf{($\Leftarrow$)}
Conversely, let $f$ be a satisfying assignment of $Q$
over $\hat G$.
Since tuples of $\hat G$ have the form
$(s,p,o,t)$ obtained from intervals of $G$,
for every atom $(x,y,z,w)$ of $Q$ such that
\[
(f(x),f(y),f(z),f(w)) \in \hat G,
\]
there exists an interval $[\ti,\tf)$ with
\[
(f(x),f(y),f(z),[\ti,\tf)) \in G
\quad\text{and}\quad
\ti \le f(w) < \tf .
\]
Hence the atom is satisfied under the semantics of tBGPs.

Moreover, because the domain of values and the temporal domain
are disjoint, variables occurring in the first three positions
cannot be mapped to temporal values and vice versa.
Thus comparison predicates involving temporal variables are
preserved, and $f$ is a valid solution of $Q$ over $G$.
 \end{proof}

From Propositions \ref{prop-cqs-not-enough} and \ref{prop-cqs-point-graph}, it follows that none of the techniques developed for relational conjunctive queries can be directly ported into our temporal graph setting, unless we first expand temporal graphs to their point-based  representation, this, as we explained, is not feasible to do in practice. 

\section{Pseudocodes} \label{ap:pseudo}

We show pseudocode in Algorithms~\ref{alg:leapfrog} to  \ref{alg:wtree}.

\begin{algorithm}[t]
	$x \gets 0$ ; $l \gets 0$ ; \\
        \While{$true$}
           { $x \gets \leap(v_l,x)$ ; \\
	     $i \gets (l+1) \textrm{~mod~} k$ ; \\
             \While{$i \not= l$} 
               { $x' \gets \leap(v_i,x)$ ; \\
                 \lIf{$x' > x$} { $x \gets x'$ ; $l \gets i$ }
                 $i \gets (i+1) \textrm{~mod~} k$ ; \\
	       }
             \lIf{$x = +\infty$} { $\!\!\!\!$ {\bf break} $\!\!$ }
             {\bf report} $x$ ; \\
             $x \gets x+1$ ; 
	   }
\ \\
\caption{The LTJ iterator reports all common children of LTJ trie nodes
$v_0,\ldots,v_{k-1}$. $\leap(v,x)$ returns the next value
$\ge x$ descending from LTJ trie node $v$, or $+\infty$ if none exists.}
\label{alg:leapfrog}
\end{algorithm}

\begin{algorithm}[t]
$\leap(v,x,h)$  \vspace{2mm} \\
  	\lIf{$v = null$} { $\!\!\!\!$ \textbf{return} $+\infty$ }
  	\lIf{$v$ is a leaf} { $\!\!\!\!$ \textbf{return} $0$ }
	\lIf{$\lfloor x / 2^h\rfloor = 1$}
	   { $x \gets \leap(v.r,x-2^h,h-1)$
	   }
	\Else
	   { $x \gets \leap(v.l,x,h-1)$ ; \\ 
	      \lIf{$x \not= +\infty$} { $\!\!\!\!$ \textbf{return} $x$ }
	      $x \gets \mathsf{leftmost}(v.r,h-1)$ ;
           }
        \lIf{$x \not= +\infty$} { $x \gets x+2^h$ }
	\textbf{return} $x$ ; \\

\vspace{5mm}

\textsf{leftmost($v,h$)} \vspace{2mm} \\
    \lIf{$v = null$} { $\!\!\!\!$ \textbf{return} $+\infty$ }
    \lIf{$v$ is a leaf} { $\!\!\!\!$ \textbf{return} $0$ }
    \lIf{$v.l \not= null$} { \textbf{return} $\mathsf{leftmost}(v.l,h-1)$ }
     \textbf{return} $2^h+\mathsf{leftmost}(v.r,h-1)$ ; 

\vspace{5mm}
    
\caption{The implementation of $\leap$ using BTs (or, equivalently, VBTs).
The function receives
the LTJ trie node $v$ (which is identified with the BT root), the minimum
desired value $x$, and the BT height $h=\ell$. The left and right children of 
BT node $v$ are $v.l$ and $v.r$, respectively.}
\label{alg:BTseek}
\end{algorithm}

\begin{algorithm}[t]
        $x \gets 0$ ; $l \gets 0$ ; \\
        \While{$true$}
           { $[x,y) \gets \leap(v_l,x)$ ; \\
             $i \gets (l+1) \textrm{~mod~} k$ ; \\
             \While{$i \not= l$}
               { $[x',y') \gets \leap(v_i,x)$ ; \\
                 \lIf{$x' > x$} { $[x,y) \gets [x',y')$ ; $l \gets i$ } 
		 \lElse { $y \gets \min(y,y')$ }
                 $i \gets (i+1) \textrm{~mod~} k$ ; \\
               }
             \lIf{$x = +\infty$} { $\!\!\!\!$ {\bf break} $\!\!$ }
             {\bf report} $[x,y)\,$ ; \\
             $x \gets y$ ; \\
           }
\ \\
\caption{The modified LTJ iterator to account for intervals when intersecting 
at the time level. The iterator now returns intervals $[x,y)$ where all the 
results will be the same.} 
\label{alg:time-leapfrog}
\end{algorithm}

\begin{algorithm}[t!]
$\leap(d,s,e,p,x,h)$ \\ \vspace{2mm}
  	\lIf{$s+p \not\in [s,e] ~\lor~ E_d[s+p]=0$} { $\!\!\!\!$ \textbf{return} $+\infty$ }
  	\lIf{$h=0$} { $\!\!\!\!$ \textbf{return} $0$ }
  	$r \gets rank(B_d,s,e)$ ;\\
  	$p' \gets rank(B_d,s,s+p)$ ; \\
	\If{$\lfloor x / 2^h\rfloor = 1$}
	   { $x \gets \leap(d+1,e-r+1,e,p'-1,x-2^h,h-1)$ ; \\
	     \lIf{$x \not= +\infty$} { $\!\!\!\!$ \textbf{return} $x+2^h$ }
	     \textbf{return} $+\infty$;
	   }
	\Else
	   { $x \gets \leap(d+1,s,e-r,p-p',x,h-1)$ ; \\ 
	     \lIf{$x \not= +\infty$} { $\!\!\!\!$ \textbf{return} $x$ }
	     \lIf{$r = 0 ~\lor~ E_{d+1}[e-r+p'] = 0$} { $\!\!\!\!$ \textbf{return} $+\infty$ }
	     \textbf{return} $2^h + \mathsf{leftmost}(d+1,e-r+1,e,p'-1,h-1)$ ;
           }

\vspace{5mm}

\textsf{leftmost($d,s,e,p,h$)} \\ \vspace{2mm}
    \lIf{$h = 0$} { $\!\!\!\!$ \textbf{return} $0$ }
    $r \gets rank(B_d,s,e)$ ; \\
 	$p' \gets rank(B_d,s,s+p)$ ; \\
  	\If{$s \le e-r ~\land~ E_{d+1}[s+p-p'] = 1$} { \textbf{return} $\mathsf{leftmost}(d+1,s,e-r,p-p',h-1)$ }
	\textbf{return} $2^h+\mathsf{leftmost}(d+1,e-r+1,e,p'-1,h-1)$ ;

\vspace{5mm}

\caption{The implementation of $\leap$ on our linear-space representation. The function receives
the LTJ trie node that corresponds to the range $[s,e]$ at level $d$ of our structure, the local timestamp of interest $p=p_l-s$, the minimum
desired value $x$, and the BT height $h=\ell$.}
\label{alg:wtree}
\end{algorithm}

\section{Querying the graphs $G_t$ and $G\Xany{t_1}{t_2}$ (Extended version)} \label{ap:point-in-queries}

Definition~\ref{def:otherG} gives some standard labeled graphs that can be derived from a temporal graph $G$, on which it is of interest to answer standard BGPs. We now show how we can use our linear-space data structure on $G$ to answer BGPs on $G_t$ and $G\Xany{t_1}{t_2}$, where $t$, $t_1$, and $t_2$ are given together with the query, in wco time with respect to those graphs. The case of $G\Xall{t_1}{t_2}$ will be discussed later.

We can answer general BGPs $Q$ on $G_t$ by converting them to BGPs on the temporal graph $G$, adding the constant $t$ as the fourth component in all the triples. A particularly interesting result, however, is obtained if we instead descend by the attribute $\textsc{t}=t$ in the 6 LTJ tries that start with attribute {\sc t}. The subtries that descend from those nodes correspond to the LTJ tries for $G_t$, on the attributes {\sc s}, {\sc p}, and {\sc o}. For example, if we descend by $t$ in the trie {\sc tspo}, the subtrie of the node we arrive at is isomorphic to the trie {\sc spo} of $G_t$. We can then run the normal LTJ algorithm for $Q$ using those subtries, exactly as if they were the LTJ tries of $G_t$. Since our $O(N)$ space data structure simulates the operation leap on those subtries in $O(\log N)$ time, the result follows easily. This proves the first part of Thm.~\ref{thm:GtGanyt1t2-intro}.

%\begin{theorem} \label{thm:Gt}
%Let $G$ be a temporal graph with $N$ tuples. Then, there is a data
%structure using $O(N)$ space that can solve BGPs $Q$ with $m$
%tuples on the labeled graph $G_t$ for any time instant $t$ given with $Q$, in time
%$O(Q^*_t\, m \log N)$, where $Q^*_t$ is the maximum number of solutions for $Q$ in
%some labeled graph with $|G_t|$ triples.
%\end{theorem}

To obtain an analogous result for $G\Xany{t_1}{t_2}$, we must extend the navigation of $G_t$ using VBTs, described in Section~\ref{sec:navig}, to intervals $[t_1,t_2)$. The key idea is that, if $t_1$ and $t_2$ are represented by $p_1$ and $p_2+1$ at some node $[s_v,e_v]$, then $rank(E_d,s_v+p_1,s_v+p_2)=0$ iff the node $s_v$ did not exist along the whole period between the local offsets $p_1$ and $p_2$, that is, it did not exist in $G\Xany{t_1}{t_2}$.

We start on the time level, where we find the first and last intervals, $[\ts_a,\te_a)$ and $[\ts_b,\te_b)$, that overlap or are contained in $[t_1,t_2)$ (it might be that $a=b$, but the answer is empty if no such intervals exist). Let these intervals correspond to timestamps $p_a$ and $p_b$ in our linear-space data structure $V$ below the time level.
We start at depth $d:=0$, with interval $[s_v,e_v] := [1,L]$ at the root $v$ of $V$, and with local offsets $p_1 := p_{a-1}-s_v$ and $p_2 := p_b-s_v$. If $rank(E_d,s_v+p_1+1,s_v+p_2)=0$, then the VBT node was null all along the time interval $[t_1,t_2)$; that is to say, there are no elements of $G\Xany{t_1}{t_2}$ descending from $v$. Otherwise, we can go left or right.  
We compute $r := rank(B_d,s_v,e_v)$ and $p_{1/2}' := rank(B_d,s_v,s_v+p_{1/2})$. To descend left, we update $e_v := e_v-r$ and $p_{1/2} := p_{1/2}-p_{1/2}'$; to descend right we update $s_v := e_v-r+1$ and $p_{1/2} := p_{1/2}'-1$.

A special case occurs in this process if $p_1$ and $p_2$ become equal. This means that the subtree of $v$ had no update during $[t_1,t_2)$, and therefore it has tuples below it during that period iff $E_d[s_v+p_1] = 1$, assuming $E_d[0]=0$. Algorithm~\ref{alg:wtreeRange} shows how to modify Algorithm~\ref{alg:wtree} to perform $\leap$ on this simulated trie.

Note that if the algorithm arrives at a node $v$, there exists at least one edge below $v$ during $[t_1,t_2)$; therefore we do not spend any time on nodes that do not exist in the LTJ trie of $G\Xany{t_1}{t_2}$. This yields the second part of Thm.~\ref{thm:GtGanyt1t2-intro}.

 \begin{algorithm}[t!]
$\leap(d,s,e,p_1,p_2,x,h)$ \\ \vspace{2mm}
  	\lIf{$s{+}p \not\in [s,e] \lor rank(E_d,s{+}p_1{+}1,s{+}p_2){=}0$} { $\!\!\!\!$ \textbf{return} $+\infty$ }
  	\lIf{$h=0$} { $\!\!\!\!$ \textbf{return} $0$ }
  	$r \gets rank(B_d,s,e)$ ;\\
  	$p_1' \gets rank(B_d,s,s+p_1)$; $p_2' \gets rank(B_d,s,s+p_2)$ ;\\
	\If{$\lfloor x / 2^h\rfloor = 1$}
	   { $x \gets \leap(d+1,e-r+1,e,p_1'-1,p_2'-1,x-2^h,h-1)$ ; \\
	     \lIf{$x \not= +\infty$} { $\!\!\!\!$ \textbf{return} $x+2^h$ }
	     \textbf{return} $+\infty$ ;
	   }
	\Else
	   { $x \gets \leap(d+1,s,e-r,p_1-p_1',p_2-p_2',x,h-1)$ ; \\ 
	     \lIf{$x \not= +\infty$} { $\!\!\!\!$ \textbf{return} $x$ }
	     \lIf{$r = 0 ~\lor~ rank(E_{d+1},e-r+p_1'+1,e-r+p_2') = 0$} { $\!\!\!\!$ \textbf{return} $+\infty$ }
	     \textbf{return} $2^h + \mathsf{leftmost}(d{+}1,e{-}r{+}1,e,p_1'{-}1,p_2'{-}1,h{-}1)$ ;
           }

\vspace{5mm}

\textsf{leftmost($d,s,e,p_1,p_2,h$)} \\ \vspace{2mm}
    \lIf{$h = 0$} { $\!\!\!\!$ \textbf{return} $0$ }
    $r \gets rank(B_d,s,e)$ ;\\
    $p_1' \gets rank(B_d,s,s+p_1)$ ; $p_2' \gets rank(B_d,s,s+p_2)$ ;\\
  	\If{$s \le e-r ~\land~ rank(E_{d+1},s+p_1-p_1'+1,s+p_2-p_2') > 0$} { \textbf{return} $\mathsf{leftmost}(d+1,s,e-r,p_1-p_1',p_2-p_2',h-1)$ ;}
	\textbf{return} $2^h+\mathsf{leftmost}(d+1,e-r+1,e,p_1'-1,p_2'-1,h-1)$ ;

\vspace{5mm}

\caption{The implementation of $\leap$ on our linear-space representation, simulating the graph $G\Xany{t_1}{t_2}$. The function receives
the LTJ trie node that corresponds to the range $[s,e]$ at level $d$ of our structure, the local timestamps of interest $p_1=p_{a-1}-s$ and $p_2=p_b-s$, the minimum
desired value $x$, and the BT height $h=\ell$. We assume that $rank(E_d,i,i-1)$ returns $E_d[i-1]$, assuming $E_d[0]=0$.}
\label{alg:wtreeRange}
\end{algorithm}

%\begin{theorem} \label{thm:Gany}
%Let $G$ be a temporal graph with $N$ tuples. Then, there is a data
%structure using $O(N)$ space that can solve BGPs $Q$ with $m$
%tuples on the labeled graph $G\Xany{t_1}{t_2}$ for any time interval $[t_1,t_2)$ given with $Q$, in time
%$O(Q^*_{[ t_1,t_2\rangle}\, m \log N)$, where $Q^*_{[t_1,t_2\rangle}$ is the maximum number of solutions for $Q$ in
%some labeled graph with $|G\Xany{t_1}{t_2}|$ triples.
%\end{theorem}

We note that the so-called ``join-first'' strategy, which solves $Q$ on the labeled graph $G_\mathrm{all} = G\Xany{1}{T}$ of all the tuples that ever existed and then filters the results by time, can offer the AGM bound only on $|G_\mathrm{all}|$, whereas our result bounded by $|G\Xany{t_1}{t_2}|$ is the best one can hope for: the algorithm is wco on the labeled graph that has exactly the triples we want to consider.

\paragraph{Triple patterns.}
Consider the case of a BGP formed by a single triple pattern $(s,p,o)$. Our algorithm will first descend in the time level, computing $a$ and $b$ with a binary search, and then descend in $V$ by the constants in $(s,p,o)$. For each node arrived at, there exists at least one triple in $G\Xany{t_1}{t_2}$, and the algorithm will descend by all left and right branches in the VBTs, reporting all the triples. The total time is instance-optimal, $O(\log N)$ per reported triple. The existential query, that is, telling whether the triple pattern has a match or not in $G\Xany{t_1}{t_2}$, takes $O(\log N)$ time. Even this simple query is not handled near-optimally with the ``join-first'' strategy: there could be many matches for $(s,p,o)$ out of $[t_1,t_2)$. It is also not handled well with ``time-first'', which individually considers each time $t \in [t_1,t_2)$ and solves the query on $G_t$: the triple pattern may not appear in many time instants $t$.

\paragraph{Listing time intervals.}
It is also possible to list all the maximal intervals where each match $(s,p,o)$ of the BGP existed during $[t_1,t_2)$, each in $O(\log N)$ time. In the final level $\ell$, each position in $E_\ell[s+p_1+1,s+p_2]$ corresponds to an event where the triple $(s,p,o)$ is successively inserted (1) and deleted (0). Those offsets $p_1 < p \le p_2$ can be mapped to the corresponding time instants as we return from the recursive traversal of the VBT. 
Say that from our range in $E_d[s,e]$ we went to the left child, thereby arriving at the range $E_{d+1}[s,e-r]$ with $r := rank(B_d,s,e)$. Then, a position $E_{d+1}[s-1+p]$ corresponds to the position $E_d[s+select_0(B_d,s,p)]$, where $select_b(B_d,s,p)$ is the position of the $p$th occurrence of bit $b \in \{0,1\}$ in $B_d[s..]$. Analogously, if we went to the right child of $E_d[s,e]$, arriving at the range $E_{d+1}[e-r+1,e]$, a position $E_{d+1}[e-r+p]$ corresponds to the position $E_d[s+select_1(B_d,s,p)]$. Query $select$ is solved in constant time using $o(|B_d|)$ bits on top of $B_d$ \cite{Cla96,Mun96}. After $O(\log N)$ steps we reach the root of $V$, where the timestamp $p$ can be binary searched among the $p_l$ values assigned to the intervals $[\ts_l,\te_l)$ to convert it to time instants in $\T$.

\begin{theorem}
Let $G$ be a temporal graph with $N$ tuples. Then, there is a data
structure using $O(N)$ space that can report all the $occ$ occurrences of a single-triple-pattern query $Q=\{(s,p,o)\}$ in the labeled graph $G\Xany{t_1}{t_2}$, for any time interval $[t_1,t_2)$ given with $Q$, in time $O((1+occ)\log N)$.
It can also list each maximal time interval where each occurrence appears in time $O(\log N)$.
\end{theorem}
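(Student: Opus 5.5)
We use the linear-space structure of Theorem~\ref{thm:main}, namely the $O(N)$-word wavelet-matrix-like representation $V$ with bitvectors $B_d$ and $E_d$ below the time level, on an LTJ trie whose order begins with the constants of $(s,p,o)$. The tries are chosen so that the time component comes immediately after the constant levels, and the variable levels (if any) are concatenated into $V$. No extra space is needed beyond the $o(|B_d|)$-bit rank/select directories, so the space stays $O(N)$.

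The query is answered in four steps.

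\emph{Step 1.} Descend by the constants of the pattern in the levels before time, which costs $O(\log N)$ with binary search or BTs.

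\emph{Step 2.} At the time level, binary search the sorted intervals $[\ts_l,\te_l)$ to find the first and last intervals $a,b$ intersecting $[t_1,t_2)$, and translate them to local timestamps $p_1=p_{a-1}-s_v$ and $p_2=p_b-s_v$. This costs $O(\log N)$.

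\emph{Step 3.} Traverse $V$ top-down, maintaining $[s_v,e_v]$ and the offsets $p_1,p_2$ exactly as in Algorithm~\ref{alg:wtreeRange}. At each node we test $rank(E_d,s_v+p_1+1,s_v+p_2)>0$, handling the degenerate case $p_1=p_2$ with a single lookup of $E_d[s_v+p_1]$, and we recurse into every child that passes the test.

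\emph{Step 4.} Each leaf reached is an occurrence, and we report it.

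The central invariant is the following: a node of $V$ is visited only if some tuple below it exists at some instant of $[t_1,t_2)$. This follows from the semantics of $E_d$, since a $1$ in the range means that some update inside the window left a live tuple below the node, and the start-state term $E_d[s_v+p_1]$ accounts for tuples alive at $t_1$ that were not updated. Hence every visited node is an ancestor of at least one reported leaf. The visited nodes are therefore contained in the union of the root-to-leaf paths of the $occ$ reported leaves. Each path has length $b\ell=O(\log N)$ and each step costs $O(1)$ rank operations, so the traversal costs $O(occ\log N)$. Adding the $O(\log N)$ of Steps 1 and 2 gives $O((1+occ)\log N)$. When $occ=0$, the root test fails immediately and the cost is $O(\log N)$.

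For the second claim we list the maximal intervals of each reported match. At the leaf of a match, the positions of $E$ in the local window $(p_1,p_2]$ alternate between insertions ($1$) and deletions ($0$). Each maximal interval of existence within $[t_1,t_2)$ is delimited by two consecutive such events, or clipped by $t_1$ or $t_2$. To convert a leaf offset into a global timestamp, we walk upward: at each level we apply $select_0$ or $select_1$ on $B_d$, chosen according to the direction taken on the way down, at $O(1)$ per level. This gives $O(\log N)$ levels until the root offset is reached. A final binary search of that offset among the $p_l$ values yields the time instant in $\T$. Each interval needs at most two such conversions, so it costs $O(\log N)$.

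The main obstacle is making the invariant precise at the window boundaries, namely the off-by-one between the window start $p_{a-1}$ and the state at $t_1$. The argument must show that $rank$ over $(p_1,p_2]$ combined with the pre-window bit exactly characterizes existence in $G\Xany{t_1}{t_2}$, because the $O(occ\log N)$ charging depends on this.
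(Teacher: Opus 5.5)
Your route is the paper's: binary search at the time level for $a$ and $b$, a range-restricted top-down traversal of $V$ charged to the reported leaves, and a $select$-based walk back to the root plus a binary search among the $p_l$ to list intervals. That you descend the constants before the time level, instead of inside $V$ under a time-first trie as the paper does, is immaterial. The gap is exactly the boundary step you defer, and it is not merely unproved. With the anchor $p_1=p_{a-1}-s_v$, the test of Step~3 (the one of Algorithm~\ref{alg:wtreeRange}) fails in both directions.

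\emph{False negative.} Take a tuple $Z$ inserted before $\ts_a$, alive at $t_1$, and deleted (not reinserted) at a time in $(t_1,t_2)$. At $Z$'s leaf the window holds only that deletion, whose bit is $0$, and $p_1<p_2$. So $Z\in G\Xany{t_1}{t_2}$ is never reported.

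\emph{False positive.} Let $k$ tuples alive just before $\ts_a$ be deleted at $\ts_a\le t_1$, with one other tuple keeping $[\ts_a,\te_a)$ nonempty. These deletions lie in $(p_{a-1},p_a]$. Consider any node of $V$ having at least two of the $k$ tuples below it but not the other one; there are $k-O(\log N)$ such nodes. At each, the first of these deletions leaves a tuple alive, so its bit is $1$ and the node is visited, although nothing below it exists in $[t_1,t_2)$.

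So your reading ``a $1$ in the range means some update left a live tuple'' fails for the batch at $\ts_a$, which happens at or before $t_1$. The traversal then costs $\Omega(k)$ with $occ\le 1$, and $k$ can be $\Theta(N)$. OR-ing in $E_d[s_v+p_1]$, as your invariant paragraph suggests, cures only the first failure. It also produces wrongly reported tuples, since that bit is the state during $[\ts_{a-1},\te_{a-1})$, not at $t_1$.

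The repair is to anchor at the state at $t_1$ itself:
\begin{enumerate}
\item Carry $p_1$ as the local offset of global position $p_a$ (not $p_{a-1}$), updated by the same $rank$ computations on $B_d$.
\item Test inclusively, $rank(E_d,s_v+p_1,s_v+p_2)>0$, reading the anchor bit as $0$ when $p_1=-1$. This is the form stated in the paper's main text, with $p_1$ representing $t_1$.
\end{enumerate}
Now the anchor bit is exactly the state below $v$ at $\max(t_1,\ts_a)$. Every later window position belongs to a batch of updates at some time $\tau$ with $\max(t_1,\ts_a)<\tau<t_2$. A tuple alive below $v$ at any, possibly intermediate, position of such a batch falls in one of two cases. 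Either it was alive just before $\tau$, hence at an instant of $[t_1,\tau)$. Or it was inserted at $\tau$, and since intervals are nonempty it is still alive at $\tau$.

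Conversely, suppose something below $v$ exists at an instant of some interval $l\in[a,b]$. Then the last update below $v$ with global index at most $p_l$ is either at or before the anchor, so the anchor bit is $1$, or inside the window, so its own bit is $1$.

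With this exact characterization every visited node is an ancestor of a reported leaf. The rest of your argument then goes through, including reading interval endpoints from the leaf events after the anchor and clipping the first interval at $t_1$ when the anchor bit is $1$.
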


Listing the time intervals where a triple pattern existed in $[t_1,t_2)$ also permits tracking the differences between $G_{t_1-1}$ and $G_{t_2-1}$, an important operation on versioned graphs. This can go from tracking a single triple $(s,p,o)$ to listing all the changes in the graph.

\section{Queries with Duration (Extended version)} \label{ap:duration-queries}

In some cases we are interested in requiring that the solutions to the queries last for some time. We consider two cases of such queries. A first one is to establish a time interval where the solution must always hold, so that we are interested only in that time interval. In other words, we must run the query on $G\Xall{t_1}{t_2}$; recall Definition~\ref{def:otherG}. A second one does not fix the exact times, but sets a minimum duration $\delta$ along which the reported solutions must hold \cite{HSGAY22}.

\subsection{Querying the Graph $G\Xall{t_1}{t_2}$}

The idea used to query $G\Xany{t_1}{t_2}$ in optimal time can be extended to query $G\Xall{t_1}{t_2}$: $rank(E_d,s_v+p_1,s_v+p_2)=p_2-p_1+1$ iff the node $s_v$ exists throughout the whole period between the local offsets $p_1$ and $p_2$. Therefore, we can use Algorithm~\ref{alg:wtreeRange} with the only change that we require that all the bits in the area of $E_d$ are 1s, not just one of them. 
This strategy does not yield worst-case time guarantees in terms of the size of $G\Xall{t_1}{t_2}$, however: it is possible that a BT node may have existed all along $[t_1,t_2)$, while none of its children have. We can therefore traverse large parts of the VBT just to find out that there are no matches of $Q$ in $G\Xall{t_1}{t_2}$ (which could even be empty). 
On the other hand, the algorithm is still wco with respect to the size of $G_t$ for {\em any} $t \in [t_1,t_2)$: if the subtree of a node $v$ at depth $d$ does not exist in $G_t$, then the corresponding bit $E_d$ will be zero and the algorithm will not attempt to explore it. Instead of satisfying the AGM bound on $|G\Xall{t_1}{t_2}| = |\cap_{t \in [t_1,t_2)} G_t|$, we satisfy it in terms of $\min_{t \in [t_1,t_2)} |G_t|$. This proves the first part of Thm.~\ref{thm:Gallt1t2duration-intro}.

%\begin{theorem} \label{thm:Gall}
%Let $G$ be a temporal graph with $N$ tuples. Then, there is a data
%structure using $O(N)$ space that can solve BGPs $Q$ with $m$
%tuples on the labeled graph $G\Xall{t_1}{t_2}$ for any time interval $[t_1,t_2)$ given with $Q$, in time
%$O(Q^*_{[t_1,t_2)}\, m \log N)$, where $Q^*_{[t_1,t_2)} = \min_{t_1 \le t < t_2} Q^*_t$. % is the %maximum number of solutions for $Q$ in some labeled graph with $\min_{t \in [t_1,t_2]} |G_t|$ triples.
%\end{theorem}

\subsection{Setting a minimum duration $\delta$}

Consider the problem of solving a BGP on the triples $(s,p,o)$ of a temporal graph $G$ so that the BGP exists in $G$ during an interval of $\delta$ time units or more. We solve this query on $G$, adding the same variable $t$ as the fourth component of all the triple patterns, and use the LTJ tries where the time component {\sc t} is at the end, as in join-first approaches. Note that those tries do not use the linear-space representation of Section~\ref{sec:linear}. Instead, we will use a more sophisticated data structure to store the children of each LTJ trie node $v$ in order to obtain relevant time guarantees. 

Let $u_1,\ldots,u_k$ be the children of node $v$, with increasing values $x_1,\ldots,x_k$. In addition to storing an array with the (values of) the nodes $u_1,\ldots,u_k$, we store $k$ points $(i,\delta_i)$ in a geometric grid, where $\delta_i$ is the longest duration of a tuple of $G$ stored below node $u_i$ (here we refer to the original durations of the tuples, before we mapped them to $[0,T)$). When it comes to solve $\leap(v,x)$, we look for the point $(i,\delta_i)$ with minimum value $i$ such that $x_i \ge x$ and $\delta_i \ge \delta$. This is an orthogonal range geometric query in two dimensions, which a linear-space data structure solves in $O(\log N)$ time \cite{CLP11}, thereby not affecting our complexities. Note that, as we represent times in $\T \setminus \T_G$ with their predecessors in $\T_G$, we work with the maximum possible durations and do not lose any solution.

We leave variable $t$ to be bound at the end. When we reach the last level, {\sc t}, in all the triple patterns of the BGP, we must intersect the $m$ time intervals, aiming again at finding time ranges of length at least $\delta$.
In this level we also convert every node $[\ts_i,\tf_i)$ into a point $(i,\delta_i)$ of a two-dimensional grid, with $\delta_i$ the actual duration of the interval $[\ts_i,\tf_i)$. We can then run the intersection algorithm for the time level as described in Section~\ref{sec:inters} (precisely, the version with intervals of Algorithm~\ref{alg:time-leapfrog}), where $\leap(v,t_0)$ is solved as follows. First we find $i$ such that (1) $\ts_i \le t_0 < \te_i$
or (2) $\te_i \le t_0 < \ts_{i+1}$. In case (1), if the duration of $[t_0,\te_i)$ is at least $\delta$,\footnote{How this is defined depends on how we model durations. For example, if $\T = \mathbb{N}$, we may understand that the interval $[2,3)$ has a duration of $1$ (if we count the number of instants it spans) or less than $1$ (if we count its length on the real line).} we return
$t^* := t_0$. In any other case, the interval $[\ts_i,\te_i)$ is not useful and we search the grid for the point $(j,\delta_j)$ with smallest coordinate $j$ such that $j > i$ and $\delta_j \ge \delta$. This yields the closest time interval to the right that contains an edge whose duration is long enough, and is found in $O(\log N)$ time with the geometric data structure.

It is easy to see that our search algorithm works exactly as if we had run the query on $G$, yet completely ignoring the edges shorter than $\delta$. This proves the second part of Thm.~\ref{thm:Gallt1t2duration-intro}.

\section{Implementation} \label{ap:implem}

Although our data structure uses linear space, the constant is high in practice because of the need to store $4! = 24$ tries with all the possible orderings of $\{ s,p,o,t \}$. Since each trie has $4$ levels and all the tuples are mentioned once in every level before the time level and twice since the time level, a (slightly pessimistic) upper bound on the number of LTJ trie nodes stored is $156N$ for a temporal graph of $N$ tuples. Further, our representation for the trie levels requires two words per node (i.e., two events), and versioned tries after the time level require four words per node (i.e., each event induces $\ell$ in each bitvector $B_d$ and $E_d$), which sets the count to $228N$ words. The space for the trie pointers can in general be dismissed  by using compact topology representations \cite{Jac89}. Considering that we need $5N$ words to represent the $N$ tuples $(s,p,o,[\ti,\tf\,])$ in raw form, the blowup factor in the space is $45.6$. 

\begin{figure}
    \centering
    \includegraphics[width=0.6\textwidth]{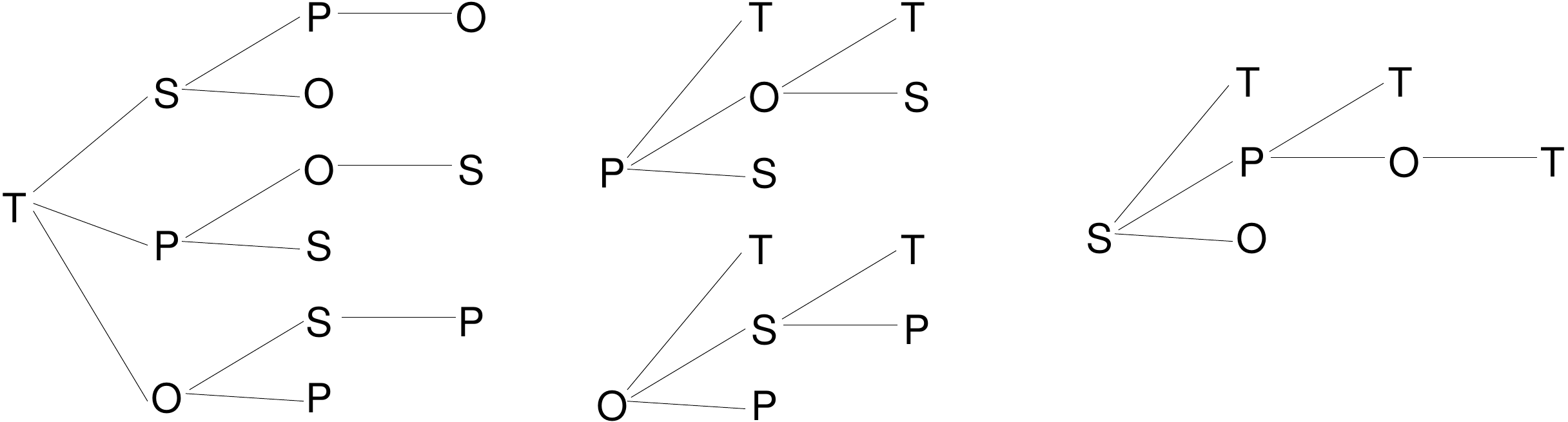}
    \caption{The meta-trie of all the LTJ tries we need to build.}
    \label{fig:metatrie}
\end{figure}

In order to reduce this overhead we make use of {\em trie switching} \cite{AGHNRRStods24} and {\em partial tries}. 
%Trie switching avoids representing some of the $4!$ orders as follows: assume we need to descend with order \textsc{psto}. We might not have that particular trie if we have instead those for \textsc{psot} and \textsc{spto}: once we have descended by $p$ and $s$ in the trie for \textsc{psot}, we can reenter the trie \textsc{spto} from the root, using the current values of $s$ and $p$, in order to continue descending by $t$ and $o$. This can be extended to the idea of partial tries: we can also truncate the trie for \textsc{spot}, to just \textsc{spo}: once we have entered with some $p$, $s$, and $o$, we can return to the trie for \textsc{psot} to reach the final level.
Another space-saving device is the possibility of sharing trie prefixes. For example, the tries for \textsc{spo} and \textsc{spto} can share their first two levels; the nodes of the level \textsc{sp} concatenate two sequence of children: those of the trie that continues with \textsc{o} and those of the trie that continues with \textsc{to}. This can also be emulated with the succinct topology representations that do not use pointers \cite{Jac89}, because we can simulate that they have a first child continuing with \textsc{o} and a second one with \textsc{to}. Our linear-space representation of Section~\ref{sec:linear} can be similarly adapted.
Those fake pointers add up to $O(N)$ bits of space. %\textcolor{red}{Diego puede querer agregar algo más preciso acá.}

With these mechanisms, we can define a {\em meta-trie} that stores all the paths that are stored in the index; each path corresponds to a (possibly partial) LTJ trie. The shared paths in the meta-trie correspond to shared paths in the LTJ tries. Therefore, the number of nodes in the meta-trie correspond to the factor multiplying $N$ in our storage space, if multiplied by the appropriate factor: 1 for levels before time, 2 for the time level, 4 for the levels after time. Figure ~\ref{fig:metatrie} contains a meta-trie that suffices to run LTJ on any instantiation order, and which uses just $76N$, $15.2$ times the space needed to store the raw data (our actual space in the experiments is 22\% less because the higher trie levels have fewer than $N$ elements).

To further reduce space, we use a recent compact trie representation \cite{cltj} for the levels up to the time component; subsequent ones are implemented as described in Section~\ref{sec:linear}.

\section{Benchmarks}
\subsection{Generation of \new{temporal intervals and} tBGPs for Wikidata} \label{app:bgps}

\new{We convert the information of some qualifiers into temporal annotations for the corresponding triples. We distinguish 65 qualifiers of this kind. The ones leading to most timestamps are P580 (``start time'', 10,645,813 annotations), P582 (``end time'', 5,430,256 annotations), P585 (``point in time'', 3,023,126 annotations), and P577 (``publication date'', 1,408,894 annotations); the others produce from a few tens of thousand to as few as 2 annotations. Some qualifiers, like P580 and P582, are paired and used to produce a time interval (open to one side in case only one of the two qualifiers occurs; typically having only P580 implies that the triple is valid up to the present time); the others, like P585 and P577, are used to produce an interval of only one time instant. All the granularities are uniformized to day-level. All the triples without temporal annotations are assumed to be valid for the whole universe of time instants; we say those temporal intervals are ``trivial''.}

To generate tBGPs, we filter disconnected and duplicate BGPs (modulo graph isomorphism). In order to ensure that the BGPs we use for experiments are likely to touch the non-trivial temporal annotations, we define a score to prioritize BGPs with more triple patterns whose predicates have a higher ratio of triples with non-trivial temporal annotations. More specifically, for each predicate $p_i$, in case it is a constant, we define $r_i$ as the ratio of triples with predicate $p_i$ that have non-trivial temporal annotations in the graph; otherwise, if $p_i$ is a variable, we define $r_i$ as 0. For each BGP extracted from the log of the form $Q = \{(s_1,p_1,o_1), \ldots, (s_m,p_m,o_m)\}$, we compute the score $\sum_{i = 1}^m r_i$. We then extract the top 1,000 BGPs from the query logs per this score. Further filtering BGPs with constants not appearing in the graph, we arrive at 972 queries.  
Finally, we extend each triple pattern with a shared temporal variable $v \in \Vt$ to generate $Q' = \{(s_1,p_1,o_1,v), \ldots, (s_m,p_m,o_m,v)\}$. 

\new{\subsection{Other datasets}
\label{app:datasets}}

We downloaded Divvy, Yellow and Caida datasets from Zhu et al.~\cite{ZhuFY21}. We have left the triple sets in our repository for reproducibility.

Divvy records about 21 million trips of shared bikes in Chicago between 2013 and 2019, with start and end time: the source and target locations become the subject and object, and the type of bike is taken as the predicate, leading to 6,525 different predicates (and 633 nodes). We downloaded it from https://divvybikes.com/system-data.

Yellow records about 33 million taxi trips in New York City from 2022, also with start and end time: the source and target locations become again the subject and object, but this time the predicate is chosen to be the distance travelled. This leads to 8,289 predicates (and 263 nodes). We downloaded it from {\em TLC Trip Record Data}, https://www.nyc.gov/site/tlc/about/tlc-trip-record-data.page (the ``Yellow Taxi Trip Records'' of 2022, in PARQUET format).

Caida records about 16 million relations between autonomous systems on the Internet from 1998 to 2026. We downloaded it from {\em The CAIDA AS Relationships Dataset}, https://www.caida.org/catalog/datasets/as-relationships/, subdirectories serial-1/ and serial-2/. Edges have no labels (i.e., there is only one label) and span only one time instant; we join successive times of an edge into maximal ranges. The graph has 110,703 nodes.

\new{\section{Dynamism}
\label{ap:dynamism}}

Our data structure supports efficiently updating the graph with newer events, that is, insertion and removal of edges with progressively larger timestamps. 
The core data structure that must be updated is the linear-space one of Section~\ref{sec:linear}. Adding a new event at time $L+1$ (e.g., at time $7$ in Figure~\ref{fig:wtree}) for some $v_i$ implies appending a bit to the corresponding $b\ell$ bitvectors $v.B[1,L_v]$ and $v.E[1,L_v]$.

Let $v$ be the root node. We first increment $L_v$. Now, if $v_i$ is stored in the left child of $v$ (i.e., the first of the $b\ell$ bits of $v_i$ is a $0$), we set $v.B[L_v] \gets 0$ and continue recursively at the left child; otherwise we set $v.B[L_v] \gets 1$ and continue recursively at the right child. We continue descending, according to the following bits of $v_i$, until reaching the leaves, where nothing is done. At the return of the recursion we set the values $v.E[L_v]$. As explained, a leaf is empty iff it holds an even number of events, whereas an internal node $u$ is empty iff $u.E[L_v] = 0$. We then set $v.E[L_v] \gets 0$ if both children of $v$ are empty or nonexistent, otherwise we set $v.E[L_v] \gets 1$. We must also update the data structures to compute $rank$ (and $select$ if needed, see Appendix~\ref{ap:point-in-queries}), so as to account for the bits appended to $v.B$ and $v.E$. Those structures \cite{Cla96,Mun96} are arrays that summarize information on successive blocks of the bitvectors, so when the bitvectors grow, only a constant amount of work is needed to update the information on the last block or to append a new block. Overall, adding a new event in the linear-space data structure takes time $O(b\ell) = O(\log N)$.

In the dynamic structure, we cannot concatenate all the bitvectors of each level $d$ into a single one, $B_d$ and $E_d$, because appending bits to node bitvectors $v.B$ or $v.E$ would require inserting bits in the middle of some $B_d$ or $E_d$, and that cannot be done in constant time. Instead, we maintain separate bitvectors $v.B$ and $v.E$ for each node $v$. We had avoided this to prevent spending $O(L\log N)$ space for node data, as there can be up to $L \cdot b\ell$ nodes overall. Another bound to the number of nodes, however, is $N$, because there are at most $N$ distinct leaves and thus $O(N)$ internal nodes; the added space is then $O(N)$ (i.e., $O(N\log N)$ bits), which is also linear in the graph size. This works because, as shown in Figure~\ref{fig:metatrie}, we will store only one copy of the linear-space data structure (the leftmost trie in the figure, with this structure at the root). The bitvectors $v.B$ and $v.E$, and their additional $rank$ and $select$ data structures, can be stored as ``extendible arrays'' \cite{BCDMS99}, which support accessing and extending them in constant worst-case time with an additional space overhead of only $O(\sqrt{L\log N} + N\log N)$ bits.

Given the current time range $[0,T)$, a new triple $(s,p,o)$ that is inserted or deleted at time $T$ (which expands the time range to $[0,T+1)$) requires inserting $spo$, $so$, $pos$, $ps$, $osp$, and $op$ in the leftmost trie of Figure~\ref{fig:metatrie}. Each such string represents the $b\ell$-bit descriptions we insert in the linear-space data structure as discussed so far. 

We must also insert the triple in the other tries of Figure~\ref{fig:metatrie}: $pt$, $pot$, $pos$, $ps$, $ot$, $ost$, $osp$, $op$, $st$, $spt$, $spot$, and $so$. Note that some of those paths do not have a time component (e.g., $pos$) and thus correspond to normal tries indicating which strings exist at some time instant. Those tries can be implemented as classic structures, turning the array of children of each node (that we used to binary search for the desired child) into balanced trees to allow insertions and searches in $O(\log N)$ time. We never delete strings (e.g., $pos$) in those tries, as all of them existed at some point; we only add new ones upon insertions. 

The remaining tries have the time component at their last level (e.g., $pot$). In those last levels we store a sequence of time intervals $[ts_i,te_i)$, as described in Section~\ref{sec:timelevel}. In the dynamic case, we must allow writing $te_i=+\infty$ in the last range so that it extends automatically to the current last time when $T$ increases. For example, in the trie $pot$, the time level stores the intervals in which each string $po$ existed in the graph. We must then include $T$ in this last level. When inserting $(s,p,o)$ at time $T$, if the last range is of the form $[ts_i,te_i)$ with $te_i \neq +\infty$, we add a new range $[T,+\infty)$. When deleting $(s,p,o)$ at time $T$, then the last range $[ts_i,+\infty)$ is converted to $[ts_i,T)$.

Overall, we can retain the time complexities of the structure described in the paper, as well as its linear space, and can add new events at increasing timestamps in time $O(\log N)$.

\end{document}